\newtheorem{theorem}{Theorem}
\newtheorem{lemma}{Lemma}
\newtheorem{claim}{Claim}
\newtheorem{corollary}{Corollary}
\newtheorem{definition}{Definition}
\newcommand{\g}{G}
\newcommand{\sg}[1]{G}
\newcommand{\p}{\mathsf{Peb}}
\newcommand{\ti}{\mathsf{Time}}
\newcommand{\s}{\mathcal{P}}
\newcommand{\hf}{\mathbb{H}}
\newcommand{\shf}[1]{\mathbb{H}_{#1}}
\newcommand{\sh}[1]{H_{n, #1}}
\newcommand{\shn}[2]{H_{#2, #1}}
\newcommand{\ai}{\{a_i^1, \dots, a_i^K\}}
\newcommand{\bi}{\{b_i^1, \dots, b_i^K\}}
\newcommand{\ci}{\{c_i^1, \dots, c_i^K\}}
\newcommand{\di}{\{d_i^1, \dots, d_i^K\}}
\newcommand{\ei}{\{e_i^1, \dots, e_i^K\}}
\newcommand{\fii}{\{f_i^1, \dots, f_i^K\}}
\newcommand{\gi}{\{g_i^1, \dots, g_i^K\}}
\newcommand{\xj}{\{x^1_j, \dots, x^K_j\}}
\newcommand{\tbw}{T_{bw}}
\newcommand{\nxj}{\{\overline{x^1_j}, \dots, \overline{x^K_j}\}}
\newcommand{\nxjp}{\{\overline{x^{1\prime}_j}, \dots, \overline{x^{K\prime}_j}\}}
\newcommand{\xjp}{\{x^{1\prime}_{j}, \dots, x^{K\prime}_{j}\}}
\newcommand{\yii}[1]{\{y_{#1}^1, \dots, y_{#1}^K\}}
\newcommand{\x}[1]{\{x^1_{#1}, \dots, x^K_{#1}\}}
\newcommand{\nx}[1]{\{\overline{x^1_{#1}}, \dots, \overline{x^K_{#1}}\}}
\newcommand{\nxp}[1]{\{\overline{x^{1\prime}_{#1}}, \dots, \overline{x^{K\prime}_{#1}}\}}
\newcommand{\xp}[1]{\{x^{1\prime}_{#1}, \dots, x^{K\prime}_{#1}\}}
\newcommand{\an}[1]{\{a_{#1}^1, \dots, a_{#1}^K\}}
\newcommand{\bn}[1]{\{b_{#1}^1, \dots, b_{#1}^K\}}
\newcommand{\cn}[1]{\{c_{#1}^1, \dots, c_{#1}^K\}}
\newcommand{\gn}[1]{\{g_{#1}^1, \dots, g_{#1}^K\}}
\newcommand{\q}[1]{\{q^1_{#1}, \dots, q^K_{#1}\}}
\newcommand{\qp}[1]{\{q^{1'}_{#1}, \dots, q^{K'}_{#1}\}}
\newcommand{\pii}[1]{\{p^1_{#1}, \dots, p^K_{#1}\}}
\DeclarePairedDelimiter{\ceil}{\lceil}{\rceil}
\DeclarePairedDelimiter\floor{\lfloor}{\rfloor}
\let\epsilon=\varepsilon
\newcommand{\pred}[1]{\textsf{pred}(#1)}
 \gdef\xxxmark{%
   \expandafter\ifx\csname @mpargs\endcsname\relax 
     \expandafter\ifx\csname @captype\endcsname\relax 
       \marginpar{xxx}
     \else
       xxx 
     \fi
   \else
     xxx 
   \fi}
 \gdef\xxx{\@ifnextchar[\xxx@lab\xxx@nolab}
 \long\gdef\xxx@lab[#1]#2{{\bf [\xxxmark #2 ---{\sc #1}]}}
 \long\gdef\xxx@nolab#1{{\bf [\xxxmark #1]}}
\newcommand{\removelatexerror}{\let\@latex@error\@gobble}
\newcounter{section-preserve}
\newcounter{theorem-preserve}
\newcommand{\blank}[1]{}
\newtoks\magicAppendix
\newtoks\magictoks
\newif\iflater
\long\def\later#1{\magictoks={#1}%
  \edef\magictodo{\noexpand\magicAppendix={\the\magicAppendix \par
    \the\magictoks%
  }}
  \magictodo}
\long\def\both#1{\magictoks={#1}%
  \edef\magictodo{\noexpand\magicAppendix={\the\magicAppendix \par
    \noexpand\setcounter{theorem-preserve}{\noexpand\arabic{theorem}}%
    \noexpand\setcounter{theorem}{\arabic{theorem}}%
    \noexpand\setcounter{section-preserve}{\noexpand\arabic{section}}%
    \noexpand\setcounter{section}{\arabic{section}}%
    \noexpand\let\noexpand\oldsection=\noexpand\thesection
    \noexpand\def\noexpand\thesection{\thesection}
    \noexpand\let\noexpand\oldlabel=\noexpand\label
    \noexpand\let\noexpand\label=\noexpand\blank
    \the\magictoks%
    \noexpand\setcounter{theorem}{\noexpand\arabic{theorem-preserve}}%
    \noexpand\setcounter{section}{\noexpand\arabic{section-preserve}}%
    \noexpand\let\noexpand\thesection=\noexpand\oldsection
    \noexpand\let\noexpand\label=\noexpand\oldlabel
  }}
  \magictodo
  \the\magictoks}
\def\magicappendix{\latertrue \the\magicAppendix}
\begin{document}

\def \isnotin {\nsubseteq}

\def \eps {\varepsilon}

\bibliographystyle{alpha}

\title{\Large \textbf{Inapproximability of the Standard Pebble Game and Hard to Pebble Graphs}\footnote{A preliminary version of this paper was presented at the \emph{Algorithms and Data Structures Symposium (WADS 2017)}}}
\author[1]{Erik D. Demaine}
\author[1]{Quanquan C. Liu\footnote{This material is based
upon work supported by the National Science Foundation Graduate Research
Fellowship under Grant No. (1122374).}}
\affil[1]{Computer Science and Artificial Intelligence Lab, Massachusetts Institute of Technology,
		  Cambridge, MA, USA \authorcr
          \url{{edemaine,quanquan}@mit.edu}}
\date{}
\maketitle

\begin{abstract}
Pebble games are single-player games on DAGs involving placing and moving pebbles on nodes of the graph according to a certain set of rules. 
The goal is to pebble a set of target nodes using a minimum number of pebbles. 
In this paper, we present a possibly simpler proof of the result in~\cite{CLNV15} and strengthen the result to show that it is PSPACE-hard to determine the minimum number of pebbles to an additive $n^{1/3-\epsilon}$ term for all $\epsilon > 0$, which improves upon the currently known additive \emph{constant} hardness of approximation~\cite{CLNV15} in the standard pebble game.
We also introduce a family of explicit, constant indegree graphs with $n$ nodes where there exists a graph in the family such that using constant $k$ pebbles requires $\Omega(n^k)$ moves to pebble in both the standard and black-white pebble games. This independently answers an open question summarized in~\cite{Nor15} of whether a family of DAGs exists that meets the upper bound of $O(n^k)$ moves using constant $k$ pebbles with a different construction than that presented in~\cite{ADNV17}.
\end{abstract}

\section{Introduction}
\label{sec:intro}
Pebble games were originally introduced to study compiler operations and programming languages. For such applications, a DAG represents the computational dependency of each operation on a set of previous operations and pebbles represent register allocation. Minimizing the amount of resources allocated to perform a computation is accomplished by minimizing the number of pebbles placed on the graph~\cite{Sethi75}. The \emph{standard pebble game} (also known as the \emph{black pebble game}) is traditionally used to model such behavior. In the standard pebble game, one is given a DAG, $G = (V, E)$, with $n$ nodes and constant indegree and told to perform a set of \emph{pebbling moves} that places, removes, or slides pebbles around the nodes of $G$. 

The premise of such games is given some input modeled by \emph{source} nodes $S \subseteq V$ one should compute some set of outputs modeled as target nodes $T \subseteq V$. In terms of $G$, $S$ is typically the set of nodes without incoming edges and $T$ is typically the set of nodes without outgoing edges. The rules of the standard pebble game are as follows:

\noindent\fbox{%
    \parbox{\textwidth}{%
    \textsc{Standard Pebble Game}
    
    \textbf{Input:} Given a DAG, $G = (V, E)$. Let $\pred{v} = \{u \in V: (u, v) \in E \}$. Let $S \subseteq V$ be the set of sources of $G$ and $T \subseteq V$ be the set of targets of $G$. Let $\s = \{P_0, \dots, P_{\tau}\}$ be a valid pebbling strategy that obeys the following rules where $P_i$ is a set of nodes containing pebbles at timestep $i$ and $P_0 = \emptyset$ and $P_{\tau} = \{T\}$. Let $\p(G, \s) = \max_{i \in [\tau]}\{|P_i|\}$.\\
    
    \textbf{Rules:}
	\begin{enumerate}
		\item At most one pebble can be placed or removed from a node at a time.
		\item A pebble can be placed on any source, $s \in S$.
		\item A pebble can be removed from any vertex.
		\item A pebble can be placed on a non-source vertex, $v$, at time $i$ if and only if its direct predecessors are pebbled, $\pred{v} \in P_{i-1}$.
		\item A pebble can slide from vertex $v$ to vertex $w$ at time $i$ if and only if $(v, w) \in E$ and $\pred{w} \in P_{i - 1}$.
	\end{enumerate}
	
	\textbf{Goal:} Determine $\min_{\s}\{\p(\g, \s)\}$ using a valid strategy $\s$.
	}%
}%
\\\\

In addition to the standard pebble game, other pebble games are useful for studying computation. The \emph{red-blue pebble game} is used to study I/O complexity~\cite{HongKu81}, the \emph{reversible pebble game} is used to model reversible computation~\cite{Bennett89}, and the \emph{black-white pebble game} is used to model non-deterministic straight-line programs~\cite{CookSethi74}. Although we will be proving a result about the black-white pebble game in~\Cref{sec:hard-to-pebble}, we will defer introducing the rules of the game to the later parts of the paper since the black-white pebble game is not central to the main results of this paper. 

Much previous research has focused on proving lower and upper bounds on the \emph{pebbling space cost} (i.e. the maximum number of pebbles used at any point in time) of pebbling a given DAG under the rules of each of these games. For all of the aforementioned pebble games (except the red-blue pebble game since it relies on a different set of parameters), any DAG can be pebbled using $O(n/\log{n})$ pebbles~\cite{GT78,Hopcroft77,PTC76}. Furthermore, there exist DAGs for each of the games that require $\Omega(n/\log{n})$ pebbles~\cite{GT78,Hopcroft77,PTC76}.

It turns out that finding a strategy to optimally pebble a graph in the standard pebble game is computationally difficult even when each vertex is allowed to be pebbled only once. Specifically, finding the minimum number of black pebbles needed to pebble a DAG in the standard pebble game is PSPACE-complete~\cite{GilbertLenTar79} and finding the minimum number of black pebbles needed in the one-shot case is NP-complete~\cite{Sethi75}. In addition, finding the minimum number of pebbles in both the black-white and reversible pebble games have been recently shown to be both PSPACE-complete~\cite{CLNV15,HP10}. But the result for the black-white pebble game is proven for unbounded indegree~\cite{HP10}. A key open question in the field is whether hardness results can be obtained for constant indegree graphs for the black-white pebble game. However, whether it is possible to find good approximate solutions to the minimization problem has barely been studied. In fact, it was not known until this paper whether it is hard to find the minimum number of pebbles within even a non-constant \emph{additive} term~\cite{CLNV15}. The best known multiplicative approximation factor is the very loose $\Theta(n/\log{n})$ which is the pebbling space upper bound~\cite{Hopcroft77}, leaving much room for improvement.

Our results deal primarily with the standard pebble game, but we believe that the techniques could be extended to show hardness of approximation for other pebble games. We prove the following:

\begin{theorem}\label{thm:approx-standard}
The minimum number of pebbles needed in the standard pebble game on DAGs with maximum indegree $2$ is PSPACE-hard to approximate to within an additive $n^{1/3-\epsilon}$ for any $\epsilon > 0$.
\end{theorem}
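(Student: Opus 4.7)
The plan is to amplify the constant-additive-gap PSPACE-hardness of~\cite{CLNV15} to an $n^{1/3-\epsilon}$-additive-gap hardness via a composition construction. First, I would revisit, and possibly simplify, the reduction of~\cite{CLNV15} to obtain, from any TQBF instance of size $t$, an indegree-$2$ DAG $G_0$ on $m = \mathrm{poly}(t)$ vertices whose pebbling number is promised to be either $p$ or $p + 1$, with the decision PSPACE-hard. This is the base case of the hardness; all further work is a gap-boosting step.

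Second, I would design a gap-amplifying gadget parameterized by $K$. The intended construction is a vertex/edge replacement turning $G_0$ into a DAG $G_0^{(K)}$ on $n = \Theta(m K^2)$ vertices, with $\mathsf{Peb}(G_0^{(K)}) = K \cdot \mathsf{Peb}(G_0) \pm o(K)$ and preserved indegree $2$. A natural candidate is to replicate $G_0$ into $K$ parallel copies $\{v^1,\dots,v^K\}$ with parallel edges $(u^i,v^i)$ (which preserves indegree $2$), together with connective/combining gadgets that force the $K$ copies of every target to hold pebbles simultaneously; intermediate synchronization gadgets ensure that a virtual pebble on $v$ in $G_0$ can only be simulated when a $(1-o(1))$-fraction of $\{v^1,\ldots,v^K\}$ are pebbled at once. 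The easy (upper-bound) direction is then simulation: any strategy for $G_0$ lifts to one for $G_0^{(K)}$ using $K \cdot \mathsf{Peb}(G_0) + o(K)$ pebbles by running the $K$ copies in lockstep.

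Third, I would select parameters. With $n = \Theta(m K^2)$ and additive gap $\Theta(K)$, choosing $K$ to be a suitable polynomial in $m$ and padding the instance with auxiliary vertices of controlled pebble number yields gap $n^{1/3-\epsilon}$ for any $\epsilon > 0$; since $K = \mathrm{poly}(m)$ the reduction remains polynomial-time, so PSPACE-hardness transfers.

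The main obstacle is the \emph{rigidity lemma} providing the lower bound $\mathsf{Peb}(G_0^{(K)}) \geq K \cdot \mathsf{Peb}(G_0) - o(K)$. We must rule out any strategy on $G_0^{(K)}$ that somehow reuses pebbles across copies so as to spend fewer than $K$ physical pebbles per virtual pebble of $G_0$. The canonical approach is a \emph{projection} argument: define a map $f$ from pebble configurations of $G_0^{(K)}$ to configurations of $G_0$ (for example, declare $v \in f(P)$ iff a threshold fraction of $\{v^1,\dots,v^K\}$ lie in $P$), then prove that any valid move sequence on the blown-up graph projects to a valid move sequence on $G_0$ whose peak pebble count is at most $|P|/K + o(1)$. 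The projection must be compatible with the indegree-$2$ wiring and robust to partial pebbling of a gadget, and the combining gadget at the targets must be designed so that the projection cannot ``skip'' a virtual pebble. Getting the gadget wiring, the threshold in $f$, and the $o(K)$ slack to align under the indegree-$2$ constraint is the core technical challenge.
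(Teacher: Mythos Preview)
Your plan is a genuine proof \emph{proposal} rather than a proof: the entire weight of the argument rests on the ``rigidity lemma'' $\mathsf{Peb}(G_0^{(K)}) \geq K\cdot\mathsf{Peb}(G_0) - o(K)$, and you explicitly leave this unproven. That is not a small gap. Black-box multiplicative amplification of pebbling number via parallel copies plus synchronizers is exactly the kind of thing that tends to fail for pebble games: nothing in the standard pebble game forces the $K$ copies to be pebbled ``in lockstep,'' and a clever strategy can recycle pebbles between copies so that the peak is far below $K\cdot\mathsf{Peb}(G_0)$. Your projection idea (threshold-based $f$) is the natural candidate, but getting a threshold map to commute with pebble moves---so that one physical step on $G_0^{(K)}$ projects to at most one legal step on $G_0$---is delicate, and the ``combining gadgets'' you allude to are precisely the part that has to be invented. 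As written there is no construction and no lower-bound argument, so the proposal does not establish the theorem.

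The paper takes a different route that sidesteps this obstacle entirely. Rather than amplifying a finished hard instance, it reopens the Gilbert--Lengauer--Tarjan QBF reduction and thickens each gadget in place: variable paths become width-$K$ road graphs, clause pyramids become height-$\Theta(K)$ pyramids, and the quantifier blocks are scaled accordingly. Because road graphs and pyramids have tight, well-understood pebbling bounds, the original GLT case analysis lifts almost verbatim, yielding directly that a satisfiable QBF gives $\mathsf{Peb}(G)\le 3Ku+4K+1$ while an unsatisfiable one forces $\mathsf{Peb}(G)\ge 3Ku+5K$. The gap is $K-1$ on a graph of size $O(K^3(u^3+c))$, hence the $n^{1/3-\epsilon}$ bound. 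The advantage of the paper's approach is that the lower bound is inherited from known subgraph lemmas (pyramids, road graphs) rather than from a new and unproven projection lemma; the price is that one must redo the full GLT inductive argument rather than invoke \cite{CLNV15} as a black box.
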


In addition to determining the pebbling space cost, we sometimes also care about pebbling time which refers to the number of operations (placements, removals, or slides) that a strategy uses. For example, such a situation arises if we care not only about the memory used in computation but also the time of computation. 
It is previously known that there exists a family of graphs such that, given $\Theta(\frac{n}{\log{n}})$ pebbles, one is required to use $\Omega(2^{\Theta(\frac{n}{\log{n}})})$ moves to pebble any graphs with $n$ nodes in the family~\cite{LT79}. 

Less is known about the trade-offs when a small number (e.g. constant $k$) of pebbles is used until the very recent, independent result presented in~\cite{ADNV17}. It can be easily shown through a combinatorial argument that the maximum number of moves necessary using $k = O(1)$ pebbles to pebble $n$ nodes is $O(n^k)$~\cite{Nor15}. It is an open question whether it is possible to prove a time-space trade-off such that using $k = O(1)$ pebbles requires $\Omega(n^k)$ time. In this paper, we resolve this open question for both the standard pebble and the black-white pebble games using an independent construction from that presented in~\cite{ADNV17}. 

\begin{theorem}\label{thm:almost-maximum-time}
There exists a family of graphs with $n$ vertices and maximum indegree $2$ such that $\Omega(n^k)$ moves are necessary to pebble any graph with $n$ vertices in the family using constant $k$ pebbles in both the standard and black-white pebble games.
\end{theorem}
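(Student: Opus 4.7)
The plan is to exhibit an explicit layered DAG family $\{\g_n\}$ with maximum indegree $2$ such that pebbling $\g_n$ with $k$ pebbles requires $\Omega(n^k)$ moves, then prove this by induction on the number of layers, matching the trivial configuration-counting upper bound of $O(n^k)$ that follows because an optimal strategy visits each of the $\binom{n}{k}$ pebble configurations at most once, as noted in~\cite{Nor15}.

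For the construction I would set $m = \lfloor n/k \rfloor$ and define $\g_n$ on vertices $\{v_{i,j} : 0 \le i \le k-1,\ 1 \le j \le m\}$ with horizontal edges $v_{i,j-1} \to v_{i,j}$ within each layer and cross edges $v_{i-1,\sigma(j)} \to v_{i,j}$ from the previous layer under a fixed ``spreading'' permutation $\sigma$ (for example, bit reversal) chosen so that $\sigma(j)$ and $\sigma(j+1)$ are far apart in $\{1,\ldots,m\}$. Every non-source has at most two predecessors, the total vertex count is $n$, and the designated target is $v_{k-1,m}$. The matching $O(n^k)$ strategy is the obvious one: hold one pebble on each of $v_{0,j},\ldots,v_{k-1,j}$ for the current column and advance one column at a time, rebuilding lower layers as necessary.

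For the lower bound I would prove by induction on $k$ the stronger claim that pebbling $v_{k-1,c}$ in the sub-DAG on columns $1,\ldots,c$ requires $\Omega(c^k)$ moves with $k$ pebbles; the case $k=1$ is pebbling a path of length $c$. For the inductive step, for each $j \in \{1,\ldots,c\}$ consider the \emph{final} placement of $v_{k-1,j}$ in the strategy. At that moment $v_{k-1,j-1}$ and $v_{k-2,\sigma(j)}$ must both carry pebbles, committing two of the $k$ pebbles. Within the interval between the final placements of $v_{k-1,j-1}$ and $v_{k-1,j}$, the strategy must run a sub-strategy on the sub-DAG formed by layers $0,\ldots,k-2$ that produces a pebble on $v_{k-2,\sigma(j)}$ while using at most $k-1$ pebbles effectively (one pebble is dedicated to holding $v_{k-1,j-1}$). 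By the inductive hypothesis applied to the restricted sub-DAG on columns $1,\ldots,\sigma(j)$, this sub-strategy takes $\Omega(\sigma(j)^{k-1})$ moves. Summing over $j$ yields $\sum_{j=1}^{c}\Omega(\sigma(j)^{k-1}) = \Omega(c^{k})$ because $\sigma$ is a permutation, giving $\Omega(n^k)$ when $c = m$ and $k$ is constant.

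The main technical obstacle is ruling out the natural optimization in which pebbles retained from the previous sub-computation accelerate the next one; the spreading property of $\sigma$ is designed precisely so that, via a cut-based adversary argument, no pebble placed to help compute $v_{k-2,\sigma(j-1)}$ lies on a useful path to $v_{k-2,\sigma(j)}$, so each sub-computation inherits the inductive bound essentially from scratch. To extend to the black-white pebble game, I would apply the same skeleton: a white pebble may be placed freely, but it must eventually be removed only when all of its predecessors are black-pebbled; hence every white pebble placed on $v_{k-2,\sigma(j)}$ to shortcut the current sub-computation forces a later verification that itself is a $(k-1)$-pebble sub-pebbling satisfying the same inductive lower bound, so the $\Omega(n^k)$ total is preserved.
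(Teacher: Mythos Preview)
Your proposal has a genuine gap at precisely the point you flag as ``the main technical obstacle.'' The inductive hypothesis you state---that pebbling $v_{k-1,c}$ in the sub-DAG on columns $1,\ldots,c$ takes $\Omega(c^k)$ moves---is not well-posed, because with a nontrivial permutation $\sigma$ the predecessor $v_{i-1,\sigma(j')}$ of $v_{i,j'}$ may lie in a column $\sigma(j')>c$; the restricted sub-DAG is not closed under predecessors, so it is not an instance of your graph family and the induction does not apply. Even if you instead keep all $m$ columns of layers $0,\ldots,k-2$, the sub-strategy between the final placements of $v_{k-1,j-1}$ and $v_{k-1,j}$ does not start from the empty configuration: up to $k-1$ pebbles may already sit in the lower layers, and nothing in your argument rules out that one of them is on (or very close to) $v_{k-2,\sigma(j)}$. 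A single ``spreading'' permutation like bit reversal does not by itself force a from-scratch recomputation at every step---for some $j$ the leftover pebble from the previous round genuinely helps---and the ``cut-based adversary argument'' you allude to is exactly the missing lemma, not a known tool you can invoke. Without it the sum $\sum_j \Omega(\sigma(j)^{k-1})$ is unjustified.

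The paper sidesteps this difficulty by a different mechanism. Instead of a permutation, it first builds a high-indegree family in which the nodes of level $l$ \emph{alternate} between two types: one type has $l$ predecessors consisting of $l-1$ shared sources plus the Hamiltonian predecessor, and the other type has two predecessors, the Hamiltonian predecessor and the endpoint of level $l-1$. At level $k$ the first type consumes all $k$ pebbles on fixed vertices, so advancing to the next node of the second type forces you to discard those source pebbles and repebble the level-$(k-1)$ endpoint with only $k-1$ pebbles; this yields a clean recurrence $T(l)=\Theta(n/k)\cdot T(l-1)$ with no leftover-state issue, because the alternating structure explicitly resets the configuration. The indegree-$2$ version then replaces the indegree-$l$ nodes by short paths whose vertices point into a single shared pyramid (or, for black-white, a complete binary tree of the appropriate height), and invokes the normal/regular pebbling lemma to argue that pebbling the pyramid still absorbs the required number of pebbles. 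If you want your layered-grid idea to go through, you would need either an analogous reset gadget between consecutive columns of the top layer, or a substantially stronger inductive statement that bounds the cost from \emph{every} starting configuration with $k-1$ pebbles.
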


The organization of the paper is as follows. 
First, in Section~\ref{sec:definitions}, we provide the definitions and terminology we use in the remaining parts of the paper.
Then, in Section~\ref{sec:inapprox-standard}, we provide a proof for the inapproximability of the standard pebble game to an $n^{1/3 - \epsilon}$ additive factor. 

In Section~\ref{sec:hard-to-pebble}, we present our hard to pebble graph families using $k < \sqrt{n}$ pebbles and prove that the family takes $\Omega(n^k)$ moves to pebble in both the standard and black-white pebble games when $k = O(1)$.

Finally, in Section~\ref{sec:open-problems}, we discuss some open problems resulting from this paper.

\section{Definitions and Terminology}
\label{sec:definitions}
In this section, we define the terminology we use throughout the rest of the paper. All of the pebble games we consider in this paper are played on directed acyclic graphs (DAGs), and our results are given in terms of DAGs with maximum indegree $2$. We define such a DAG as $G = (V, E)$ where $|V| = n$ and $|E| = m$. 

The purpose of any pebble game is to pebble a set of targets $T \subseteq V$ using minimum number of pebbles. In all pebble games we consider, a player can always place a pebble on any source node, $S \subseteq V$. Usually, $S$ consists of all nodes with indegree $0$ and $T$ consists of all nodes with outdegree $0$.

A \emph{sequential pebbling strategy}, $\s = [P_0, \dots, P_{\tau}]$ is a series of configurations of pebbles on $G$ where each $P_i$ is a set of pebbled vertices $P_i \subseteq V$. $P_{i}$ follows from $P_{i-1}$ by the rules of the game and $P_0 = \emptyset$ and $P_{\tau} = T$. Then, by definition, $|P_i|$ is the number of pebbles used in configuration $P_i$. For a \emph{sequential} strategy, $|P_{i-1}| - 1 \leq |P_{i}| \leq |P_{i-1}| + 1$ for all $i \in [\tau] = [1, \dots, \tau]$ (i.e. at most one pebble can be placed, removed, or slid on the graph at any time). In this paper, we only consider sequential strategies.

Given any strategy $\s$ for pebbling $G$, the \emph{pebbling space cost}, $\p(G, \s)$, of $\s$ is defined as the maximum number of pebbles used by the strategy at any time:
$\p(G, \s) = \max_{i \in [\tau]}\{|P_i|\}$.

The \emph{minimum pebbling space cost} of $\sg{2}$, $\p(G)$, is  defined as the smallest space cost over the set of all valid strategies, $\mathbb{P}$, for $G$:

\begin{definition}[Minimum Pebbling Space Cost]\label{def:min-space}
\begin{align*}
\p(\sg{2}) = \min_{\s \in \mathbb{P}}\{\p(\sg{2}, \s)\}.
\end{align*}
\end{definition}

The \emph{pebbling time cost}, $\ti(\sg{2}, \s) = |\s| - 1$, of a strategy $\s$ using $s$ pebbles is the number of moves used by the strategy. The \emph{minimum pebbling time cost} of any strategy that has pebbling space cost $s$ is the minimum number of moves used by any such strategy. We know that the pebbling time cost is at least $N$ since at least $n$ moves are necessary to place a pebble on every node in an $n$ node DAG. 

\begin{definition}[Minimum Pebbling Time Cost]\label{def:min-time}
\begin{align*}
\ti(\sg{2}, s) = \min_{\s' \in \{\s \in \mathbb{P}: |\s| \leq s\}}\{\ti(\sg{2}, \s')\} \geq n.
\end{align*}
\end{definition} 

\section{Inapproximability of the Standard Pebble Game}
\label{sec:inapprox-standard}
In this section, we provide an alternative proof of the result presented in~\cite{CLNV15} that the standard pebble game is inapproximable to any constant additive factor. Then, we show that our proof technique can be used to show our main result stated in Theorem~\ref{thm:approx-standard}.  

We first introduce the PSPACE-completeness proof presented by~\cite{GilbertLenTar79} because we modify this proof to prove our main result.

\subsection{Standard Pebbling is PSPACE-complete~\cite{GilbertLenTar79}}\label{sec:previous-reduction}

The reduction is performed from the PSPACE-complete problem, quantified boolean formula (QBF). In an instance of QBF, we are given a quantified boolean formula of the form: $B = Q_1 x_1 \cdots Q_u x_u F$ where $Q_i$ is either an existential or universal quantifier, each $x_i$ is a Boolean variable, and $F$ is an unquantified boolean formula containing variable $x_i$ in CNF form with $3$ variables per clause. The decision problem is to determine whether $B$ is satisfiable for some assignment of truth values to the existential variables and for all truth assignments to the universal variables.

The reduction is done by constructing a graph $\sg{2} = (V, E)$ with one target node, $q_1$. $\sg{2}$ can be pebbled with $s$ pebbles if and only if $B$ is satisfiable. Rather than constructing gadgets to represent each variable in $F$,~\cite{GilbertLenTar79} constructs a gadget to represent each literal. For each $x_i \in F$, they create a gadget that can be set in one of three possible settings: $(x_i = True, \overline{x_i} = False)$, $(x_i = False, \overline{x_i} = True)$, or $(x_i = False, \overline{x_i} = False)$. Let $F(e_1, e_2, \dots, e_i, e_{i+1}, \dots, e_{2v -1}, e_{2v})$ represent the formula $F$ obtained by substituting $e_i$ for $x_i$ and $e_{i + 1}$ for $\overline{x_i}$ for all $i \in [v]$. Note that if $F(e_1, e_2, \dots, False, False, \dots, e_{2v-1}, e_{2v})$ is true, then, trivially, $F(e_1, e_2, \dots, True, False, \dots, e_{2v-1}, e_{2v})$ and $F(e_1, e_2, \dots, False, True, \dots, e_{2v-1}, e_{2v})$ are also both true.

The proof of their reduction from QBF relies on the following key gadgets.

\textbf{Variable Gadget:} A variable gadget is created for each variable $x_i$, $i \in [v]$ in $F$ with two paths, one path representing each literal. Each variable gadget can be in one of the following three configurations shown in Fig.~\ref{fig:glt-variable}. 

\begin{figure}[h]
\centering
\def\svgwidth{0.7\columnwidth}
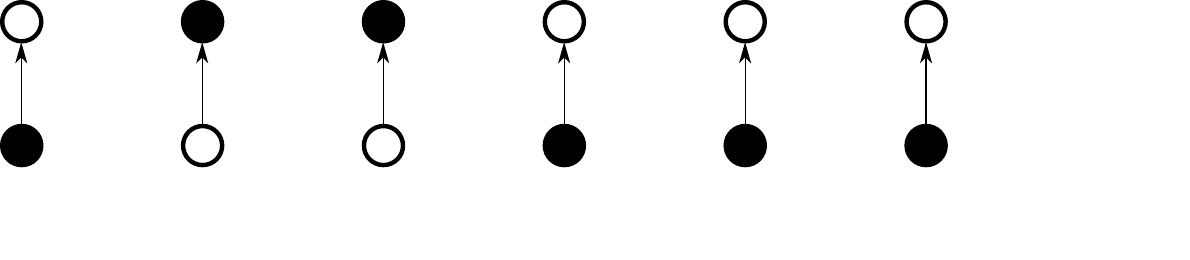
\caption{Variable gadget as used in~\cite{GilbertLenTar79}. Each variable can be in one of the $3$ configurations shown. a) True configuration. b) False configuration. c) Double false configuration. Figure recreated from~\cite{GilbertLenTar79}.}\label{fig:glt-variable}
\end{figure}

\textbf{Universal Quantifier Block:} Each universal quantifier and its associated variable is constructed in $\sg{2}$ as a universal quantifier block. There is only one way to pebble the universal quantifier block. The sequence of pebbling moves used to pebble the universal quantifier block is shown in Fig.~\ref{fig:glt-universal-block}.

\begin{figure}[h]
\centering
\def\svgwidth{0.9\columnwidth}
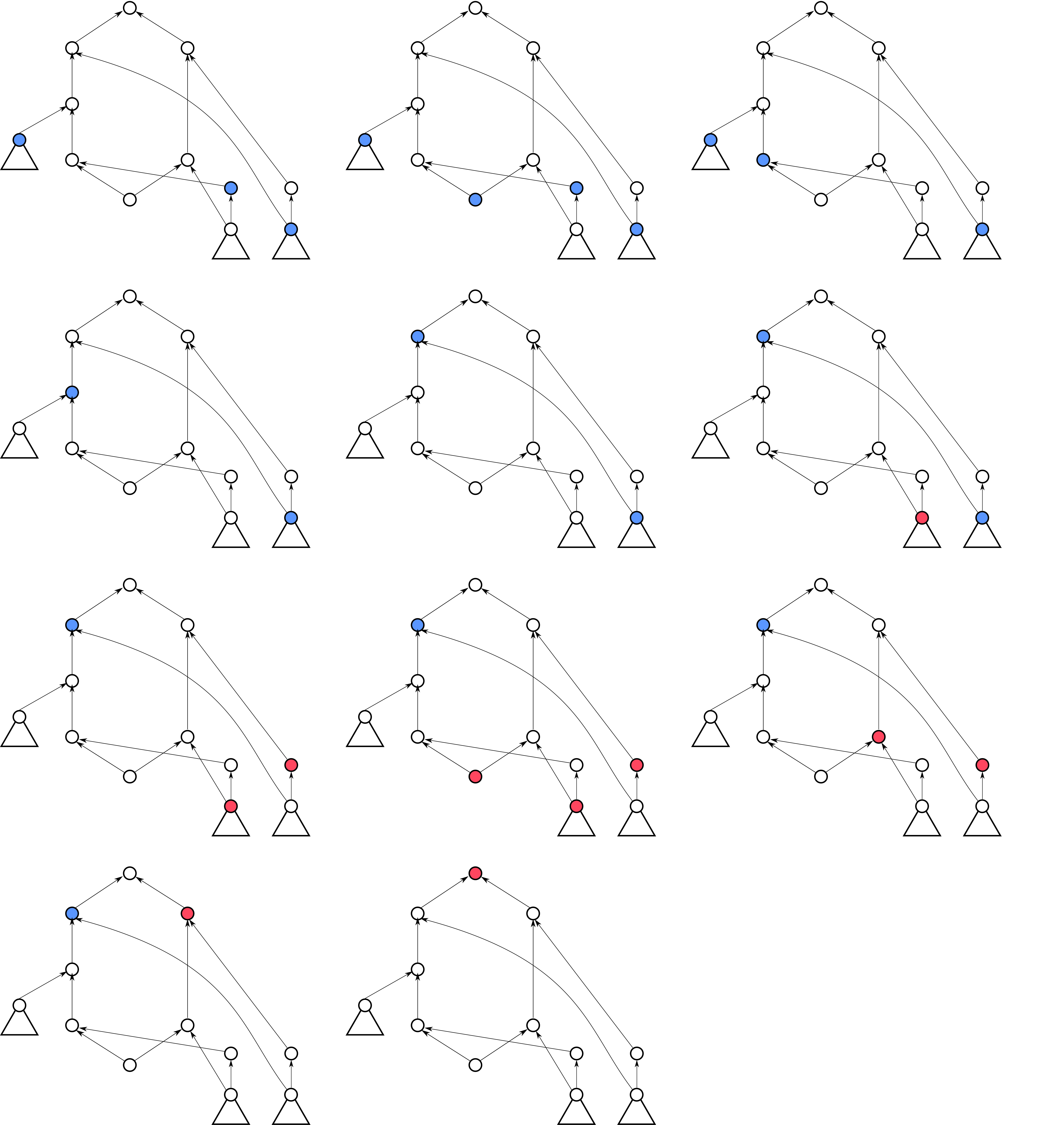
\caption{Universal quantifier block and its associated variable. There is only one way to pebble each universal quantifier block. The sequence of pebbling moves used to pebble this gadget is shown above. The change in color of the pebbles signify when all the clauses are repebbled when the universal quantifier is reset. Figure recreated from~\cite{GilbertLenTar79}.}\label{fig:glt-universal-block}
\end{figure}

\textbf{Existential Quantifier Block:} Each existential quantifier and its associated variable is constructed in $\sg{2}$ as an existential quantifier block. There are two ways to pebble the existential quantifier block depending on whether the associated variable is set to true or false. The sequence of pebbling moves used to pebble the existential quantifier block is shown in \Cref{fig:glt-existential-block}.

\begin{figure}[h]
\centering
\def\svgwidth{0.7\columnwidth}
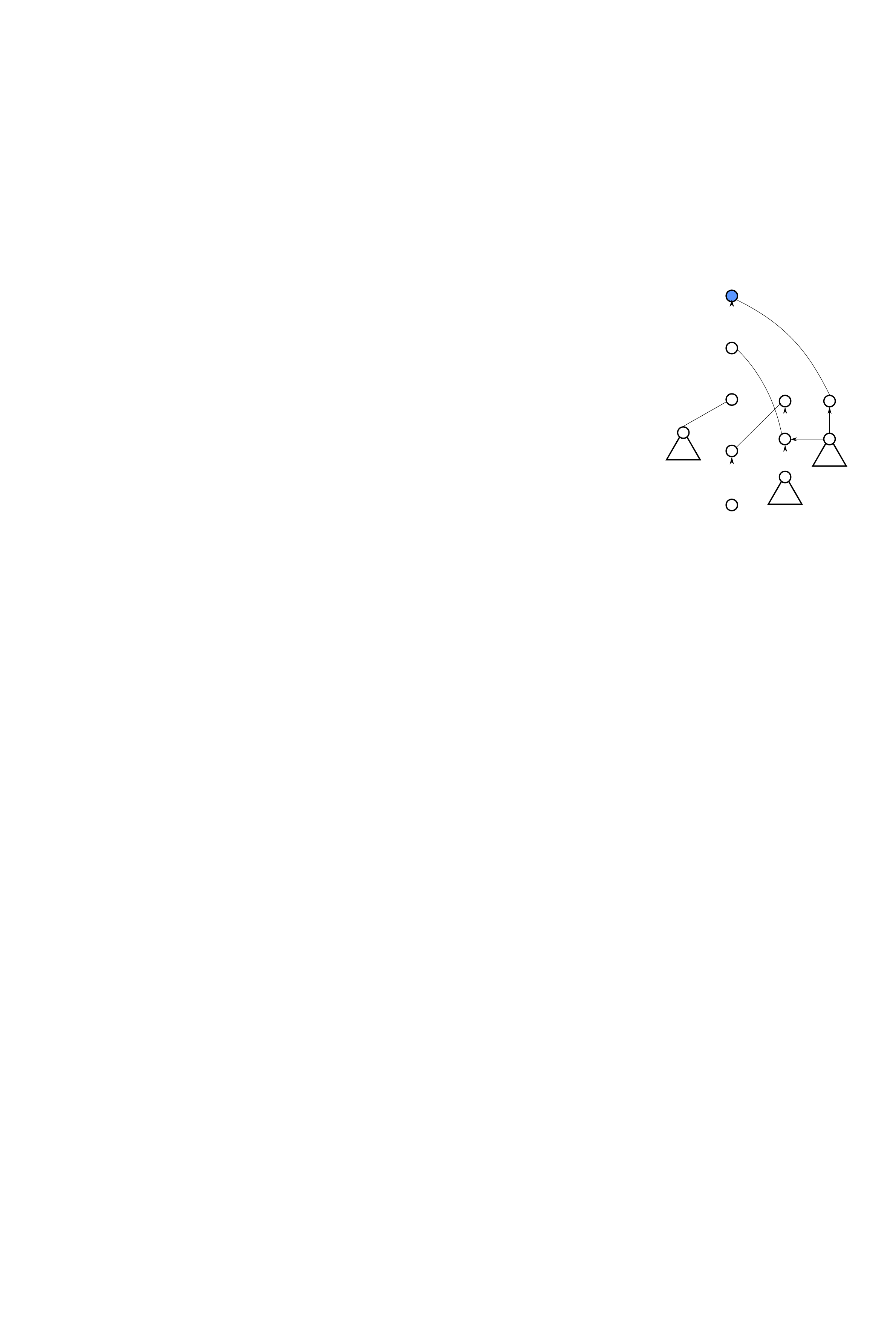
\caption{Existential quantifier block and its associated variable. There are two sequences of pebbling moves that result in $q_{i}$ being pebbled. Blue Pebbles: The sequence of pebbling moves that pebbles $q_{i}$ when the variable is set to True. Red Pebbles: The sequence of pebbling moves that pebbles $q_{i}$ when the variable is set to False. Figure recreated from~\cite{GilbertLenTar79}.}\label{fig:glt-existential-block}
\end{figure}

\textbf{Clause Gadget:} A clause gadget is created in $\sg{2}$ for each clause in $F$. The clause gadget consists of a pyramid that is connected to each literal in the clause. It was shown in ~\cite{GilbertLenTar79} that a pyramid needs as many pebbles as its height to pebble the apex. The clause gadget is shown in~\Cref{fig:glt-clause}.

\begin{figure}[h]
\centering
\def\svgwidth{0.3\columnwidth}
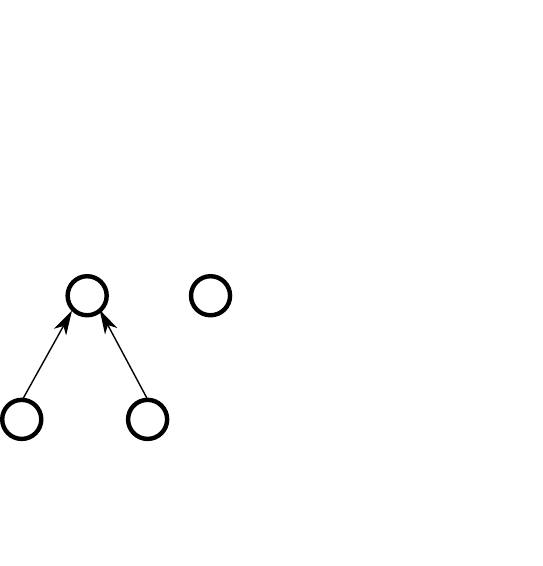
\caption{Clause gadget. The clause gadget is pebbled by first pebbling the base of the pyramid using at most $3$ extra pebbles. Figure recreated from~\cite{GilbertLenTar79}.}\label{fig:glt-clause}
\end{figure}

The entire construction using the gadgets described above can be seen in \Cref{fig:glt-entire}. 

\begin{figure}[h]
\centering
\def\svgwidth{\columnwidth}
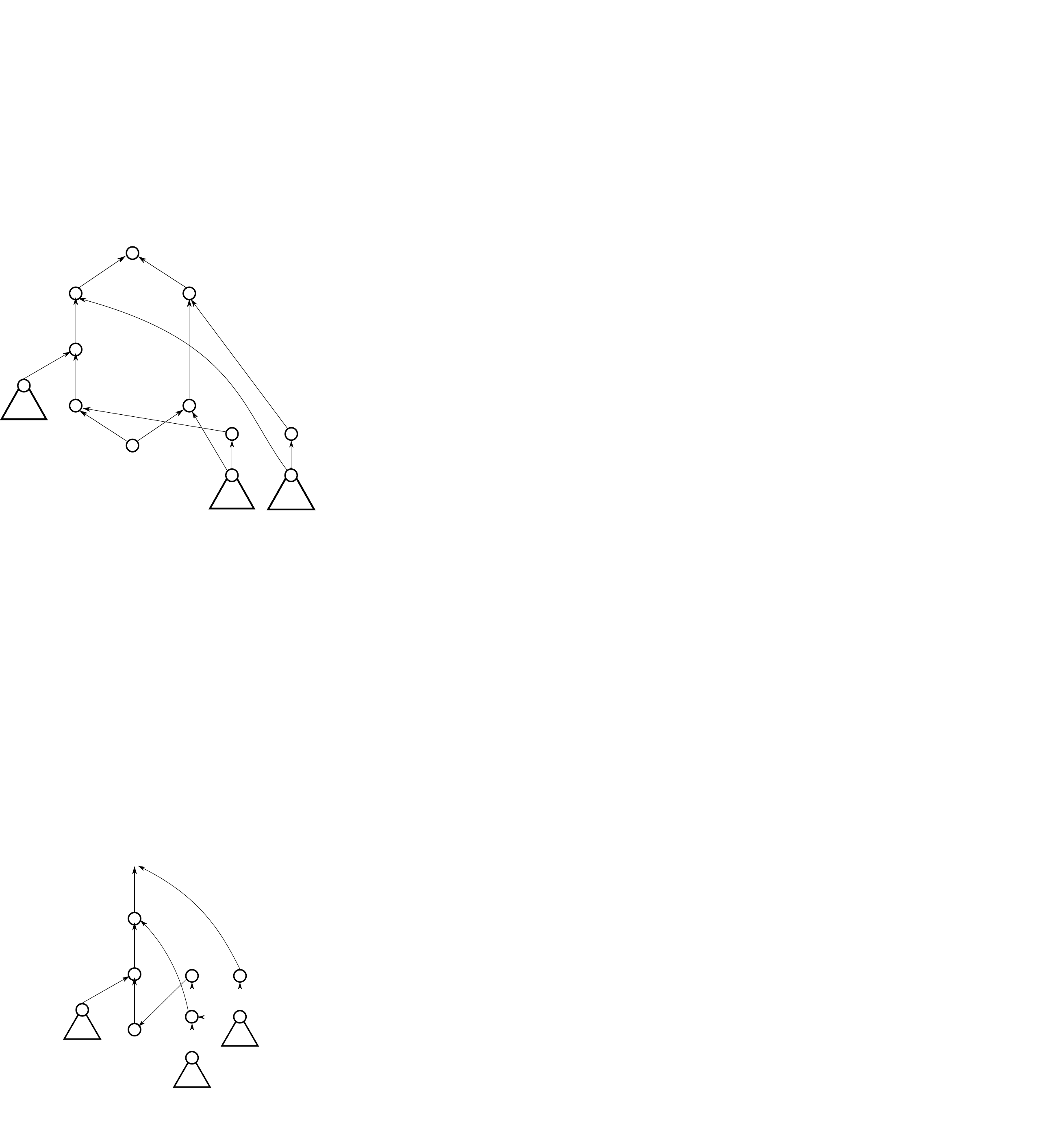
\caption{Entire construction of $\sg{2}$ for $B = \exists x_1 \forall x_2 \exists x_3 \forall x_4 \exists x_5 (\overline{x_1} \vee \overline{x_2} \vee x_3) \wedge (x_2 \vee \overline{x_3} \vee \overline{x_4}) \wedge (x_1 \vee x_4 \vee x_5) \wedge (x_3 \vee \overline{x_4} \vee \overline{x_5})$. This DAG requires $18$ pebbles to pebble, $\p(\sg{2}) = 18$, given $T = \{q_1\}$ if and only if $B$ is satisfiable. Figure recreated from~\cite{GilbertLenTar79}.}\label{fig:glt-entire}
\end{figure}

The key theorem that~\cite{GilbertLenTar79} proves that shows the PSPACE-completeness of the standard pebble game is~\Cref{thm:glt-hard}.

\begin{theorem}[Standard Pebble Game is PSPACE-hard~\cite{GilbertLenTar79}]\label{thm:glt-hard}
The quantified Boolean formula $B = Q_1x_1 \cdots Q_ux_u F$ is true if and only if vertex $q_1$ in graph $\sg{2}$ (constructed as in e.g.~\Cref{fig:glt-entire}) is pebbled with $\p(\sg{2}) = 3m + 3$ pebbles.
\end{theorem}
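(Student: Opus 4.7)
The proof will be a biconditional argument mirroring the semantics of QBF with the pebbling moves on $\sg{2}$. The budget $3m + 3$ breaks into two pieces: the $m$-pebble stack needed to carry the clause-pyramid ``spine'' $p_0, p_1, \dots, p_m$, and the additional $3$ pebbles needed locally to pebble the height-$3$ pyramid sitting under each $p_j$. I will first record a self-contained statement (essentially the pyramid lemma of Cook) that pebbling the apex of a height-$h$ pyramid needs $h$ pebbles, and that if the apex must be held while pebbling something else, one needs $h+1$. This will justify the ``$+3$'' in the budget and will be the quantitative engine driving every case analysis.

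For the ``$B$ true $\Rightarrow$ $\p(\sg{2}) \le 3m+3$'' direction, I would recursively walk down the satisfying strategy tree for $B$. At an existential quantifier block $\exists x_i$, choose the literal-path sequence (blue or red in \Cref{fig:glt-existential-block}) corresponding to the truth value prescribed by the strategy, and recurse with $F[x_i := \cdot]$. At a universal block $\forall x_i$, use the fixed sequence of \Cref{fig:glt-universal-block} which re-pebbles the downstream formula twice, once under each truth assignment, reusing the pebble on $q_{i+1}$ to carry the ``inductive'' pebble count. Once all quantifiers are handled, every clause has at least one true literal, so each clause pyramid can be pebbled by first placing a pebble on the true literal path (which is already available in a standing configuration) and then pyramid-pebbling $3$ more; at the same time the growing spine $p_0, \dots, p_j$ holds exactly $j+1$ pebbles. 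Careful bookkeeping shows the peak is $m + 3 + (\text{a constant at most } 3$ for the literal paths and previous-$q$ pebble$)$, and I will massage the gadget reuse so the total never exceeds $3m + 3$.

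For the ``$\p(\sg{2}) \le 3m+3 \Rightarrow B$ true'' direction, I would argue that any strategy meeting the budget must, in effect, simulate a winning QBF strategy. The key observations are: (i) by the pyramid lemma, pebbling $q_1$ forces the spine pebbles $p_0, \dots, p_{m-1}$ to be on the graph simultaneously just before the final pyramid moves, leaving only $3$ pebbles of slack; (ii) with only $3$ slack pebbles, the literal gadget for each $x_i$ must sit in one of the three configurations of \Cref{fig:glt-variable} (no fourth configuration fits the budget), and the universal-block topology forces the pebbling to present \emph{both} $(x_i = \text{True})$ and $(x_i = \text{False})$ to the rest of the formula before $q_i$ can be pebbled; (iii) at existential blocks the strategy is free to choose one of two configurations. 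Walking top-down through the quantifier blocks and reading off the configurations chosen at each step yields a winning strategy for $B$; and since every clause pyramid must actually be pebbled, each clause must receive a pebble from at least one of its three literal paths, which gives a satisfying assignment for the matrix $F$ at the leaves.

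\textbf{Main obstacle.} The hard direction is the reverse implication, and the delicate point is ruling out ``cheating'' strategies: ones that deviate from the intended normal-form sequence by, say, advancing a pebble on a later quantifier block before finishing the current one, or by carrying extra pebbles on the clause spine in unusual orders. The argument needs a careful invariant stating what the at-most-$3$ slack pebbles can be used for at each point in the execution, combined with the pyramid lemma applied both to the clause pyramids and to the spine pyramid $p_0, \dots, p_m$. Once this invariant is set up, the recursive extraction of a QBF winning strategy from a budget-respecting pebbling is mechanical.
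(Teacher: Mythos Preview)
Your decomposition of the budget $3m+3$ is incorrect, and this error propagates through both directions of your argument. In the Gilbert--Lengauer--Tarjan construction, $m$ denotes the number of quantified variables (i.e., $m=u$), not the number of clauses, and the $3m$ pebbles are \emph{not} parked on the clause spine $p_0,\dots,p_c$. Rather, each of the $m$ quantifier blocks absorbs exactly $3$ pebbles (one on the variable-setting path, one on the complementary literal node, and one on the control node such as $d_i$) before any clause is touched; the remaining $3$ pebbles are the ones that walk along the clause spine, pebbling one clause pyramid at a time. At any moment only \emph{one} pebble sits on the $p_j$-spine (it slides from $p_{j-1}$ to $p_j$), so your claim that ``the spine pebbles $p_0,\dots,p_{m-1}$ must be on the graph simultaneously'' is false and cannot serve as the basis for the reverse-direction invariant.

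Consequently, your proposed invariant for the hard direction---that the pyramid lemma forces all of $p_0,\dots,p_{m-1}$ to be simultaneously pebbled, leaving $3$ slack---does not hold and cannot be used to extract a QBF strategy. The actual argument (as sketched in the paper's ``Key Proof Ideas'' and carried out in full in \cite{GilbertLenTar79}) shows instead that the pyramids attached to the quantifier blocks have carefully chosen heights $s_i, s_i-1, s_i-2$ with $s_1=3m+3$, forcing any normal strategy to leave exactly $3$ pebbles on each quantifier block before proceeding; the placement of those $3$ pebbles encodes the truth assignment, and the universal block's topology forces both assignments to be tried. You have the right high-level shape (recursion on quantifier depth, pyramid lemma as the workhorse, universal blocks force two passes), but you need to redo the pebble accounting from scratch with the correct allocation: $3$ per quantifier block, $3$ free for clauses.
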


\textbf{Key Proof Ideas from~\cite{GilbertLenTar79}:} We describe the key proof ideas used in~\cite{GilbertLenTar79} to prove the standard pebble game PSPACE-complete that we would need to modify in order for our gap reduction described in~\Cref{sec:gap-reduction} to produce the desired inapproximability result. Using $3m + 3$ pebbles, the constructed graph is pebbled using the following series of steps. The parts that need to be modified in our proof of inapproximability are emphasized below:

\begin{enumerate}
\item Given $u$ quantifiers and their associated variables, each of the variables and pyramids in the quantifier blocks are pebbled using \textbf{\textit{$s_i$, $s_i - 1$, and $s_i - 2$ pebbles where $s_1 = 3m + 3$}}. Therefore, before the clauses are pebbled, \textbf{\textit{each of the quantifier blocks contains $3$ pebbles}}.
\item The clauses are pebbled \textbf{\textit{using the additional $3$ pebbles}} in order of construction in the graph. 
\item After all clauses have been pebbled, the remaining portions of the quantifier blocks are pebbled. All pebbles are removed from a quantifier block once that quantifier block has been pebbled. However, most quantifier blocks are pebbled more than once (some potentially $2^v$ times, see below).
\begin{itemize}
\item If the quantifier, $Q_i$, is an universal quantifier, then it can only be \textbf{\textit{pebbled in one way using $s_{i-2}$ additional pebbles}} depicted in~\Cref{fig:glt-universal-block}. The change in pebble color in~\Cref{fig:glt-universal-block} indicates when all the clauses and all quantifier blocks below it are repebbled in order to obtain a pebble on $q_{i + 1}$. This inherently means that each clause is checked for satisfiability for each setting of a universal gadget (i.e. setting $x_i$ to both False and True).
\item If the quantifier, $Q_i$, is an existential quantifier, \textbf{\textit{it can be pebbled in one of two ways using $2$ or $s_{i-2}$ additional pebbles}} depicted in~\Cref{fig:glt-existential-block} depending on whether $x_i$ is set to True or False.
\end{itemize}
\item Once the last quantifier gadget is pebbled, so is $q_1$. 
\end{enumerate}

To see more specific details of the original proof of the PSPACE-completeness of standard pebbling (including the proof of \Cref{thm:glt-hard}), please refer to~\cite{GilbertLenTar79}.

\subsection{Inapproximability to $n^{1/3-\epsilon}$ additive factor for any $\epsilon > 0$}\label{sec:gap-reduction}
We now prove our main result. For our reduction we modify all of the aforementioned gadgets in~\Cref{sec:previous-reduction}--variable gadgets, clause gadgets, and quantifier blocks.

\subsubsection{Important Subgraphs}

Before we dive into the details of our construction, we first mention two subgraphs and the properties they exhibit. 

The first graph is the \emph{pyramid graph} (shown in \Cref{fig:glt-clause} with height $4$) $\Pi_h$ with height $h$, which requires a number of pebbles that is equal to $h$ to pebble~\cite{GilbertLenTar79}. Therefore, in order to pebble the apex of such a graph, at least $h$ pebbles must be available. As in~\cite{GilbertLenTar79}, we depict such pyramid graphs by a triangle with a number indicating the height (hence number of pebbles) needed to pebble the pyramid (see~\Cref{fig:glt-entire} for the triangle symbolism).

We make use of the following definition and lemma (restated and adapted) from~\cite{GilbertLenTar79} in our proofs:

\begin{definition}[Frugal Strategy~\cite{GilbertLenTar79}]\label{def:frugal}
A pebbling strategy, $\s$, is frugal if the following are true:
\begin{enumerate}
\item Suppose vertex $v \in \sg{2}$ is pebbled for the first time at time $t'$. Then, for all times, $t > t'$, some path from $v$ to $q_1$ (the only target node) contains a pebble.
\item At all times after $v$ is pebbled for the last time, all paths from $v$ to $q_1$ contain a pebble. 
\item The number of pebble placements on any vertex $v \in \sg{2}$ where $v \neq q_1$, is bounded by the number of pebble placements on the successors of $v$. 
\end{enumerate}
\end{definition}

\begin{lemma}[Normal Pebbling Strategy (adapted from~\cite{GilbertLenTar79})]\label{lem:glt-pyramid}
If the target vertex is not inside a pyramid $\Pi_h$ and each of the vertices in the bottom level of the pyramid has at most one predecessor each, then any pebbling strategy can be transformed into a \emph{normal} pebbling strategy without increasing the number of pebbles used. A normal pebbling strategy is one that is frugal and after the first pebble is placed on any pyramid $\Pi_h$ no placements of pebbles occurs outside $\Pi_h$ until the apex of $\Pi_h$ is pebbled and all other pebbles are removed from $\Pi_h$. Furthermore, no other placement of pebbles occur on $\Pi_h$ until after the pebble on the apex of $\Pi_h$ is removed.
\end{lemma}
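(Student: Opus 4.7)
The plan is to prove the lemma in two stages: first transform an arbitrary pebbling strategy into a frugal one without increasing the peak pebble count, and then perform a sequence of local swaps to make the pebbling of each pyramid $\Pi_h$ contiguous.

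For the first stage I would achieve frugality by a standard backward-marking argument. Starting from the final placement on $q_1$, mark a placement on $v$ as \emph{useful} if it serves as a prerequisite for some later useful placement on a successor of $v$. Any placement that is never marked useful can be deleted together with its matching removal: such a placement is by definition not consumed by any subsequent move, so the rest of the schedule remains valid, and the instantaneous pebble count at every time step is weakly smaller than before. Repeating until no further deletions are possible yields condition (3) of Definition~\ref{def:frugal}; conditions (1) and (2) then follow automatically, because in the pruned strategy every placement on $v$ participates in a chain of prerequisites ending at $q_1$, so at least one path from $v$ to $q_1$ must stay pebbled for as long as $v$ is relevant, and after the last placement on $v$ every such path is simultaneously loaded.

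For the second stage I would enforce the pyramid property by a swap argument. Fix a pyramid $\Pi_h$ whose apex is not the target, and call a maximal time interval during which only placements interior to $\Pi_h$ occur a \emph{pyramid epoch}. Suppose some external placement $e$ is interleaved between two placements on $\Pi_h$. By hypothesis each bottom-level vertex of $\Pi_h$ has at most one predecessor, which necessarily lies outside $\Pi_h$; combined with the pyramid structure this implies that the only edges from the rest of $\g$ into $\Pi_h$ land in the bottom level, and the only edges leaving $\Pi_h$ leave from the apex. Therefore the prerequisites of $e$ are determined entirely by pebbles on vertices outside $\Pi_h$, and those pebbles were already present just before the disrupting pyramid epoch began. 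I would reschedule $e$ to occur immediately before the pyramid epoch; since during the rescheduling the pyramid slots are empty rather than partially occupied, the peak pebble count cannot increase. Iterating for every interruption and every pyramid yields a strategy in which each pyramid is pebbled contiguously. Frugality then forces all non-apex pebbles on $\Pi_h$ to be removed as soon as the apex has been pebbled, and forbids further placements on $\Pi_h$ until the apex pebble itself is removed.

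The step I expect to be the main obstacle is verifying that the swap in the second stage is valid when external placements share prerequisites with bottom-level pyramid vertices: the unique-predecessor hypothesis on the bottom level is precisely what lets one certify that rescheduling $e$ earlier cannot invalidate any subsequent pyramid placement or inflate the instantaneous pebble count, so the bottom-level indegree restriction in the lemma's hypothesis is doing real work, and any relaxation would demand a more careful accounting of the shared dependency structure across the pyramid boundary.
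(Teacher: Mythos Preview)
The paper does not actually prove this lemma: it is stated as ``adapted from~[GLT79]'' and no proof appears in the text. The closest thing the paper offers is the proof of the analogous Lemma~\ref{lem:regular} for road graphs, which follows the original Gilbert--Lengauer--Tarjan template rather than your swap argument. In that template one does not push external moves earlier; instead one (i) isolates a maximal interval $[t_0,t_2]$ during which the pyramid is never pebble-free, (ii) locates a time $t_3\in[t_0,t_2]$ at which the number $\mu$ of pebbles on $\Pi_h$ (and its ancestors) is maximal, (iii) deletes every pyramid move in $[t_0,t_2]$, and (iv) inserts at $t_3$ a single contiguous pebbling of $\Pi_h$ using $h\le\mu$ pebbles. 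The point is that at time $t_3$ the original schedule already committed $\mu\ge h$ pebbles to the pyramid, so replacing those with a clean $h$-pebble run cannot raise the peak.

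Your swap argument, by contrast, has a real gap. You propose moving an external placement $e$ from the middle of a pyramid epoch to just before the epoch begins, on the grounds that ``the pyramid slots are empty rather than partially occupied.'' That observation bounds the count only at the \emph{new} time of $e$. It says nothing about the counts \emph{during} the epoch: once $e$ has been moved earlier, its pebble is present throughout the epoch, so every configuration in that interval gains one pebble relative to the original schedule. If the original peak happened to occur inside the epoch, your swap strictly increases the peak. The unique-predecessor hypothesis on the bottom row, which you correctly single out as important, guarantees that $e$'s prerequisites are available at the earlier time, but it does not rescue the pebble-count accounting. To repair the argument you would need either to move the pyramid moves later (rather than $e$ earlier), or---more cleanly---to adopt the delete-and-replay-at-peak approach that the paper uses for road graphs and that GLT79 uses for pyramids.
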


Note that we can tranform any pyramid that does not fit the requirements of Lemma~\ref{lem:glt-pyramid} (i.e. bottom level contains nodes with $2$ predecessors) to one that does satisfy the requirments by creating a single predecessor for each node in the bottom level and connecting the original predecessors to this single predecessor. 

The other important subgraph we use is the \emph{road graph} (\Cref{road-graph}), $R_w$ with width $w$, which requires a number of pebbles that is the width of the graph to pebble \emph{any} of the outputs~\cite{EL79,Nor15}. Therefore, we state as an immediately corollary of the provided proofs:

\begin{corollary}(Road Graph Pebbling)\label{corr:road}
To pebble $O \subseteq \{o_1, \dots, o_w\}$ of the outputs of $R_w$, with a valid strategy, $\s = [P_0, \dots, P_{\tau}]$ where $P_{\tau} = O$,  requires $w + |O| - 1$ pebbles.
\end{corollary}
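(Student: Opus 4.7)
The plan is to deduce the lower bound as an immediate consequence of the single-output pebbling bound for $R_w$ from~\cite{EL79,Nor15}, by identifying one time step at which the bookkeeping gives $w + |O| - 1$ distinct pebbles. Fix any valid strategy $\s = [P_0, \dots, P_\tau]$ with $P_\tau = O$. For each $o_j \in O$ let $\sigma_j = \max\{t : o_j \notin P_t\}$ be the last time $o_j$ is absent; since $P_0 = \emptyset$ this is well-defined, and since $o_j \in P_\tau$ it satisfies $\sigma_j < \tau$. Choose $j^*$ with $\sigma_{j^*} = \max_j \sigma_j$ and set $t^* = \sigma_{j^*} + 1$. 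By maximality, every $o_j$ with $j \neq j^*$ has $\sigma_j < t^*$, so $o_j \in P_{t^*}$, and thus $O \setminus \{o_{j^*}\} \subseteq P_{t^*}$ contributes $|O| - 1$ pebbles at time $t^*$.

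Next I would project $\s$ onto the ancestor subgraph $A_{j^*}$ of $o_{j^*}$ inside $R_w$ via $P'_i := P_i \cap A_{j^*}$. Because $A_{j^*}$ is closed under predecessors, every move of $\s$ restricted to $A_{j^*}$ remains legal, and the projected strategy places a pebble on $o_{j^*}$ at time $t^*$. Applying the single-output lower bound of~\cite{EL79,Nor15} to this projection forces $w$ pebbles of the original $P_{t^*}$ to lie on $A_{j^*}$ at (or immediately before) the placement of $o_{j^*}$. Because outputs in $R_w$ are sinks, $A_{j^*} \cap (O \setminus \{o_{j^*}\}) = \emptyset$, so these $w$ pebbles are disjoint from the $|O|-1$ counted above, yielding $|P_{t^*}| \geq w + |O| - 1$, and therefore $\p(R_w, \s) \geq w + |O| - 1$.

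The main obstacle is the sharpened version of the single-output bound used in the middle step: I need ``$w$ pebbles in $A_{j^*}$ at or immediately before the placement of $o_{j^*}$'' rather than the weaker ``$w$ pebbles at some time during the whole strategy.'' This sharpening is inherent to the road-graph analysis of~\cite{EL79,Nor15}, which proceeds by a cut argument on the layer immediately preceding each output, and it is essential because a priori the pebble witness for $o_{j^*}$ might occur before all other outputs in $O$ have been permanently pebbled; the maximality choice of $j^*$ is precisely what aligns the witness time with the $|O|-1$ other outputs already being in place. The matching upper bound $w + |O| - 1$ is straightforward: pebble the outputs of $O$ one at a time, using $w$ pebbles on each ancestor subgraph and releasing auxiliaries after each placement, so that the peak is attained when placing the final output while $|O| - 1$ pebbles already sit on previously completed outputs.
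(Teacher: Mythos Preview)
The paper itself does not give a proof; it states the bound as an immediate corollary of the road-graph analyses in~\cite{EL79,Nor15}. So there is no in-paper argument to compare against, and the question is simply whether your reduction to the single-output bound goes through.

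It does not, and you have correctly located the failure point but not resolved it. The single-output bound for $R_w$ says only that any strategy pebbling $o_{j^*}$ must at \emph{some} moment carry $w$ pebbles on $A_{j^*}$; it says nothing about the moment $o_{j^*}$ is actually placed. In an indegree-$2$ road graph the only vertices forced to be pebbled at time $t^*-1$ are the two direct predecessors of $o_{j^*}$, so $|P_{t^*}\cap A_{j^*}|$ can be as small as $2$ or $3$. Your appeal to a ``cut argument on the layer immediately preceding each output'' does not repair this: that layer is indeed a width-$w$ cut, but the cut is first saturated much earlier in the strategy, not at $t^*$. And the maximality of $j^*$ only guarantees that the other $|O|-1$ outputs are present at $t^*$; it does nothing to drag the $w$-pebble witness time for $o_{j^*}$ forward to $t^*$.

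A concrete strategy your argument mishandles: first pebble inside $A_{j^*}$ until the two predecessors of $o_{j^*}$ are covered (this is where the $w$-pebble peak on $A_{j^*}$ occurs), then drop everything else in $A_{j^*}$, then pebble the remaining outputs of $O$ one at a time while those two pebbles sit idle, and only at the very end slide onto $o_{j^*}$. Here $j^*$ is the last output placed, yet at $t^*$ only two pebbles lie on $A_{j^*}$, and at the earlier $w$-pebble moment for $A_{j^*}$ none of the other outputs were yet in place. The corollary still holds for this strategy, but the peak is witnessed when one of the \emph{other} outputs is being pebbled (with the idle pebbles and the already-finished outputs contributing the surplus), not at your $t^*$. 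A correct reduction along your lines would have to choose the witness output and the witness time jointly --- e.g., track for which output the $w$-pebble moment occurs latest rather than for which output the final placement occurs latest --- or else invoke the multi-output path/cut structure of $R_w$ directly, as the paper defers to~\cite{EL79,Nor15}.
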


\begin{figure}[h]
\centering
\includegraphics[width=0.5\textwidth]{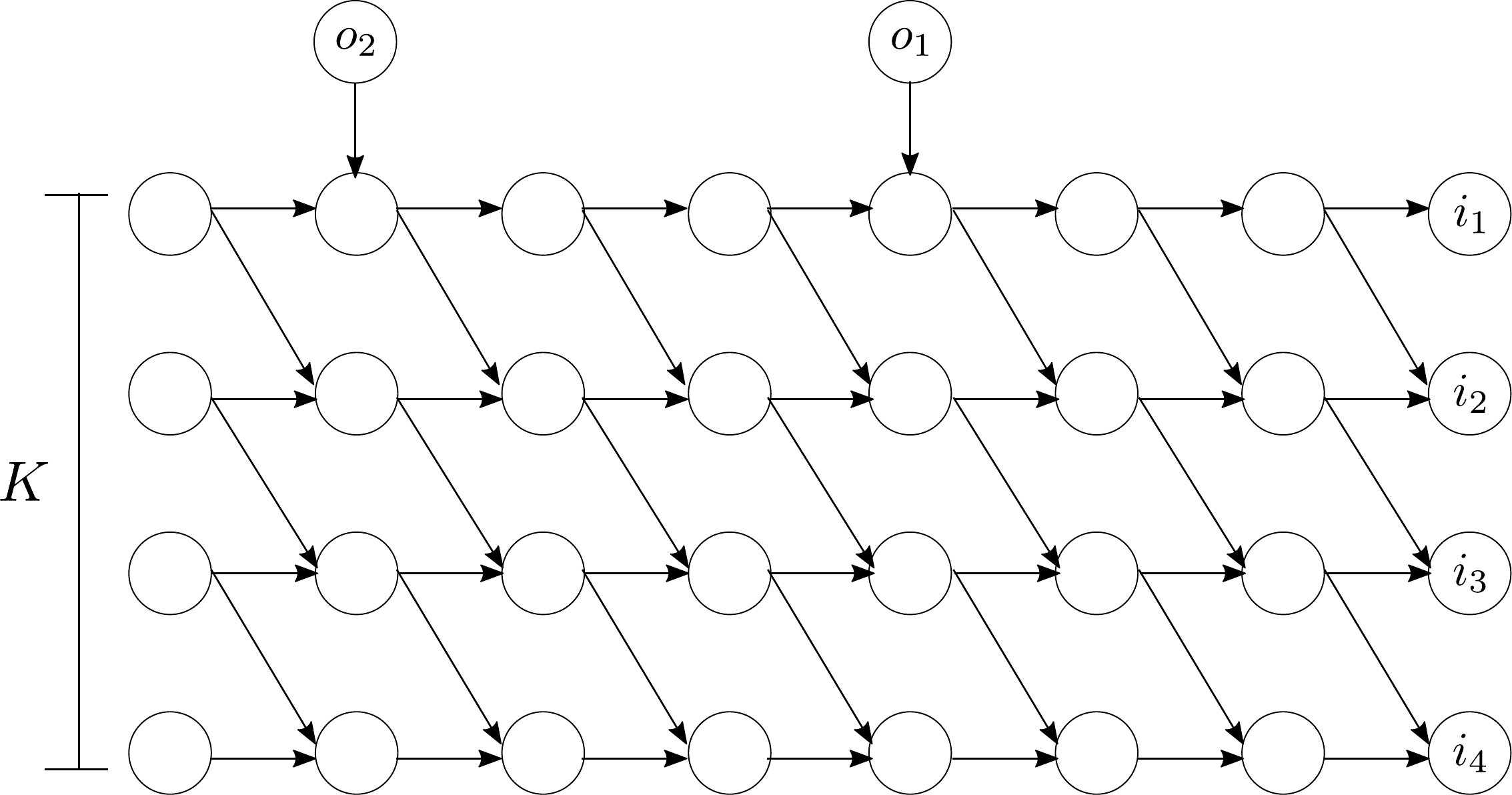}
\caption{Road graph gadget. Here, in this example, a minimum of $5$ pebbles are necessary to pebble $o_1$ and $o_2$. $4$ pebbles must be used to pebble $i_1$, $i_2$, $i_3$, and $i_4$ and one more pebble is necessary to pebble $o_1$ since the four pebbles used to pebble $i_1$, $i_2$, $i_3$ and $i_4$ must remain on the road graph in order to pebble $o_2$. $K$ is the width of this road graph gadget. In this example, $K = 4$.}\label{road-graph}
\end{figure}

We define a \emph{regular} pebbling strategy for road graphs similarly to the \emph{normal} pebbling strategy for pyramids. 

\begin{lemma}[Regular Pebbling Strategy]\label{lem:regular}
If each input, $i_j \in \{i_1, \dots, i_w\}$, to a road graph has at most $1$ predecessor, any pebbling strategy that pebbles a set of desired outputs, $O \subseteq \{o_1, \dots, o_w\}$, at the same time can be transformed into a \emph{regular} pebbling strategy without increasing the number of pebbles used. A regular pebbling strategy is one that is frugal and after the first pebble is placed on any road graph, $R_w$, no placements of pebbles occurs outside $R_w$ until the set of desired outputs of $R_w$ all contain pebbles and all other pebbles are removed from $R_w$. 
\end{lemma}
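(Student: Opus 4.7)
The plan is to mirror the two-step argument used by Gilbert, Lengauer, and Tarjan to prove Lemma~\ref{lem:glt-pyramid}, substituting the road-graph lower bound of Corollary~\ref{corr:road} wherever pyramid-specific facts were used.

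First, I would transform $\s$ into a frugal pebbling by the standard surgery argument: whenever a placement on some $v$ at time $t$ is never exploited by a subsequent placement on a successor of $v$ before $v$ is next unpebbled, delete both that placement and the corresponding removal. Iterating strictly shrinks $|\s|$, so the process terminates, and at each step the maximum pebble count can only decrease. The hypothesis that each input $i_j$ of $R_w$ has at most one predecessor ensures that the three clauses of Definition~\ref{def:frugal} all hold in the resulting strategy (the single-predecessor condition is what rules out redundant pebbles created by ``double-counting'' placements into road-graph inputs).

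Next, given a frugal strategy, let $t_1$ be the first time a pebble is placed on any vertex of $R_w$; by frugality, this must be a placement on an input $i_j$, whose single predecessor (if any) lies outside $R_w$ and is pebbled at time $t_1-1$. Let $t_2$ be the earliest time at which the desired outputs $O$ are all pebbled and every other vertex of $R_w$ is unpebbled; if the original strategy never reaches such a configuration, append the required removals (which never increases the peak). I would then show that any move during $(t_1,t_2)$ acting on a vertex outside $R_w$ commutes with a neighboring in-$R_w$ move, since the two act on disjoint vertex sets and each move's predecessor constraints reference only its own side of the partition. By iterated commutation, every outside move is pushed either before $t_1$ or after $t_2$, yielding the desired regular form.

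The main obstacle is a placement on an input $i_j$ that happens mid-phase whose outside predecessor is itself pebbled by a move lying inside $(t_1,t_2)$: this move cannot simply be pushed to after $t_2$, because the in-$R_w$ placement on $i_j$ depends on it. I would resolve this by advancing the entire chain of such predecessor-pebbling moves to the block just before $t_1$, so that every input predecessor needed during the phase is already present at time $t_1$. The pebble-count invariant is preserved because, by Corollary~\ref{corr:road}, the peak of the road-graph phase already uses at least $w+|O|-1$ pebbles on $R_w$ alone at time $t_2$; the advanced outside pebbles occupy positions that the original strategy also supported at the corresponding instants, so the rearranged peak is no larger than the peak of the original strategy.
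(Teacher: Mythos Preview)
Your two-step outline (frugalize, then localize the road-graph phase) matches the paper in spirit, but the second step is carried out differently and your accounting for the peak has a gap.

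The paper does not commute individual moves. Instead it fixes the maximal interval $[t_0,t_2]$ during which $R_w$ is never pebble-free, lets $\mu$ be the maximum number of pebbles on $R_w$ together with the ancestors of its inputs over that interval, and then performs a wholesale \emph{delete-and-replace}: all moves on $R_w$ and on the input-ancestor subgraph during $[t_0,t_2]$ are deleted, a fresh contiguous block is inserted at $t_0$ that pebbles each $\pred{i_j}$, and at the original peak time $t_3$ a fresh contiguous block pebbles $O$ with exactly $w+|O|-1$ pebbles and cleans up. The key accounting step is the observation that, by frugality and the structure of $R_w$, any pebble that ever sits on some $\pred{i_j}$ during $[t_0,t_2]$ migrates into $R_w$ and stays there until $O$ is finished; hence the original peak $\mu$ already accounts for all predecessor pebbles simultaneously, and the replacement block never exceeds $\mu$.

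Your commutation argument does not establish this. The sentence ``the advanced outside pebbles occupy positions that the original strategy also supported at the corresponding instants'' is the problem: if a predecessor of some $i_j$ is pebbled in the original strategy only at a late time $t'\in(t_1,t_2)$, advancing that placement to before $t_1$ forces the pebble to persist over $[t_1,t']$, an interval where the original strategy did \emph{not} carry it. Without the persistence/monotonicity observation above, nothing prevents the rearranged count at some early time from exceeding the original peak. (Also, the peak $w+|O|-1$ from Corollary~\ref{corr:road} is attained at some interior time, not at your $t_2$, where only $|O|$ pebbles sit on $R_w$.) You can repair your route by proving that in a frugal strategy the number of pebbles on $R_w\cup\bigcup_j\pred{i_j}$ is non-decreasing over $[t_1,t_3]$, which is exactly the fact the paper's delete-and-replace leans on; alternatively, abandon commutation and do the replacement directly.
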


\begin{proof}
Consider any pebble strategy, $\s$, that uses $s$ pebbles. We create a regular strategy, $\s'$, that uses at most $s$ pebbles. To create $\s'$, we first delete all unnecessary pebble placements from $\s$, resulting in a frugal strategy that has no unnecessary placements and does not use more than $s$ pebbles. Suppose at time, $t_1$, a pebble is placed on a road graph, $R_w$. Let $[t_0, t_2]$ be the largest time interval containing $t_1$ such that $R_w$ is never pebble-free during $[t_0, t_2]$. Let $\mu$ be the maximum number of pebbles on $R_w$ or any of the ancestors of nodes in $R_w$ during the time interval $[t_0, t_2]$.

By our definition of the frugality of $\s'$ and the fact that the target vertices are not inside $R_w$, the only pebbles that are on $R_w$ at time $t_2$ are on the nodes in set $O$. Since at time $t_0 - 1$ no pebbles are on $R_w$, there must be a time $t_3 \in [t_0, t_2]$ at which $w + |O| - 1 \leq \mu$ pebbles are on $R_w$ by Corollary~\ref{corr:road}. Furthermore, for each input, $i_j \in \{i_1, \dots, i_w\}$, there exists a time $t_{i_j} \in [0, t_2]$ where the predecessor of $i_j$ is pebbled. We modify strategy, $\s$, in the following way to transform it into a regular strategy, $\s'$. Let $\pred{i_j}$ be the set of predecessors of $i_j$ 

\begin{enumerate}
\item Delete all pebbling placements, removals, and slides on $R_w$ in $[t_0, t_2]$.\label{delete-placements}
\item Delete all pebble placements, removals, and slides on $\pred{i_j}$ and on all ancestors of $\pred{i_j}$ for all $i_j \in \{i_1, \dots, i_w\}$ in $[t_0, t_2]$.\label{delete-pred}
\item At time $t_0$, insert a continuous sequence of moves (placements, slides, and removals deleted from~\ref{delete-pred}) that pebbles $\pred{i_j}$ (if it is not pebbled) including any pebble placements on the ancestors of $\pred{i_j}$.\label{rearrange}
\item At time $t_3$, insert a continuous sequence of moves that pebbles $O$ using $w + |O| - 1$ pebbles that pebbles all nodes in $O$ and removes all pebbles on $\pred{i_j}$. Then, insert a continuous sequence of moves that removes all pebbles on $R_w$ except for the pebbles on $O$.\label{last}
\end{enumerate}

We now prove that the above is a valid strategy that uses at most $\mu$ pebbles to pebble $R_w$ in time $[t_0, t_2]$. Steps~\ref{delete-placements}-\ref{delete-pred} do not increase the pebble count. At time $t_0 - 1$, at most $w$ $\pred{i_j}$ nodes are pebbled using strategy $\s$. Each of these pebbles is moved onto $R_w$ at some time $t_{i_j} \in [t_0, t_2]$ using $\s$. Any pebble originally on $\pred{i_j}$ during the time frame $[t_0, t_2]$ does not get removed from $R_w$ until after all nodes in $O$ are pebbled using $\s$ by construction of the road graph and since $\s$ is a frugal strategy. Therefore, steps~\ref{delete-pred}-\ref{rearrange} do not increase the pebble count using $\s'$ since after $\pred{i_j}$ is pebbled, the pebble remains on the graph. Step~\ref{rearrange} uses at most $\mu - w - |O| + 1$ additional pebbles to pebble the ancestors of $\pred{i_j}$. Step~\ref{last} uses $w+|O|-1 \leq \mu$ pebbles.
\end{proof}

We immediately obtain the following corollary from Lemmas~\ref{lem:glt-pyramid} and~\ref{lem:regular}:

\begin{corollary}\label{cor:regular-normal}
Any pebbling strategy, $\s$, can be transformed into a pebbling strategy, $\s'$, that is normal and regular if no target vertices lie inside a pyramid or road graph and each input node to either the pyramid or road graph has at most one predecessor. 
\end{corollary}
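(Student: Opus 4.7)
The plan is to compose the two transformations given by Lemma~\ref{lem:glt-pyramid} and Lemma~\ref{lem:regular}. Starting from an arbitrary strategy $\s$, first apply the pyramid-normalization of Lemma~\ref{lem:glt-pyramid} to every pyramid of $G$ to obtain an intermediate strategy $\s_1$ that is normal on all pyramids, with $\p(G,\s_1)\leq\p(G,\s)$. Then apply the road-graph-regularization of Lemma~\ref{lem:regular} to each road graph in $G$ to obtain $\s'$ that is regular on all road graphs. Since each individual transformation preserves validity and does not increase the pebble count, the composed strategy $\s'$ satisfies $\p(G,\s')\leq\p(G,\s)$.

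The delicate point is ensuring that the second batch of transformations does not destroy the normality property established by the first. Inspecting the proof of Lemma~\ref{lem:regular}, the regularization step modifies pebbling moves not only within a road graph $R_w$ but also on the direct predecessors $\pred{i_j}$ of its inputs and on their ancestors, compressing these into a contiguous burst at the start of the interval $[t_0,t_2]$. If such an ancestor happens to lie inside a pyramid $\Pi_h$, naively rescheduling its placements could, a priori, insert a placement outside $\Pi_h$ while a pebble already sits inside $\Pi_h$, violating normality.

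I would resolve this by processing gadgets in reverse topological order: handle a pyramid or road graph $H$ only after every pyramid or road graph strictly among the ancestors of $H$ has itself been compressed into a contiguous burst of moves. When performing the transformation for $H$, each previously processed inner gadget already behaves as an atomic block (a single burst of pebbling followed by a single burst of removal, with at most one pebble left behind on an output), and the reshuffling for $H$ simply moves such blocks around as units rather than splitting them. Because each local application of Lemma~\ref{lem:glt-pyramid} or Lemma~\ref{lem:regular} is monotone in pebble count and leaves the external behavior of already-compressed gadgets intact, normality and regularity are preserved simultaneously, yielding the desired $\s'$. The main obstacle is exactly this bookkeeping about overlapping gadgets; once the topological-order scheduling is in place, the rest follows immediately from the two lemmas.
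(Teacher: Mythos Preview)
Your approach is essentially the same as the paper's: the paper states this as an ``immediate'' corollary of Lemmas~\ref{lem:glt-pyramid} and~\ref{lem:regular} and gives no proof at all, so your plan of composing the two transformations is exactly what is intended. You have in fact been more careful than the paper, which does not discuss the potential interference between the two normalizations; your proposed topological-order processing is a reasonable way to make the composition rigorous, though the paper simply takes this for granted.
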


\subsubsection{Modified Graph Constructions}\label{sec:modified}
We first describe the changes we made to each of the gadgets used in the PSPACE-completeness proof presented by~\cite{GilbertLenTar79} and then prove our inapproximability result using these gadgets in~\Cref{sec:construction-proof}. Given a QBF instance, $B = Q_1 x_1 \cdots Q_u x_u F$, with $c$ clauses, we create the following gadgets:

\textbf{Variable Nodes:}
Suppose we now replace all paths in variable nodes in the proof provided by~\cite{GilbertLenTar79} (see~\Cref{fig:glt-variable}) with road graphs each of width $K$. The modified variable gadgets are shown in~\Cref{fig:modified-variable}. Each variable gadget as in the original proof by~\cite{GilbertLenTar79} has $3$ possible configurations which are also shown in~\Cref{fig:modified-variable}. 

\begin{figure}[h]
\centering
\includegraphics[width=\textwidth]{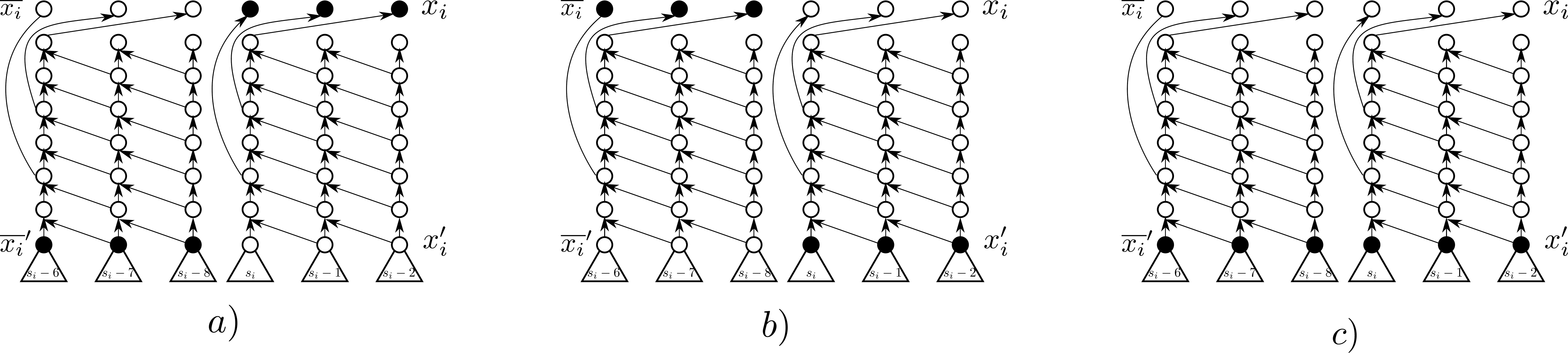}
\caption{Modified variable gadget with $3$ possible configurations using the road graph previously described. Here $K = 3$. a) $x_i$ is True. b) $\overline{x_i}$ is True. c) Double false. }\label{fig:modified-variable}
\end{figure}

\textbf{Quantifier Blocks:} Each universal and existential quantifier block is also modified to account for the new variable gadgets. See~\Cref{fig:modified-universal} and~\Cref{fig:modified-existential} which depict the new quantifier blocks that use the new variable gadgets. Note that instead of each quantifier block requiring $3$ total pebbles (as in the proof sketch described in Section~\ref{sec:previous-reduction}), each gadget requires $3K$ pebbles to remain before the clauses are pebbled. The basic idea is to expand all nodes $a_i, b_i, c_i...etc.$ into a path of length $K$ with interconnections to account for each of the $K$ copies of $x_i$ and each of $K$ copies of $\overline{x_i}$. Each $s_i = s_{i - 1} - 3K$ and $s_1 = 3Ku + 4K + 1$. 

\begin{figure}[H]
\centering
\def\svgwidth{0.9\columnwidth}
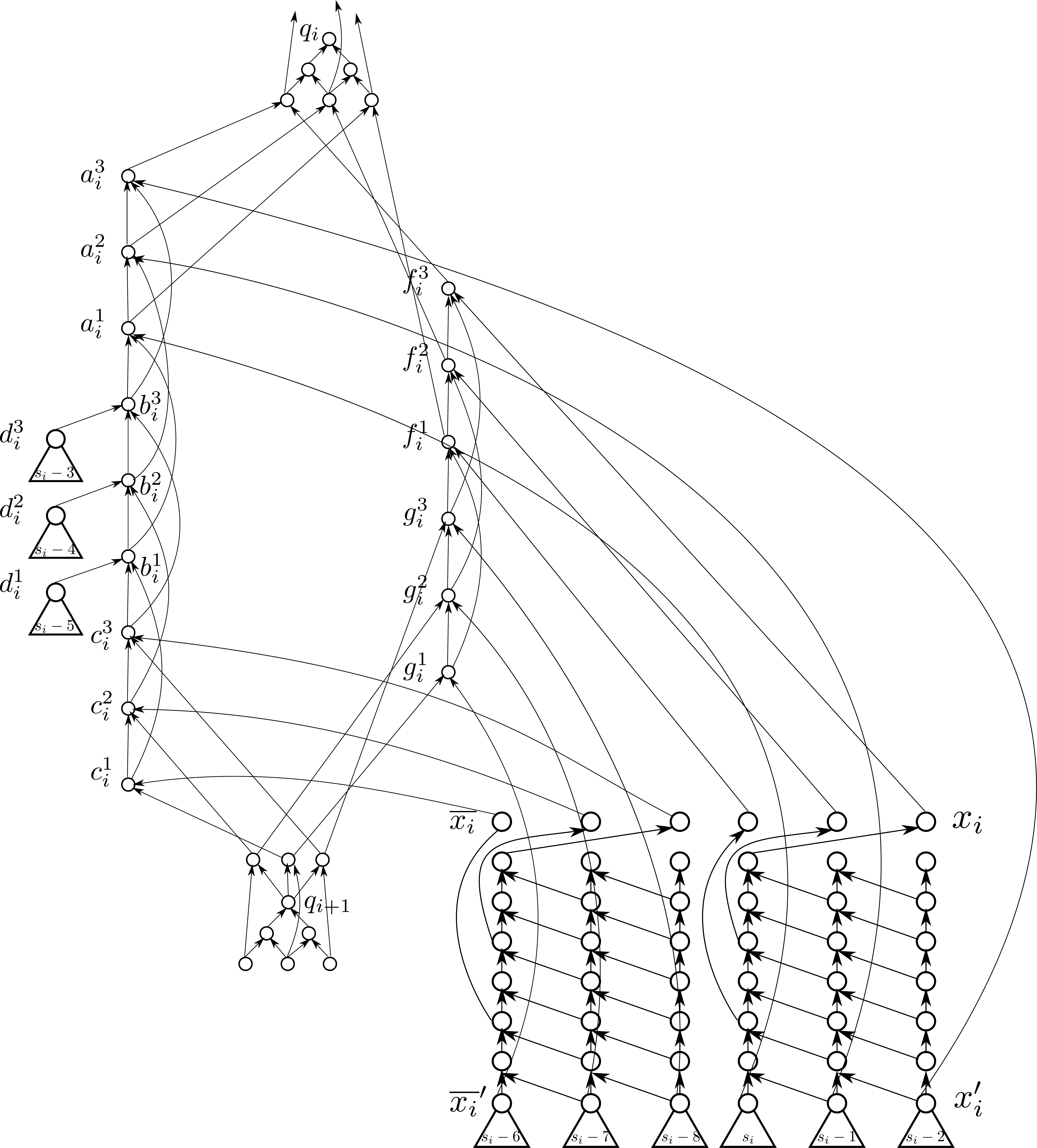
\caption{Modified universal quantifier block. Here $K = 3$.}\label{fig:modified-universal}
\end{figure}

\begin{figure}[h]
\centering
\def\svgwidth{0.8\columnwidth}
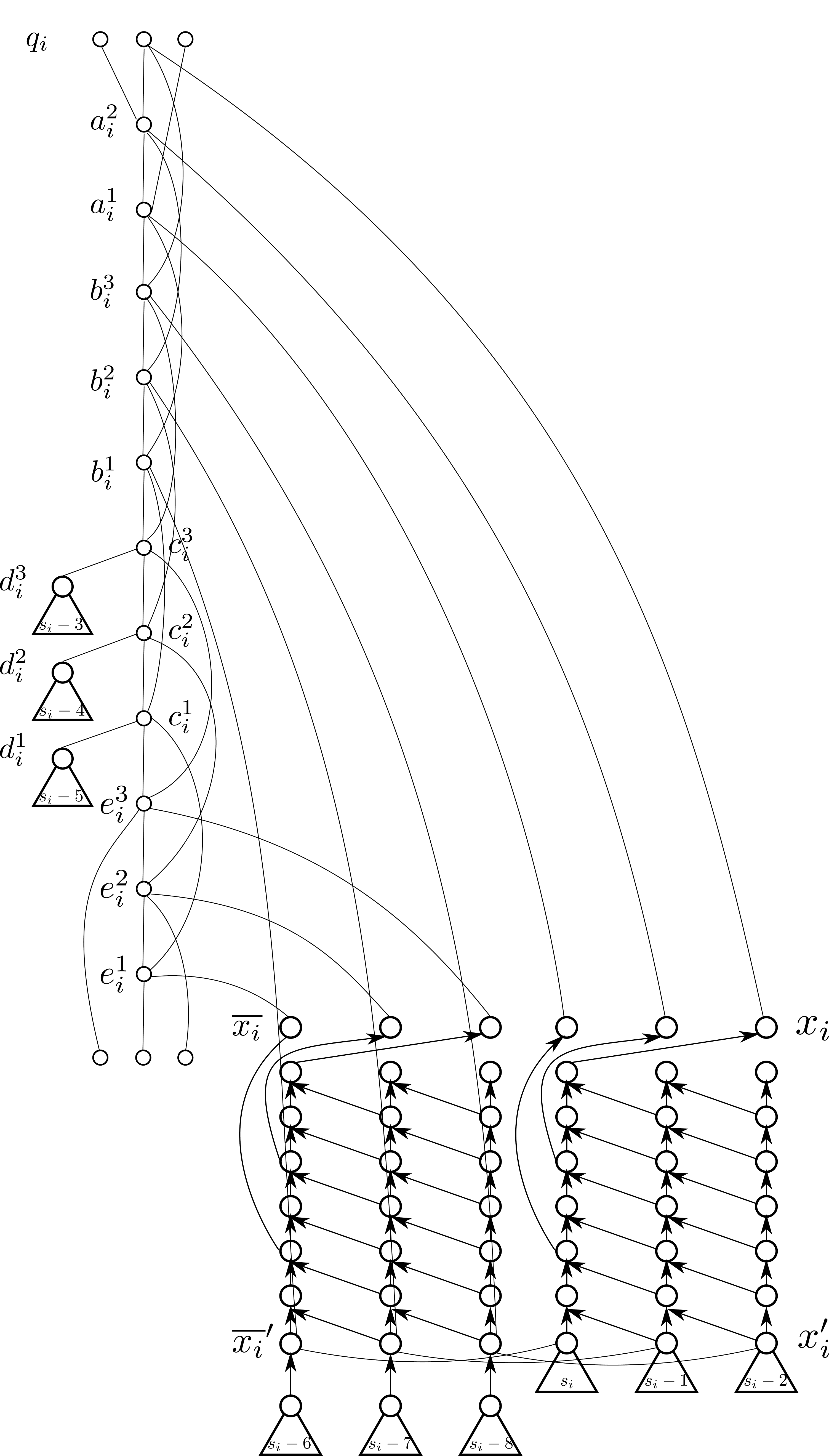
\caption{Modified existential quantifier block. Here $K = 3$.}\label{fig:modified-existential}
\end{figure}

\textbf{Clause Gadgets:} Each clause gadget is modified to be a pyramid of height $3K + 1$ where the bottom layer is connected to nodes from a combinations of different literals (as its ancestors). Therefore, for a given clause $(l_i, l_j, l_k)$, $K$ nodes are connected to $l_i$, $l_j$, $l_k$, $K$ nodes to $l_i$ and $l_k$, and $K$ nodes to $l_j$ and $l_k$. See~\Cref{fig:modified-clause} for an example of the modified clause gadget.

\begin{figure}[h]
\centering
\centering
\def\svgwidth{\columnwidth}
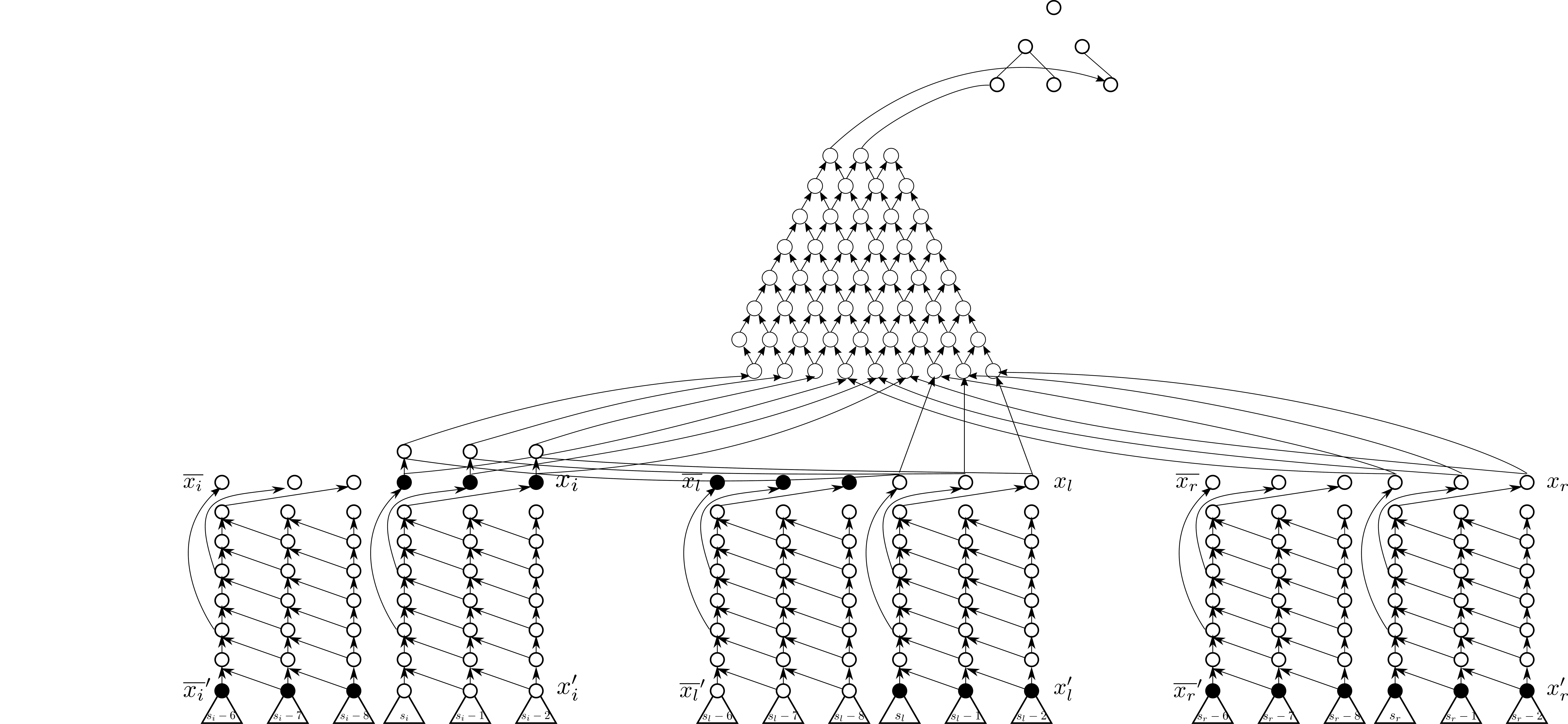
\caption{Modified clause gadget. The clause here is $(x_i, x_l, x_r)$ where $x_i = True$, $x_l = False$, and $x_r = False$. Here $K = 3$.}\label{fig:modified-clause}
\end{figure}

\begin{figure}[h]
\centering
\centering
\def\svgwidth{0.8\columnwidth}

\caption{Modified full construction. Duplicate first clause has been left out for clarity.}\label{fig:modified-full-construction}
\end{figure}

\subsubsection{Full Contruction}\label{sec:full-construction}

Using the modified gadgets described in Section~\ref{sec:modified}, we create the full construction of a graph $G$ from an instance of QBF $B$ using a similar construction to that shown in Figure~\ref{fig:glt-entire} except with the following key modifications:

\begin{enumerate}
\item $s_1 = 3Ku + 4K + 1$ and $s_i = s_{i - 1} - 3K$ for all $1 < i \leq u$. 
\item In the figures of the modified gadgets (Figs.~\ref{fig:modified-clause},~\ref{fig:modified-existential},~\ref{fig:modified-universal}, and~\ref{fig:modified-full-construction}), all vertices with indegree $3$ are replaced with pyramids of height $3$. 
\item Each $q_i$ and $p_j$ are the apex of pyramids of height $K$. Let $\q{i}$ and $\pii{j}$ be the nodes on the bottom level of each pyramid, respectively. Each $q_i$ is connected via outgoing directed edges to $K$ other nodes.  We refer to these nodes as $\qp{i}$.  There exist edges $(q_i, q^{l'}_{i})$ for all $	1 \leq l \leq K$, $(p_j, p_{j + 1})$, $(q^l_i, q^{l'}_{i})$ for all $1 \leq l \leq K$, and $(p^l_j, p^l_{j + 1})$ for all $1\leq l\leq K$. 
\item The first clause gadget in the topological sort order of clause gadgets is a duplicate of the first clause. In other words, the first two clauses gadgets are the same in the topological sort order of the clause gadgets.
\item The target node $q_1$ is the apex of a pyramid of height $K$. 
\item The clause gadget has height $3K + 1 - K = 2K + 1$ where the top $K$ nodes of the clause pyramid are connected to the corresponding $\pii{i}$. 
\end{enumerate}

\subsubsection{Proofs of the Construction}\label{sec:construction-proof}

We construct a graph $\sg{2}$ using the construction described above in~\Cref{sec:full-construction} for any given QBF instance, $B = Q_1x_1 \cdots Q_ux_uF$. In short, the proof relies on the fact that each quantifier block requires $2K$ pebbles to set the corresponding variable to true, false, or double false (i.e. the corresponding literals to true or false). An additional $K$ pebbles need to remain on each quantifier in order to be able to repebble quantifiers when checking for universal variables' satisfaction. Furthermore, a clause would consist of modified pyramids of height $3K + 1$ connected to pairs of nodes from different literals. Following the proof in~\cite{GilbertLenTar79}, the quantifier blocks are pebbled first with $3Ku$ pebbles remaining on the quantifier blocks. Then, the clauses are pebbled with $3K + 1$ pebbles. 

In the full construction, we include a duplicate copy of the first clause since the first clause can always be pebbled with $4K + 1$ pebbles regardless of whether or not it is satisfiable by the variable assignments. The variables, quantifiers blocks, and clauses are otherwise connected similarly to the construction presented in~\cite{GilbertLenTar79}.

If $B$ is satisfiable, then clauses can be pebbled with $4K + 1$ pebbles. Otherwise, $5K$ pebbles are needed to pebble one or more unsatisfied clauses in $\sg{2}$, resulting in a gap of $K-1$ pebbles between when $B$ is satisfiable and unsatisfiable. 
Thus, if given an approximation algorithm that estimates the number of pebbles needed within an additive $K-1$, we can distinguish between the case when $B$ is satisfiable (at most $3Km + 4K + 1$ pebbles are needed) and the case when $B$ is unsatisfiable (when $3Km + 5K$ pebbles are needed).

In this construction, $K$ can be any polynomial function of $u$ and $c$ where $u$ is the number of variables in $B$ and $c$ is the number of clauses (in other words, $K = u^ac^b$ for any constants $a$ and $b$). The total minimum number of pebbles necessary is $O(Ku)$ and the total number of nodes in the constructed graph is $O(K^3(u^3 + c))$.

Suppose time $t_1$ is the time when the first pebble is placed on the first clause in the topological order of the clause gadgets. Let $t_0$ be some time where $t_0 < t_1$ when $p_0$ is first pebbled. Let $t_{2}$ be the time when $q_1$ is pebbled. We first prove the following lemma.

\begin{lemma}\label{lem:remain}
At least $K$ pebbles remain on the paths from $p_0$ to $q_1$ during the time frame $[t_0, t_2]$ given a normal and regular strategy, $\s'$, that can pebble the graph using any arbitrary number of pebbles. 
\end{lemma}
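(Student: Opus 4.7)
Let $H\subseteq \sg{2}$ be the subgraph consisting of all vertices that lie on some directed path from $p_0$ to $q_1$. By the construction in~\Cref{sec:full-construction}, $H$ is, topologically, a chain whose nodes include the apexes $p_0, p_1, \dots, p_c, q_u, \dots, q_1$, each sitting atop a pyramid of height $K$, and whose traversal up through the quantifier blocks passes through road graphs of width $K$. After the standard input-expansion transformation noted just after~\Cref{lem:glt-pyramid}, every input to every pyramid and every road graph in $H$ has at most one predecessor, so~\Cref{cor:regular-normal} applies: $\s'$ may be taken to be normal on every pyramid of $H$ and regular on every road graph of $H$.

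My plan is to partition $[t_0, t_2]$ into contiguous time-windows, one per pyramid or road graph of $H$ in topological order, and show that at every instant at least $K$ pebbles of $\s'$ sit on $H$. First, I would argue inductively that once $p_0$ is pebbled at $t_0$, condition~3 of~\Cref{def:frugal} forces $\s'$ to begin climbing the next pyramid (or road graph) of $H$ as its very next useful placement, since every successor of $p_0$ whose pebble can ever contribute to reaching $q_1$ lies in $H$. Second, within each such window, \Cref{lem:glt-pyramid} (respectively~\Cref{corr:road}) guarantees that at least $K$ pebbles are placed on the pyramid (respectively road graph) being pebbled, all of which lie in $H$; by normality (respectively regularity), the window is contiguous and no placements outside the current structure occur during it. Iterating up the chain until $q_1$ is pebbled at $t_2$ then covers the whole interval $[t_0, t_2]$.

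The main obstacle will be handling the boundary between two consecutive windows: at the instant an apex is pebbled and normality demands that the remaining pebbles inside that pyramid be cleared, the pebble count on $H$ could momentarily appear to drop to $1$ before the next structure is begun. To rule this out I would argue, using conditions 1--2 of~\Cref{def:frugal}, that the just-pebbled apex retains its pebble (every remaining path from that apex to $q_1$ must stay blocked), and that by normality/regularity the very next placements must land on the bottom layer of the following pyramid/road graph of $H$, with no intervening placements outside the chain (by time $t_0$, all gadgets of $\sg{2}$ not on $H$ have already been pebbled in~$\s'$). Combining these two facts, the count on $H$ never dips below $K$, which is exactly the lemma; once this structural invariant is established the conclusion is immediate.
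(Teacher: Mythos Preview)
Your windowing idea is natural, but the plan hinges on a structural claim that is false in this construction, and the boundary argument collapses as a result. You assert that ``by time $t_0$, all gadgets of $\sg{2}$ not on $H$ have already been pebbled in $\s'$,'' and use this to conclude that the very next placements after clearing a pyramid must land on the next structure of $H$. But at time $t_0$ the quantifier blocks---including the variable road graphs, which do \emph{not} lie on any $p_0$--$q_1$ path---still carry their $3K$ pebbles each, and during $[t_0,t_2]$ these off-$H$ gadgets are repeatedly repebbled (every universal reset forces all clauses and all lower quantifier blocks to be redone). Concretely, between pebbling an apex $p_j$ and entering the next clause pyramid, the strategy must place pebbles on the literal road graphs (off $H$) to prepare the bottom layer of clause $j{+}1$; so ``no intervening placements outside the chain'' fails exactly at the boundary you flagged as the main obstacle. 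A related problem is that your partition of $[t_0,t_2]$ into one contiguous window per $H$-structure is incompatible with the fact that each clause gadget may be pebbled up to $2^u$ times; there is no single window per structure.

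The paper's proof sidesteps both issues by not attempting a window decomposition. It argues directly, gadget by gadget along the chain, that a frugal, normal, and regular strategy must keep $K$ pebbles pinned on some specific $K$-element cut of $H$ at all times: the $\{p^1_j,\dots,p^K_j\}$ layer while moving through the clause portion, and then one of the sets $\ci$, $\bi$, $\ai$, $\gi$, $\fii$ (universal) or $\ei$, $\ci$, $\bi$, $\ai$ (existential) while climbing a quantifier block. The argument is a short case analysis: if fewer than $K$ pebbles sit on the current cut, then either an earlier clause must be repebbled from scratch---contradicting frugality/normality---or $K$ pebbles already sit on a later cut. To repair your approach you would need to track these explicit $K$-node cuts rather than relying on the absence of off-$H$ placements.
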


\begin{proof}
For any strategy, $\s$, that pebbles the graph, by Corollary~\ref{cor:regular-normal} we show that we can transform it into a normal and regular strategy, $\s'$, where $K$ pebbles remain on the paths from $p_0$ to $q_1$ during the timeframe $[t_0, t_2]$. 

At time $t_1$, in order to pebble the first clause, $p_0$ must be completely pebbled requiring at least $K$ pebbles using strategy $\s'$. By the normality and regularity of $\s'$, $K$ pebbles remain on $\pii{i}$ for all $i \in [0, c]$. If $k \leq K$ pebbles are ever removed from $\pii{i}$, then one of three cases can occur:

\begin{enumerate}
\item If less than $K$ pebbles remain on any of the paths from $p_0$ to $p_c$ in total, then all clause gadgets $C_{i'}$ for all $i' \leq i$ must be repebbled.\label{c:one}
\item If less than $K$ pebbles remain on $\pii{i'}$ in total for all $i' \leq i$ and some $p_{i'}$ where $i' \leq i$ can be repebbled, then $K$ pebbles must remain on the top of the pyramids for clauses $C_{i'}$ where $i' \leq i$.\label{c:two}
\item If not all $\pii{i'}$ are repebbled and less than $K$ pebbles remain in total on the top of the pyramids for clauses $C_{i'}$ where $i' \leq i$, then a total of $K$ pebbles remain on $\pii{i'}$ for all $i' \leq i$.\label{c:three}
\end{enumerate} 

Case~\ref{c:one} violates the normality and regularity of $\s'$. Cases~\ref{c:two} and~\ref{c:three} both still maintain $K$ pebbles on the paths from $p_0$ to $q_1$, thus not violating the lemma.

Now we show that at the conclusion of pebbling the clauses, $K$ pebbles remain on $\pii{m} = \q{u + 1} = \qp{u+1}$. Suppose less than $K$ pebbles remain on $\pii{m}$, then $Q_{u}$ trivially cannot be pebbled and one of the three cases above must be true. 

We now argue that $K$ pebbles remain on either $\q{i + 1}$, $\ci$, $\bi$, $\ai$, $\gi$, or $\fii$ during the pebbling of a universal quantifier block or on $\q{i + 1}$, $\ai$, $\bi$, $\ci$, or $\ei$ during the pebbling of an existential quantifier block. 

When pebbling the first quantifier block, we consider two cases:

\begin{enumerate}
\item Universal quantifier block: $K$ pebbles must be moved from  $\qp{u + 1}$ to $\ci$.  When pebbling a $False$ assignment to the variable, the $K$ pebbles must remain on $\ci$ until the last of the nodes in the topological sort order of $\ci$ is pebbled since $\bi$ cannot be pebbled until the last of the $\ci$ is pebbled and one of the predecessors of $b_i^1$ is $c_i^1$. By the same argument, $K$ pebbles are moved onto $\bi$ then $\ai$ and remain on either $\ai$ or $\bi$ until $\q{u}$ is pebbled. When pebbling a $True$ assignment to the variable, we follow the same logic that $K$ pebbles must be moved onto $\gi$ and then $\fii$ until $K$ pebbles are used to pebble $\q{u}$. 

\item Existential quantifier block: $K$ pebbles must be moved from $\qp{u + 1}$ to $\ei$. No pebbles can be removed from $\ei$ until the final element in the topological sort order of $\ei$ is pebbled since to pebble the first element of $\ci$, we need to pebble the last element of $\ei$. Then, $K$ pebbles are transferred from $\ei$ to $\ci$ to $\bi$, to $\ai$, and finally to $\q{u}$. 
\end{enumerate}

The rest of the proof for $\q{i}$ where $1 \leq i < u$ follows easily from induction by using the base case provided above. 
\end{proof}

We then prove that the number of pebbles needed to pebble each quantifier block is $3K$ and $3K$ pebbles remain on the quantifier blocks throughout the pebbling of the clauses. 

\begin{lemma}\label{lem:quant}
Let $N_i$ be the configuration such that some number of pebbles are on the first $i - 1$ quantifier blocks and the $i$-th quantifier block (i.e. $\q{i+1}$ contains $K$ pebbles and $\q{i}$ does not yet contain $K$ pebbles) is being pebbled. Therefore, $N_{u + 1}$ is the configuration when some number of pebbles are on all $u$ quantifier blocks and the first clause gadget is being pebbled. There does not exist a normal and regular strategy, $\s$, that uses less than $3Ku + 5K$ pebbles that can pebble our reduction construction, $\sg{2}$, such that $N_{i}$ contains less than $s - s_i$ pebbles on the first $i - 1$ quantifier blocks when the $i$-th quantifier block or when the first clause is being pebbled.
\end{lemma}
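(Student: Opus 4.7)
The plan is to induct on $i$ and show that each fully--pebbled quantifier block permanently ties up exactly $3K$ pebbles. The base case $i=1$ is essentially vacuous: $s-s_1\le 0$ whenever $s\le s_1=3Ku+4K+1$, so the requirement of at least $s-s_1$ pebbles on the empty set of ``first $0$'' quantifier blocks is trivially met, and for $s\in(s_1,3Ku+5K)$ the configuration $N_1$ is the very first time we start pebbling any quantifier block, so no other block has yet received a pebble anyway.

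For the inductive step, fix $i\ge 2$ and let $\s$ be a normal and regular strategy (obtained from an arbitrary strategy via Corollary~\ref{cor:regular-normal}) using $s<3Ku+5K$ pebbles that pebbles $\sg{2}$. By the inductive hypothesis applied to $i-1$, at configuration $N_{i-1}$ there are at least $s-s_{i-1}$ pebbles on the first $i-2$ quantifier blocks. Since $s_i=s_{i-1}-3K$, it suffices to show that by the time the $i$-th quantifier block (or, when $i=u+1$, the first clause gadget) begins to be pebbled, at least $3K$ additional pebbles have been placed on the $(i-1)$-th quantifier block and none of them have been removed.

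To establish this $3K$-pebble lower bound on the $(i-1)$-th block, I would case-split on whether $Q_{i-1}$ is existential or universal. Applying Corollary~\ref{corr:road} to each of the width-$K$ road graphs inside the block (Figures~\ref{fig:modified-universal} and~\ref{fig:modified-existential}), together with the argument used in the last part of the proof of Lemma~\ref{lem:remain}, one exhibits three pairwise disjoint vertex sets of size $K$ on the $(i-1)$-th block that must each simultaneously hold $K$ pebbles at $N_i$: the set $\q{i}$ that serves as a premise for the $i$-th block; an intermediate road--graph layer (e.g., $\an{i-1}$ or the analogous set along the other literal path, depending on the True/False setting of $x_{i-1}$); and a third layer that, for a universal $Q_{i-1}$, is needed later to re-pebble the block when the universal variable is reset so that the clauses can be checked under both assignments, and, for an existential $Q_{i-1}$, is forced to remain along the $p_0$-to-$q_1$ path by Lemma~\ref{lem:remain}. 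The three sets are pairwise disjoint by the construction in Section~\ref{sec:modified}, so together they contribute $3K$ pebbles.

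The main obstacle is exactly this case analysis: one has to verify, using the normality and regularity of $\s$ together with the width-$K$ lower bound from Corollary~\ref{corr:road} and the height-$3K+1$ clause pyramids, that no clever rescheduling can collapse two of the three $K$-pebble groups into a single group or recycle any of these $3K$ pebbles while the remaining quantifier blocks or clauses are still being pebbled. Once the $3K$ bound per block is established, the remaining step is simply the arithmetic $(s-s_{i-1})+3K=s-s_i$, which yields the required lower bound on the number of pebbles on the first $i-1$ blocks at configuration $N_i$ and closes the induction.
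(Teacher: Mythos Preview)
Your inductive approach is structurally different from the paper's, which argues by contradiction rather than by forward induction on $i$. The paper supposes that at $N_{u+1}$ some quantifier block $Q_j$ (chosen to be the first such) carries fewer than $3K$ pebbles, identifies an unpebbled vertex $v$ in that block, and observes that $v$ must later be re-pebbled because it is needed as a predecessor of a clause node (or of some $d_j^l$). The crucial step is that $v$ is the apex of a pyramid of height at least $s_j-3K+1$, so re-pebbling $v$ requires at least $s_j-3K+1$ free pebbles; with fewer than $3Ku+5K$ total pebbles and $3K(j-1)$ of them already committed to blocks $1,\dots,j-1$ (by minimality of $j$), one is forced to strip pebbles from the later blocks and the clause path, leaving fewer than $K$ pebbles on the $p_0$--to--$q_1$ paths and contradicting Lemma~\ref{lem:remain}.

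Your plan has two genuine gaps. First, the three $K$-sets you name are not the correct ones: at configuration $N_i$ the nodes $\q{i}$ are, by the very definition of $N_i$, \emph{not} yet pebbled, and the nodes $\an{i-1}$ are only pebbled during the completion phase, not during the variable-setting phase that $N_i$ describes. The actual $3K$ pebbles on block $i-1$ sit on $\{d_{i-1}^1,\dots,d_{i-1}^K\}$ together with one of the three literal configurations on the variable road graphs (compare the description of $N_i$ in Claim~\ref{clm:forward}). Second, and more seriously, appealing to Corollary~\ref{corr:road} on the width-$K$ road graphs only yields a cost of order $K$ to re-enter a literal, which is far too weak to force permanence. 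The reason a missing pebble on block $j$ cannot be cheaply restored is the large pyramid of height $s_j$, $s_j-K$, or $s_j-2K$ sitting beneath the relevant node; without that observation, nothing in your outline rules out a strategy that temporarily drops a pebble from some earlier block and restores it later at cost roughly $K$. This pyramid-height argument is the heart of the paper's proof and is absent from your plan; your induction also does not argue that the $s-s_{i-1}$ pebbles on blocks $1,\dots,i-2$ persist from $N_{i-1}$ to $N_i$, which would require exactly the same missing ingredient.
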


\begin{proof}
Let $Q_i$ be the $i$-th quantifier block. If $N_{u + 1}$ contains less than $s - s_{u+1} = 3Ku$ pebbles, then there exists a quantifier block, $Q_j$, that contains less than $3K$ pebbles when the clauses are being pebbled. Let $Q_j$ be the first quantifier block that is missing at least one pebble (i.e. $N_j$ contains $s - s_j$ pebbles). If less than $3K$ pebbles are on the $Q_j$ block then two possible scenarios could occur. Either:

\begin{enumerate}
\item Some literal configuration contains less than $K$ pebbles. 
\item $d_j^l$ is not pebbled for some $1\leq l \leq K$. 
\end{enumerate}

Let $t_c$ be the time when the first clause is pebbled. Suppose on the contrary that less than $3K$ pebbles were placed on each quantifier block at time $t_{c} - 1$. Let the unpebbled vertex be $v$ and let $v$ be part of quantifier block $Q_j$. If $v$ is part of a literal in a clause, then $v$ must be pebbled in order to pebble the clauses. Let time $t'$ be the time when a clause containing $v$ is pebbled. Then, $v$ must be pebbled at $t'$. To pebble $v$ at time $t'$,  at least $s' \geq s_j - 3K + 1$ pebbles must be removed from the graph to pebble the literal. 

Without loss of generality, we assume that the apex of the pyramid with height $s_i - 3K + 1$ is missing a pebble at vertex $v$. Note that our argument applies for any of the pyramids in the gadget that is missing a pebble. For any other pyramid that is missing a pebble in $Q_{j}$, we need only consider whether pebbles remain on $Q_j$ itself.

Since our strategy $\s$ uses less than $3Ku + 5K$ pebbles in total to pebble $G$, there are two ways to obtain the necessary $s'$ pebbles:

\begin{enumerate}
\item Remove $s'$ pebbles from clauses and quantifier blocks $Q_{j'}$ where $j' > j$. Then, we must remove all pebbles except for at most $K - 1$ pebbles that can remain on the clause gadgets or the quantifier blocks $Q_{j'}$. Thus we know that at most $K-1$ pebbles total can be on the paths from $p_0$ to $p_c$. By Lemma~\ref{lem:remain}, this contradicts the normality and regularity of $\s$. \label{c:r-one}
\item Remove pebbles from $Q_{j'}$ where $j' < j$. Then, our argument for Case~\ref{c:r-one} applies to $Q_{j'}$ and so on until no quantifier block with lower order number contains less than $3K$ pebbles. 
\end{enumerate}

Thus if $N_{u+1} < 3Ku$, then the normality and regularity of $\s$ is violated. 

If $d_j^i$ is not pebbled for some $1 \leq i \leq K$, then, when $d_j^i$ must be pebbled, at least $s_i - 2K + 1$ pebbles must be removed from the graph in order to pebble $d_j^i$. Lemma~\ref{lem:remain} is again violated as at most $K-1$ pebbles total can remain on the paths from $p_0$ to $q_1$. 
\end{proof}

Next we prove that provided $3Km$ pebbles stay on the quantifier blocks, each unsatisfied clause requires $5K$ pebbles.

\begin{lemma}\label{lem:clause}
Given a clause gadget, $C_i$, where $i > 1$ (since the first clause is a duplicate), its corresponding clause, $c_i$ is true if and only if $C_i$ can be pebbled with $4K + 1$ pebbles (including the $K$ pebbles on $p_{i-1}$) and no pebbles are added, removed, or slid on the literals attached to the clause gadget. Furthermore, if all literals in $C_i$ are set in the false configuration, then at least $5K$ pebbles are necessary to pebble the clause (including the $K$ pebbles on $p_{i-1}$). 
\end{lemma}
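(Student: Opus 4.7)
The plan is to prove the lemma in three stages: (i) sufficiency of $4K+1$ pebbles when $c_i$ is satisfied without touching literals; (ii) necessity of a satisfied literal when $4K+1$ pebbles suffice and literals are untouched; and (iii) a lower bound of $5K$ pebbles in the all-false case.

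For (i), suppose literal $l$ of $c_i$ is true, so the $K$ copies of $l$ in the variable gadget already carry pebbles. I would give an explicit strategy that treats these copies as fixed sources. Specifically, each base node of the clause pyramid that depends on $l$ can be pebbled using $l$'s already-present pebbles without touching the literal, so the bottom layer is populated without traversing any road graph. The clause pyramid has height $2K+1$, so by the normal pebbling strategy of Lemma~\ref{lem:glt-pyramid} the pyramid can be walked from base to apex using at most $2K+1$ pebbles at any one time. Adding the $K$ pebbles already on $\pii{i-1}$ and $K$ further pebbles to transfer from the top of the pyramid into $\pii{i}$ yields a peak simultaneous count of $K+(2K+1)+K=4K+1$. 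For (ii), I would argue the contrapositive: if no literal of $c_i$ is true, then all $K$ copies of every literal appearing in $C_i$ are unpebbled; since the strategy never modifies literals, those copies stay unpebbled, and by the predecessor rule no base node of the clause pyramid can be pebbled, so its apex and $\pii{i}$ cannot be reached, a contradiction.

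For (iii), assume all literals attached to $C_i$ are set false and that the strategy may manipulate literals. I would first invoke Corollary~\ref{cor:regular-normal} to reduce to a normal-and-regular strategy, then identify a critical time step at which several vertex-disjoint pebble populations must coexist: the $K$ pebbles on $\pii{i-1}$ guaranteed by Lemma~\ref{lem:remain}; at least $2K+1$ pebbles on a root-to-base path of the clause pyramid forced by Lemma~\ref{lem:glt-pyramid} and the pyramid's height; and the pebbles on the width-$K$ road graphs of the literal predecessors being realized at that moment, each contributing at least $K$ pebbles by Corollary~\ref{corr:road} and the regularity of the strategy. Because each base node of the clause pyramid depends on at least two false literals, a careful choice of the critical time step, just as the second literal output is being pebbled in the course of pebbling that base node, forces simultaneous occupation of two distinct road graphs, yielding $K + (2K+1) + K + (K-1) = 5K$.

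The main obstacle will be rigorously disentangling these pebble populations under the normal-and-regular strategy so that the disjointness argument is airtight. Specifically, one must rule out the possibility that pebbles on one road graph could be recycled through clever interleaving with the pyramid walk or with the other road graph's setup; regularity (no placements outside the road graph while it is active) and normality (no placements outside the clause pyramid while it is active) together constrain the schedule enough that the pyramid, $\pii{i-1}$, and the two active literal road graphs must hold their pebbles simultaneously at the critical moment, delivering the claimed $5K$ bound.
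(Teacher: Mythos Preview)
Your outline misreads the clause gadget's structure, and this breaks both (i) and (ii). By construction (see Section~\ref{sec:modified}), the $3K$ bottom-level nodes of the clause pyramid are partitioned so that $K$ of them are connected to all three literals, $K$ to $l_i$ and $l_k$ only, and $K$ to $l_j$ and $l_k$ only. In particular, for any single literal there are base nodes whose predecessors lie \emph{entirely} among the other two literals. So your claim in (i) that ``each base node of the clause pyramid that depends on $l$ can be pebbled using $l$'s already-present pebbles \ldots\ so the bottom layer is populated without traversing any road graph'' is false: even when $l$ is true, an entire block of $K$ base nodes has no pebbled predecessor, and you must still reach the outputs of the other literals. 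The paper's forward direction does exactly this: it explicitly spends pebbles to traverse the road graphs of the two false literals (e.g., ``using $3K$ pebbles, we pebble $\x{l}$ and $\x{r}$'') and then routes those pebbles into the pyramid. The phrase ``no pebbles are added, removed, or slid on the literals'' in the lemma should be read as ``the $2K$ pebbles that record the literal's configuration are not disturbed,'' not as a ban on ever touching the road-graph outputs; the paper's own forward strategy would otherwise violate the lemma it is proving.

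The same misreading invalidates your contrapositive in (ii). It is \emph{not} the case that with no true literal the base of the pyramid is unreachable: one can still pebble any single literal output through its road graph at a cost of $K$ free pebbles. The paper's argument for this direction is therefore quantitative rather than absolute: it shows that with only $4K+1$ pebbles and no true literal, at most $2K+2$ pebbles can be placed on the pyramid's predecessors (because each further entry forces a width-$K$ road-graph traversal), short of the $3K+1$ needed by normality. Your (iii) is closer in spirit to the paper's approach, but note the pyramid in the paper's accounting has height $3K+1$, not $2K+1$; the $5K$ count in the paper comes from $K$ on $\pii{i-1}$ plus $3K+1$ forced onto the pyramid plus the extra $K-1$ needed to keep feeding the road graph, not from your $K+(2K+1)+K+(K-1)$ decomposition.
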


\begin{proof}
We first prove that if $c_i$ is true, then $C_i$ can be pebbled using $4K + 1$ pebbles. Given any valid strategy $\s$, we first note that we can transform $\s$ into a regular strategy where the pyramid in~\Cref{fig:modified-clause} with apex $p_j$ can be pebbled using $4K + 1$ pebbles. Because $c_i$ is true, at least one of the literals in $c_i$ must be true. Therefore, one of the literals connected to the bottom of the pyramid in $C_i$ must be in the true configuration. $4K$ pebbles can then be used to pebble the other two literals, $\pii{j-1}$, and $3K + 1$ pebbles can then be used to pebble the clause pyramid. One can check a small number of cases to see that this is true. For example, in Fig.~\ref{fig:modified-clause}, we have the following cases:

\begin{enumerate}
\item Suppose that $x_i = True$, then using $3K$ pebbles, we pebble $\x{l}$ and $\x{r}$, leaving $2K$ pebbles. We pebble the nodes in the bottom layer of the pyramid that have $\x{l}$ and $\x{r}$ as predecessors with the $K$ remaining pebbles. Then, we move the $K$ pebbles on $\x{l}$ to $\yii{i}$ (assuming by Lemma~\ref{lem:remain} that $\pii{j-1}$ contains $K$ pebbles) and to the bottom level of the clause pyramid. Finally, we move the $K$ pebbles from $\x{r}$ to the bottom level of the clause pyramid. 
\item Suppose that $x_l = True$, then using $3K$ pebbles, we pebble $\x{i}$ and $\x{r}$. Use remaining $K$ pebbles to pebble nodes on bottom layer of clause pyramid that have $\x{i}$ and $\x{r}$ as predecessors. Move $K$ pebbles from $\x{i}$ to $\yii{i}$ and then to bottom layer of clause pyramid. Move $K$ pebbles from $\x{r}$ to bottom layer. 
\item Suppose that $x_r = True$, then using $3K$ pebbles, we pebble $\x{i}$ and $\x{l}$. Use remaining $K$ pebbles to pebble $\yii{i}$ and then bottom layer of clause pyramid. Move $K$ pebbles from $\x{i}$ to bottom layer. Move $K$ pebbles from $\x{l}$ to bottom layer. 
\end{enumerate}

Now we prove the more difficult direction that if $C_i$ can be pebbled using $4K + 1$ pebbles, then $c_i$ is true. Each variable gadget requires $K$ pebbles to pebble each output. There exists a time $t_1$ when $3K + 1$ pebbles are on the clause pyramid. By regularity of the pyramid, a total of $3K + 1$ pebbles must be on the predecessors of the bottom level of the pyramid before any pebbles are placed on the pyramid. However, at most $2K + 2$ pebbles can be placed on the pyramid if none of the literals are set in the true configuration. $K$ pebbles must be on $p_{i - 1}$ or the frugality of the strategy is violated as proven in Lemma~\ref{lem:remain}. The only entry point into the pyramid is through the literals. However, every output requires $K$ pebbles to pebble. For every pebble that enters the pyramid, at least $K$ pebbles must be free to pebble the road graph. However, if none of the literals are true, then to place the $p$-th pebble where $2K + 3 \leq p \leq 3K$ on the pyramid requires one additional pebble each which we remove from other parts of the graph, a contradiction to our assumption. 

Following the argument presented above, to pebble each output of a literal requires $K$ pebbles. Therefore, to move the $3K + 1$ pebbles onto the clause pyramid, we need $K - 1$ additional pebbles, resulting in $5K$ pebbles. 
\end{proof}

Given the previous proofs, we now prove the following key lemmas: 

\begin{lemma}\label{lem:upper-bound-true}
Given $\sg{2}$ which is constructed from the provided QBF instance, $B = Q_1x_1 \cdots Q_u x_u F$, using our modified reduction in~\Cref{sec:modified}, $B$ is satisfiable if and only if $\p(\sg{2}) \leq 3Ku + 4K + 1$. 
\end{lemma}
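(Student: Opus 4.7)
The plan is to prove the two directions separately, leveraging the three lemmas already established (\Cref{lem:remain,lem:quant,lem:clause}) for the harder direction.

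For the forward direction ($B$ satisfiable $\Rightarrow \p(\sg{2}) \le 3Ku+4K+1$), I would exhibit an explicit pebbling strategy that mimics a QBF evaluation. Working outward from $q_1$ (which lies at the apex of a pyramid of height $K$), I would pebble the quantifier blocks in order $Q_1,\ldots,Q_u$, using the modified patterns of \Cref{fig:modified-universal,fig:modified-existential}, which require $s_i = s_{i-1}-3K$ pebbles when processing $Q_i$. For each existential quantifier $Q_i$, I pick the literal configuration consistent with a fixed satisfying assignment (costing $2K$ pebbles to set up and $K$ more that linger on $\q{i+1}$-type vertices). For each universal $Q_i$, I pebble the block with $x_i=\textit{True}$ all the way through the clauses, reset, then pebble with $x_i=\textit{False}$ and reprocess the clauses, recursively nesting this behavior. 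After all quantifier blocks are set up, $3Ku$ pebbles remain on the blocks and $4K+1$ pebbles are free. Because the current assignment satisfies $B$, every clause $c_i$ (other than the duplicated first clause, which is pebblable unconditionally) is true under the current setting, so by \Cref{lem:clause} each clause gadget $C_i$ can be pebbled with $4K+1$ pebbles without disturbing the literal road graphs. The duplicated first clause is handled identically using $4K+1$ pebbles regardless of the assignment. A peak count check shows the total never exceeds $3Ku+4K+1$.

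For the reverse direction ($\p(\sg{2}) \le 3Ku+4K+1 \Rightarrow B$ satisfiable), I would fix any strategy $\s$ achieving the bound and, by \Cref{cor:regular-normal}, replace it by a normal and regular strategy $\s'$ using no more pebbles. By \Cref{lem:quant} applied with $s = 3Ku+4K+1 < 3Ku+5K$, at the moment the first clause starts being pebbled all $3Ku$ pebbles required on the quantifier blocks are present, leaving exactly $4K+1$ pebbles available for each clause gadget. By \Cref{lem:clause}, any clause $C_i$ ($i>1$) that is pebbled under this budget must have at least one literal set in the true configuration, so the current variable assignment induced by the quantifier blocks satisfies every clause. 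To extract a QBF winning strategy from this, I would walk through the quantifier blocks in order: at each existential block $Q_i$, the pebbling of the block commits to one of the two variable settings shown in \Cref{fig:modified-existential}, and this is the value we assign to $x_i$; at each universal block $Q_i$, the only way (per \Cref{fig:modified-universal}) to pebble the block is to process both $x_i=\textit{True}$ and $x_i=\textit{False}$ consecutively, during each of which all clauses must be repebbled under the corresponding assignment and therefore must be satisfied in both cases. Reading these choices recursively gives a winning assignment for $B$.

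The main obstacle is the reverse direction, specifically ensuring that the normal/regular pebbling really forces a legitimate QBF evaluation. The crux is that \Cref{lem:quant} pins down exactly $3K$ pebbles per quantifier block throughout the clause-pebbling phase, so each clause-pebbling episode sees a fully specified assignment, and the universal block's rigid pebbling pattern forces both branches to be verified. I would be careful that the role of the duplicate first clause (introduced in \Cref{sec:full-construction}) is used to absorb the unconditional initial $4K+1$-pebble pebbling without requiring the first clause in the formula to be satisfied by the current assignment at that moment, so only genuine clauses provide the logical constraint via \Cref{lem:clause}. Once these points are nailed down, combining them with the three preceding lemmas yields both directions of the biconditional.
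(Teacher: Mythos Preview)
Your plan follows the paper's approach closely: an explicit strategy for the forward direction and, for the converse, passing to a normal/regular strategy and arguing that it encodes a winning QBF evaluation. Two points are worth flagging. First, the paper organizes \emph{both} directions via a downward induction on the quantifier index $i$ (from $u{+}1$ to $1$), phrased in terms of explicit configuration sets $N_i$ that record which $3K$ vertices in each of the first $i{-}1$ blocks carry pebbles and which assignment this encodes; there are in fact \emph{three} legal configurations per block (True, False, and a ``double false''), and the double-false case must be handled separately in both claims. Second, your reverse-direction appeal to \Cref{fig:modified-universal} for the ``rigid pebbling pattern'' of the universal block is exactly where the paper does the real work: the figure only shows how the block \emph{can} be pebbled, not that it \emph{must} be. Establishing that any normal/regular strategy on a universal block forces $\qp{i+1}$ to be pebbled twice---once under each assignment to $x_i$---requires a detailed timing argument (the paper tracks times $t_0,\ldots,t_4$, repeatedly invokes frugality and normality to pin down which vertices must hold pebbles when, and rearranges moves to exhibit an $N_{i+1}$ configuration at the right moment). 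So your outline is sound, but expect the bulk of the proof to live in that rigidity analysis rather than in the invocation of \Cref{lem:quant} and \Cref{lem:clause}.
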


\begin{proof}
We first prove that if $B$ is satisfiable, then the graph can be pebbled with $3Ku + 4K + 1$ pebbles. We prove this via induction, similar to the proof given in~\cite{GilbertLenTar79}. Let $s = 3Ku + 4K + 1$ and $s_i$ be defined as in~\Cref{sec:full-construction}. For $1 \leq i \leq u + 1$, we define $N_i$ to be the set of configurations fixing truth values to the first $i - 1$ variable nodes. An arrangement of exactly $s - s_i$ pebbles on $\sg{2}$ is in $N_i$ if and only if, for $1 \leq j < i$, the following two conditions hold:

\begin{enumerate}
\item If $Q_j = \forall$, then exactly $3K$ pebbles are on the $j$-th quantifier block, on one of the following three sets of vertices:
\begin{enumerate}
\item $\{d_j^1, \dots, d_j^K\} \cup \xj \cup \nxjp$, indicating $x_j$ is $True$;
\item $\{d_j^1, \dots, d_j^K\} \cup \nxj \cup \xjp$, indicating $x_j$ is $False$;
\item $\{d_j^1, \dots, d_j^K\} \cup \xjp \cup \nxjp$, indicating a double $False$.
\end{enumerate}

\item If $Q_j = \exists$, then exactly $3K$ pebbles are on the $j$-th quantifier block, on one of the following three sets of vertices:

\begin{enumerate}
\item $\{d_j^1, \dots, d_j^K\} \cup \xj \cup \nxjp$ indicating $x_j$ is $True$;
\item $\{d_j^1, \dots, d_j^K\} \cup \nxj \cup \xjp$, indicating $x_j$ is $False$;
\item $\{d_j^1, \dots, d_j^K\} \cup \xjp \cup \nxjp$, indicating a double $False$.
\end{enumerate}
\end{enumerate}

By our definition, $N_1$ contains no pebbles on the graph, and $N_{u + 1}$ contains all configurations in which a truth assignment has been made to each literal and $4K + 1$ pebbles remain to test whether the assignment makes $F$ true.

We now prove the following claim which subsequently also proves that if $B$ is satisfiable then $\p(G) \leq 3Ku + 4K + 1$.

\begin{claim}\label{clm:forward}
Let $1 \leq i \leq u+1$. Suppose the graph is initially in a configuration $N_i$. For $1 \leq j < i$, let $e_{2j - 1}$ be the truth assignment defined for $x_j$ by that configuration, and let $e_{2j}$ be the truth assignment defined for $\overline{x_j}$. If $Q_i x_i \cdots Q_u x_u F(e_1, e_2, \cdots, e_{2i - 3}, e_{2i - 2})$ is true, then vertex $q_i$ can be pebbled with additional pebbles without moving any of the $s - s_i$ pebbles initially on the graph.
\end{claim}

\begin{proof}
We prove by induction on $i$ from $u+1$ to $1$.

Let $i = u+1$ and suppose that the assignment defined by the $N_i$ configuration makes $F$ true. We must show that any vertex $\qp{u + 1} = \pii{c}$ can be pebbled with $s_{u + 1} = 4K + 1$ pebbles without moving any of the pebbles of the $N_{u + 1}$ configuration. We showed in Lemma~\ref{lem:clause} that this is the case. 

Now suppose that the lemma holds for $i + 1$ so that the assignment defined by the $N_i$ configuration makes the substituted formula $Q_i x_i \cdots Q_u x_u F(e_1, e_2, \cdots, e_{2i - 3}, e_{2i - 2})$ true. 

To prove that the lemma holds for $i$, there are two cases we have to consider:

\begin{enumerate}
\item Suppose $Q_i = \forall$. Then, 
\begin{align*}
Q_{i + 1}x_{i+1} \cdots Q_u x_u F(e_1, \cdots, e_{2i - 2}, True, False) \text{ and }\\
Q_{i+1}x_{i+1} \cdots Q_u x_u F(e_1, \cdots, e_{2i - 2}, False, True)
\end{align*}

are both true. 

Vertices $q_i$, $\q{i}$, and $\qp{i}$ can be pebbled with $s_i$ pebbles as follows. First, use all $s_i$ pebbles to pebble $\xp{i}$, leaving $K$ pebbles, one on the apex of each of the pyramids. Then, use the remaining $s_i - K$ pebbles to pebble $\{d_i^1, \dots, d_i^K\}$ leaving $K$ pebbles, one on each of the $d_i$'s. Finally, with $s_i - 2K$ remaining pebbles, pebble $\nxp{i}$, then move the $K$ pebbles on $\nxp{i}$ to $\nx{i}$. The current configuration is in $N_{i + 1}$ representing the variable $x_i = False$. Applying the induction hypothesis, pebble $q_{i + 1}$, $\q{i}$, and $\qp{i}$ with the remaining $s_{i + 1} = s_{i} - 3K$ pebbles. Move the pebbles on $\qp{i + 1}$ to $\{c_i^1, \dots, c_i^K\}$, $\{b_i^1, \dots, b_i^K\}$, and $\{a_i^1, \dots, a_i^K\}$. Move the pebbles on $\xp{i}$ to $\x{i}$. Leaving pebbles on $\{a_i^1, \dots, a_i^K\}$ and $\x{i}$, pick up the rest of the pebbles and use the $s_i - 2K$ free pebbles to pebble $\nxp{i}$, leaving $K$ pebbles there. The current configuration is in $N_{i + 1}$, representing the variable $x_{i} = True$. Applying the induction hypothesis, pebble $\qp{i + 1}$ again. Finish by moving the $K$ pebbles on $\qp{i + 1}$ to $\{g_i^1, \dots, g_i^K\}$, $\{f_i^1, \dots, f_i^K\}$, and $\q{i}$. 

If $Q_{i + 1}x_{i + 1} \cdots Q_u x_u F(e_1, \cdots, e_{2j - 2}, False, False)$ is true, there is a way to pebble $\q{i}$ that only pebbles $\qp{i + 1}$ once. First pebble $\xp{i}$, $\{d_i^1, \dots, d_i^K\}$, and $\nxp{i}$, which gives a configuration in $N_{i + 1}$ representing the literals $x_i$ and $\overline{x_i}$ are both false. Applying the induction hypothesis, pebble $\qp{i + 1}$. There are now $s_i - 4K \geq 2K$ free pebbles. Place $K$ on $\nx{i}$ and move them to $\cn{i}$, $\bn{i}$, and $\an{i}$. Move the pebbles on $\nxp{i}$ to $\gn{i}$ and finish by moving the pebbles on $\xp{i}$ to $\x{i}$, $\fii$, and $\q{i}$. 
\item Suppose $Q_i = \exists$. Then either

\begin{align*}
Q_{i + 1}x_{i + 1} \cdots Q_u x_u F(e_1, \cdots, e_{2i - 2}, True, False) \text{ or }\\
Q_{i + 1}x_{i + 1} \cdots Q_u x_u F(e_1, \cdots, e_{2i - 2}, False, True) 
\end{align*} 

is true. 

Suppose that the former is the case. Vertices $\q{i}$, $q_{i}$, and $\qp{i}$ can be pebbled with $s_i$ pebbles as follows. First pebble $\xp{i}$ leaving pebbles there. Then, pebble $\di$ and $\fii$, leaving pebbles there. Move the pebbles on $\fii$ to $\nxp{i}$ and move the pebbles on $\xp{i}$ to $\x{i}$. The current configuration is in $N_{i + 1}$, representing variable $x_{i} = True$. Applying the induction hypothesis, pebble $\qp{i + 1}$ with the remaining $s_{i + 1} = s_i - 3K$ pebbles. There are now $s_i - 4K \geq 2K$ free pebbles. Place $K$ pebbles on $\nx{i}$ and finish by moving the $K$ pebbles on $\nx{i}$ to $\ei$, $\ci$, $\bi$, $\ai$, and $\q{i}$. 

Alternatively, suppose that $Q_{i + 1}x_{i + 1} \cdots Q_u x_u F(e_1, \cdots, e_{2i - 2}, False, True)$ is true. To pebble $\q{i}$ with $s_i$ pebbles, begin by pebbling $\xp{i}$, $\di$, and $\fii$ in turn, leaving pebbles there. Move the pebbles on $\fii$ to $\nxp{i}$  and $\nx{i}$, which gives a configuration in $N_{i + 1}$ representing variable $x_i = False$. Applying the induction hypothesis, pebble $\qp{i + 1}$. Move the pebbles on $\qp{i + 1}$ to $\ei$ and $\ci$. Pick up all the pebbles except for those on $\ci$ and $\xp{i}$ and use the $s_i - 2K$ free pebbles to pebble $\fii$. Move the pebbles on $\fii$ to $\nxp{i}$, then $\bi$, and finish by moving the pebbles on $\xp{i}$ to $\x{i}$, then $\ai$, and finally to $\q{i}$. 
\end{enumerate}
\end{proof}

When $i = 1$, the proof of of the above Claim~\ref{clm:forward} proves that if $B$ is satisfiable then $\p(G) \leq 3Ku + 4K + 1$. 

Now we prove that if $\p(\sg{2}) \leq 3Km + 4K + 1$, then $B$ is satisfiable. Note that in the subsequent proofs, we assume that we only remove pebbles from quantifiers blocks with higher order number since we proved in Lemma~\ref{lem:remain} that removing pebbles from quantifier blocks with lower order number violates the normality and regularity of a pebbling strategy. We first prove the following claim which subsequently proves this.

\begin{claim}
Let $1 \leq i \leq n + 1$. Suppose the graph is initially in a configuration in $N_i$. For $1 \leq j < i$, let $e_{2j - 1}$ be the truth assignment defined for $x_i$ by that configuration, and let $e_{2j}$ be the truth assignment for $\overline{x_j}$. If vertex $q_i$ can be pebbled with $s_i$ additional pebbles without moving any of the $s - s_i$ pebbles initially on the graph, then $Q_i x_i \cdots Q_n x_n(e_1, e_2, \cdots, e_{2i-3}, e_{2i-2})$ is true.
\end{claim}

\begin{proof}
We assume that any strategy $\s$ that can pebble $G$ using $3Ku + 4K + 1$ pebbles is transformed into a normal and regular strategy $\s'$ that pebbles $G$ using $3Ku + 4K + 1$ pebbles. Again, we prove by induction on $i$ from $u + 1$ to $1$. In the base case, let $i = u+1$, then by Lemma~\ref{lem:quant} and Lemma~\ref{lem:clause}, each clause gadget contains at least one literal gadget in the $True$ configuration. 

Suppose by induction that the lemma holds for $i+1$, we now prove there is a strategy which pebbles $\q{i}$ with $s_i$ pebbles without moving any pebbles in the $N_i$ configuration. We can assume that such a strategy is normal and regular by Corollary~\ref{cor:regular-normal}. We now consider $Q_i$, the $i$-th quantifier gadget.

\begin{enumerate}
\item Suppose $Q_i = \forall$. Suppose that $t_0$ is a time when $s_i$ pebbles appear on the $s_i$-pyramid. After $t_0$, each of $\xp{i}$ is only pebbled once before $\q{i}$ are pebbled. Furthermore, by frugality, $\ai$, $\bi$, $\ci$, $\di$, $\fii$, and $\gi$ are each pebbled only once after $t_0$ until $\q{i}$ are pebbled. Let $t_1$ be the time when $x^{K'}_i$ is pebbled. Since our strategy is a normal and regular strategy, by Lemma~\ref{lem:quant} from $t_1$ until when $\qp{i + 1}$ is pebbled, $K$ pebbles remain on $\xp{i}$, $\x{i}$, or $\fii$. From $t_1$ until $\ai$ are pebbled, $K$ pebbles are on $\xp{i}$. 

To pebble $a^1_i$ requires pebbling $\di$. This requires removing all pebbles from the block except the $K$ pebbles on $\xp{i}$. By the normality of pebbling strategies, $\di$ are pebbled before everything else in the gadget besides $\xp{i}$ and $K$ pebbles remain on $\di$ until $\bi$ are pebbled. To pebble $\bi$ requires pebbling $\ci$ which subsequently requires pebbling $\nxp{i}$. To pebble $\nxp{i}$ requires removing all pebbles on the gadget except those on $\xp{i}$ and $\di$. Therefore, $\nxp{i}$ are pebbled immediately after $\di$ and $K$ pebbles remain on $\nxp{i}$ or $\nx{i}$ until $\ci$ are pebbled, which happens before $\bi$ are pebbled. By normality, all pebbles except the ones on $\nxp{i}$ are removed from the connecting pyramids as soon as $\nxp{i}$ are pebbled. Let $t_2$ be the time the pebbles on the aforementioned pyramids are removed. Let $t_3$ be the first time after $t_2$ that $\qp{i+1}$ is pebbled. 

At $t_2$, there are pebbles on $\xp{i}$, $\di$, and $\nxp{i}$. $K$ pebbles must remain on $\xp{i}$ and $\di$, each, until $t_3$. Furthermore, $K$ pebbles must remain on either $\nxp{i}$ or $\nx{i}$ until $t_3$. First, suppose $K$ pebbles remain on $\nxp{i}$ from $t_2$ to $t_3$. The configuration at $t_2$ is in $N_{i + 1}$ with a double false assignment to $x_i$, and none of the pebbles on the graph at $t_2$ can be removed until $t_3$. By the induction hypothesis, we pebble $\qp{i + 1}$. $Q_{i}$ can subsequently be pebbled with $3K$ pebbles that remained on the block up till $t_3$. Therefore, $\q{i}$ can be pebbled with $s_i$ additional pebbles with moving any of the pebbles in the $N_i$ configuration. Since $Q_{i+1}x_{i+1} \cdots Q_n x_n F(e_1, \cdots, e_{2i-2}, False, False)$ is true then $\forall x_i Q_{i+1}x_{i+1} \cdots Q_n x_n F(e_1, \cdots, e_{2i-2})$ is also true. 

In the case when the $K$ pebbles on $\nxp{i}$ do not remain on $\nxp{i}$ until $t_3$, we argue that $\qp{i+1}$ must be pebbled twice, once with a false assignment to $x_i$ and then with a true assignment to $x_i$. Either $\nxp{i}$ or $\nx{i}$ must be pebbled from $t_2$ to $t_3$. The only successors of $\x{i}$ are $\fii$ and $\fii$ cannot be pebbled before $t_3$. Therefore, by regularity of pebbling strategies, we can arrange to move the $K$ pebbles on $\nxp{i}$ to $\nx{i}$ in the time range $[t_2, t_2 + K^2]$ (and $t_3 > t_2 + K^2$) where they remain until $t_3$. The configuration at $t_2 + K^2$ is that $N_{i + 1}$ is assigned to the false assignment of $x_i$ and none of the pebbles on the graph at $t_2 + K^2$ can be removed until $t_3$ by normality. By the induction hypothesis, $Q_{i + 1}x_{i + 1} \cdots Q_n x_n F(e_1, \cdots, e_{2i - 2}, False, True)$ is true. 

At $t_3$, there are pebbles on $\di$, $\nx{i}$, $\xp{i}$ and $\qp{i + 1}$. Vertices $\q{i}$, $\ai$, $\bi$, $\ci$, $\fii$, and $\gi$ are vacant because they cannot be pebbled before $\qp{i + 1}$ are pebbled. Vertices $\nxp{i}$ couldn't have been repebbled between $t_2 + K^2$ and $t_3$ since $4K$ pebbles are fixed on $\di$, $\nx{i}$, $\xp{i}$, and the paths from $\pii{0}$ to $q_1$ during that interval; thus $\nxp{i}$ and, by normality, the pyramids connected to the vertices cannot be pebbled in the interval. It does not matter whether there exists pebbles on $\x{i}$ at $t_3$. We now show that immediately after $t_3$, a configuration in $N_{i + 1}$ with a true assignment to $x_i$ is created, and that $\qp{i + 1}$ must be repebbled while the pebbles in the configuration are fixed. 

By frugality, the pebbles on $\qp{i +1}$ at $t_3$ remains until either $\ci$ or $\gi$ are pebbled. Vertices $\qp{i + 1}$ cannot contain pebbles until $\gi$ are pebbled since to pebble $\gi$ requires all but $2K$ pebbles on the quantifier block or on the pyramids underneath $\nxp{i}$. $K$ pebbles are fixed on $\xp{i}$, $\x{i}$, or $\fii$ and $K$ pebbles are fixed on $\di$, $\bi$, or $\ai$ until $\q{i}$ is pebbled. Thus, the $K$ pebbles on $\qp{i + 1}$ at $t_3$ remain until $\ci$ are pebbled and are removed before $\gi$ are pebbled. Since $\nx{i}$ are pebbled at $t_3$ we can rearrange the strategy so that the $K$ pebbles from pebbling $\qp{i+1}$ are moved to $\ci$ by time $t_3 + K$. 

Now the only successors of $\ci$ and $\bi$ are $\bi$ and $\ai$, respectively. Since $\di$ and $\xp{i}$ both contain pebbles at $t_3 + K$, we can rearrange the strategy so that the pebbles on $\ci$ are moved to $\bi$ at $t_3 + 2K$ and to $\ai$ at $t_3 + 3K$. $K$ pebbles must remain on $\ai$ until $\q{i}$ are pebbled. Since $\ai$ are only pebbled once after $t_0$ and before $\q{i}$ are pebbled and are the only successors of $\xp{i}$ aside from $\x{i}$, we can further rearrange the strategy so that the pebbles on $\xp{i}$ are moved to $\x{i}$ at $t_3 + 3K + K^2$. 

At $t_3 + 3K + K^2$, $\ai$ contains $K$ pebbles that will remain until $\q{i}$ are pebbled, and $\x{i}$ contain $K$ pebbles that remain until $\fii$ are pebbled. Vertices $\nxp{i}$ must be repebbled before $\fii$ are pebbled, which must happen before $\q{i}$ are pebbled. To pebble $\nxp{i}$, by Lemma~\ref{lem:quant}, requires all the pebbles from this block except the ones on $\ai$ and $\x{i}$, so by normality $\nxp{i}$ are first pebbled after $t_3 + 3K + K^2$, and are each only pebbled once before $\fii$ are pebbled. Let $t_4$ be the time all the pebbles except the ones on $\nxp{i}$ are removed from the pyramids under the nodes where $t_4 > t_3 + 3K + K^2$. At $t_4$, there are pebbles on $\ai$, $\x{i}$, and $\nxp{i}$ and nowhere else on the $i$-th quantifier block. This configuration is in $N_{i + 1}$ with a true assignment to $x_i$ and none of the pebbles on the graph at $t_4$ can be removed until after $\qp{i+1}$ are repebbled. By the induction hypothesis, $Q_{i + 1}x_{i + 1} \cdots Q_n x_n F(e_1, \cdots, e_{2i - 2}, True, False)$ is true. Therefore, $\forall x_i Q_{i +1}x_{i + 1} \cdots Q_{n}x_n F(e_1, \cdots, e_{2i-1})$ is true. This concludes the inductive step for the universal quantifier. 

\item Suppose $Q_i = \exists$. Suppose $t_1$ is a time that $\xp{i}$ is pebbled. By frugality, each of $\ai$, $\bi$, $\ci$, $\di$, $\ei$ and $\qp{i + 1}$ are pebbled at most once after $t_1$ and before $\q{i}$ are pebbled. Exactly as in the case of the universal quantifier, normality implies that $\xp{i}$ are only pebbled once after $t_1$ and before $\q{i}$ are pebbled, and are pebbled before anything else happens. $K$ pebbles remain on $\xp{i}$ or $\x{i}$ until $\ai$ are pebbled, and $K$ pebbles remain on $\xp{i}$ or $\nxp{i}$ until $\bi$ are pebbled. To pebble $\ai$ require pebbling $\di$, which require removing all pebbles from the block except the ones on $\xp{i}$. Thus, $\di$ are pebbled before anything else except $\xp{i}$, and pebbles remain on $\di$ until $\ci$ are pebbled. 

To pebble $\ci$ require pebbling $\ei$ and hence $\fii$. To pebble $\fii$ requires removing all pebbles from this block except those on $\xp{i}$ and $\di$. Thus, $\fii$ are pebbled only once before $\ei$ are pebbled, and this happens immediately after $\di$ are pebbled. $K$ pebbles remain on $\fii$, $\nxp{i}$, or $\nx{i}$ until $\ei$ are pebbled. The only successors of $\fii$ are $\nxp{i}$, and $K$ pebbles remain on $\xp{i}$ until $\nxp{i}$ are pebbled, so we can rearrange the strategy so that the first move after picking up the pebbles on the pyramids underneath $\fii$ is to move the pebbles on $\fii$ to $\nxp{i}$. Let $t_2$ be the time of this move, and let $t_3$ be the time $\qp{i+1}$ are pebbled. Note that since $\fii$ are not repebbled between $t_2$ and $t_3$, neither are $\nxp{i}$. At $t_2$, there are pebbles on $\xp{i}$, $\nxp{i}$, and $\di$ and until $t_3$, there must be pebbles on $\xp{i}$ or $\x{i}$, $\xp{i}$ or $\nxp{i}$, $\nxp{i}$ or $\nx{i}$, and $\di$. 

Now we consider $3$ cases. Suppose that the pebbles on $\xp{i}$ are removed before $t_3$. Since the only successors of $\xp{i}$ are $\x{i}$ and $\nxp{i}$ and $\nxp{i}$ is not repebbled before $t_3$, we can rearrange the strategy so that the pebbles on $\xp{i}$ are moved to $\x{i}$ at $t_2 + K^2$. The configuration at $t_2 + K^2$ is then in $N_{i + 1}$ with the true assignment to $x_i$, and none of the pebbles can be removed until $t_3$. By the induction hypothesis, $Q_{i + 1}x_{i + 1} \cdots Q_n x_n F(e_1, \cdots, e_{2i - 2}, True, False)$ is true.

Suppose in the second case that $K$ pebbles remain on $\xp{i}$ until $t_3$, and the pebbles on $\nxp{i}$ are removed before $t_3$. We can rearrange the strategy so that the pebbles on $\nxp{i}$ are moved to $\nx{i}$ at $t_2 + K^2$. The configuration at $t_2 + K^2$ is in $N_{i + 1}$ with the false assignment to $x_i$, and no pebble can be removed until $t_3$. By the induction hypothesis, $Q_{i + 1} x_{i +1} \cdots Q_n x_n F(e_1, \cdots, e_{2i -2}, False, True)$ is true.

Finally, suppose that pebbles remain on $\xp{i}$ and $\nxp{i}$ until $t_3$. The configuration at $t_2$ is in $N_{i + 1}$ with a double false assignment to $x_i$, and no pebble is removed until $t_3$. By the induction hypothesis, $Q_{i + 1}x_{i+1} \cdots Q_n x_n F(e_1, \cdots, e_{2i - 2}, False, False)$ is true. 

In each of the above cases, $\exists x_i Q_{i+1}x_{i+1} \cdots Q_n x_n F(e_1, \cdots, e_{2i-2})$ is true. This completes the inductive step for an existential quantifier, and the proof of the claim. 
\end{enumerate}
\end{proof}

The proof of the above claim subsequently proves the lemma when $i = 1$. 
\end{proof}

We now prove that $3Ku + 5K$ pebbles are necessary to pebble an unsatisfiable instance of QBF. 

\begin{lemma}\label{lem:lower-bound-false}
Given $\sg{2}$ which is constructed from the provided QBF instance, $B = Q_1x_1 \cdots Q_u x_u F$, using our modified reduction in~\Cref{sec:modified} and~\Cref{sec:full-construction}, $B$ is unsatisfiable if and only if $\p(\sg{2}) \geq 3Ku + 5K$.
\end{lemma}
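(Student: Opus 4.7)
The lemma is an if-and-only-if, and one direction is essentially free. If $B$ is satisfiable, then \Cref{lem:upper-bound-true} yields a pebbling with at most $3Ku + 4K + 1$ pebbles, and since $K \ge 2$ (we will take $K$ to be a sufficiently large polynomial in $u$ and $c$) this is strictly less than $3Ku + 5K$; the contrapositive gives ``$\p(\sg{2}) \ge 3Ku + 5K \Rightarrow B$ unsatisfiable.'' So the substantive direction is the forward one: assuming $B$ is unsatisfiable, show that every valid strategy uses at least $3Ku + 5K$ pebbles.

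The plan for the forward direction is to mirror the inductive structure of \Cref{clm:forward}, but reversed. Given any pebbling strategy $\s$, first invoke \Cref{cor:regular-normal} to replace it by a normal and regular strategy $\s'$ without increasing the pebble count. By \Cref{lem:quant}, whenever any clause gadget is first touched, at least $3Ku$ pebbles are locked onto the $u$ quantifier blocks (exactly $3K$ on each in some configuration corresponding to one of the three truth assignments $\{True, False, \text{double-}False\}$ described in \Cref{sec:modified}). Thus only $\p(\sg{2}) - 3Ku$ pebbles are available for the clause gadgets. Now run induction on $i$ from $1$ to $u+1$, precisely analogous to \Cref{clm:forward}: for each normal-and-regular strategy that successfully pebbles $\q{i}$ from a configuration in $N_i$, either the configuration at some moment lies in $N_{u+1}$ corresponding to a full truth assignment, or pebbling $\qp{i+1}$ forces a recursion into both sub-configurations of the next universal quantifier (and at least one sub-configuration of the next existential). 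Because $B$ is unsatisfiable, no matter which existential choices the strategy makes at each level, the recursion must eventually reach a leaf configuration in $N_{u+1}$ where $F$ evaluates to false, i.e., at least one clause $C_j$ has all three of its literals set to $False$ in the current variable gadgets. At that moment, by \Cref{lem:clause}, pebbling $C_j$ while the literal gadgets are fixed requires $5K$ additional pebbles beyond the $3Ku$ locked on the quantifier blocks, giving $\p(\sg{2}) \ge 3Ku + 5K$.

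The main obstacle is making the ``at some moment'' argument rigorous: one must show that during any normal and regular strategy the sequence of quantifier-block configurations actually visits a leaf of the QBF evaluation tree. This follows from the fact that each universal block $Q_i = \forall$, according to the construction in \Cref{fig:modified-universal}, cannot have $\q{i}$ pebbled unless $\qp{i+1}$ is pebbled in both the $x_i = True$ and $x_i = False$ sub-configurations of $N_{i+1}$ (or a single double-$False$ configuration, which only strengthens the argument since double-$False$ cannot satisfy a clause that $x_i = True$ alone would); each existential block similarly forces at least one sub-configuration. Unrolling this across all $u$ quantifiers, the set of configurations visited by $\s'$ at the ``clause-pebbling moments'' corresponds to a subtree of the QBF evaluation tree that is universal-complete and existential-nonempty, which by definition of QBF satisfiability contains a falsifying leaf whenever $B$ is unsatisfiable. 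Once we pin down that falsifying leaf, \Cref{lem:clause} (applied to the clause $C_j$ that fails under that assignment, which is not the duplicate first clause since we can choose $j \ge 2$ by the fact that the duplicated clause always admits the easy $4K+1$ pebbling) immediately yields the desired $5K$ lower bound on top of the locked $3Ku$.
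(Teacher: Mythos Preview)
Your proposal is correct and follows essentially the same route as the paper: the easy direction is the contrapositive of \Cref{lem:upper-bound-true}, and the hard direction combines \Cref{lem:quant} (to lock $3Ku$ pebbles on the quantifier blocks) with \Cref{lem:clause} (to force $5K$ on a falsified clause). The paper's own proof is only a few lines and simply cites these two lemmas; you have chosen to spell out explicitly the inductive argument guaranteeing that a falsifying leaf of the QBF evaluation tree is actually visited, which the paper leaves implicit.

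Two small remarks. First, \Cref{lem:quant} is stated under the hypothesis that the strategy uses fewer than $3Ku + 5K$ pebbles, so your forward direction should be framed as a proof by contradiction (assume $\p(\sg{2}) < 3Ku + 5K$, invoke \Cref{lem:quant}, then reach a contradiction via \Cref{lem:clause}); as written you invoke the $3Ku$-locking unconditionally. Second, your induction ``from $1$ to $u+1$ analogous to \Cref{clm:forward}'' is really re-deriving the second (unnamed) claim inside the proof of \Cref{lem:upper-bound-true}, which already establishes that any normal/regular strategy pebbling $q_i$ with $s_i$ pebbles from an $N_i$ configuration witnesses truth of the residual formula; the paper simply reuses that machinery rather than restating it, so you could shorten your argument by citing that claim directly.
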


\begin{proof}
We first prove that if $B$ is unsatisfiable, then the number of pebbles necessary to pebble the modified construction requires at least $3Ku + 5K$ pebbles.
By Lemma~\ref{lem:quant}, there does not exist a frugal strategy such that $3Ku$ pebbles are not assigned to the $u$ quantifier blocks when the clauses are pebbled and the number of pebbles used is less than $3Ku + 5K$. Therefore, there exists only $s - 3Ku \leq 4K + 1$ pebbles remaining to pebble the clauses assuming the player is given $3Ku + 4K + 1$ to begin with to pebble $\sg{2}$. 
Provided $3Ku$ pebbles are on the quantifier blocks, Lemma~\ref{lem:clause} proves that at least $5K$ additional pebbles are needed to pebble one or more false clauses given that $B$ is unsatisfiable.  

Now we prove that if the number of pebbles necessary to pebble $\sg{2}$ is at least $3Ku + 5K$, then $B$ is unsatisfiable. This proof is given by contradiction which immediately follows from Lemma~\ref{lem:upper-bound-true}. 
\end{proof}

\subsubsection{Proof of Inapproximability}\label{sec:inapprox-proof}

Using Lemmas~\ref{lem:upper-bound-true} and~\ref{lem:lower-bound-false}, we prove that it is PSPACE-hard to approximate the minimum number of black pebbles needed given a DAG, $\sg{2}$, to an additive $n^{1/3-\epsilon}$ for all $\epsilon > 0$.  

\begin{lemma}\label{lem:total-nodes}
The number of nodes in $G$ is $O(K^3(u^3 + c))$.
\end{lemma}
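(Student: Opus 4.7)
The plan is to bound the size of $\sg{2}$ by counting vertices contributed by each category of gadget in the construction of \Cref{sec:full-construction} and then summing. First, I would partition $V(\sg{2})$ into the following groups: (i) the $u$ quantifier blocks, including their internal pyramids and the auxiliary node sets $\ai,\bi,\ci,\di,\ei,\fii,\gi$; (ii) the $2u$ literal road graphs that route pebbles from the variable gadgets to the clause gadgets; (iii) the $c$ clause pyramids of height $2K+1$; and (iv) the height-$K$ pyramids placed beneath each $q_i$ and each $p_j$, together with the height-$3$ pyramids inserted to replace every indegree-$3$ vertex.

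Next I would bound each group separately. Since $s_i = s_1 - 3K(i-1) = O(Ku)$, every internal height-$s_i$ pyramid in a quantifier block contributes $O(s_i^2) = O(K^2 u^2)$ vertices; a quantifier block contains only $O(K)$ such pyramids, plus the auxiliary sets of size $O(K)$, so each block contributes $O(K^3 u^2)$ vertices and group~(i) contributes $O(K^3 u^3)$ in total. Each literal road graph has width $K$ and length $O(c)$, so group~(ii) contributes $O(Kuc)$ overall. Each clause pyramid has $O(K^2)$ vertices, giving $O(K^2 c)$ for group~(iii). The height-$K$ pyramids beneath the $q_i$'s and $p_j$'s contribute $O((u+c)K^2)$, and the replacement of each indegree-$3$ vertex by a constant-size pyramid only inflates the count by a constant factor, so group~(iv) contributes $O((u+c)K^2)$.

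Summing all four contributions yields
\begin{align*}
|V(\sg{2})| \;=\; O\!\bigl(K^3 u^3 + K^2 c + Kuc + (u+c)K^2\bigr).
\end{align*}
Since $K$ is chosen as a polynomial function of $u$ and $c$, in particular with $K \geq u$, the middle two terms are absorbed into $K^3 c$ and the last term is absorbed into $K^3 u + K^3 c$, so the whole expression collapses to the desired $O(K^3(u^3 + c))$.

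The main technical point I expect to need care is the per-block bound in group~(i): one must audit the modified universal and existential gadgets (\Cref{fig:modified-universal} and \Cref{fig:modified-existential}) to confirm that the number of internal height-$s_i$ pyramids is $O(K)$ per block rather than $\Omega(K^2)$, since it is precisely that count that keeps the dominant term at $K^3 u^3$ instead of a higher power of $K$. Everything else reduces to multiplying the per-pyramid size $O(h^2)$ by the height $h$ prescribed in \Cref{sec:full-construction}.
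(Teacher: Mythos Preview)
Your decomposition into gadget types and per-type counting is the same strategy the paper uses, and your bounds for groups~(i), (iii), and~(iv) are correct and match the paper's. The problem is group~(ii) together with the final absorption step.

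The road graphs are not routing structures of length $O(c)$ that connect variable gadgets to clauses; they \emph{are} the variable gadgets. In the modified construction each of the two length-$2$ paths in a GLT variable gadget is replaced by a width-$K$ road graph (see \Cref{fig:modified-variable}), whose outputs $\xj$ and $\nxj$ are then connected by single edges to the clause pyramids. The road graph's size is fixed and does not depend on $c$; the paper takes each width-$K$ road graph to have $K^3$ vertices, so the $2u$ road graphs contribute $O(K^3 u)$ in total, not $O(Kuc)$.

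This matters because your absorption of the $Kuc$ term relies on the assumption $K \ge u$, which is not part of the lemma's hypotheses: the construction allows $K$ to be any polynomial in $u$ and $c$, including a constant. With $K=1$ and $c$ large relative to $u$, the term $Kuc = uc$ is not $O(u^3 + c)$, so the final simplification fails as written. Once you replace the road-graph contribution by the correct $O(K^3 u)$, every term is already dominated by $K^3 u^3$ or $K^3 c$ with no assumption on $K$, and the lemma follows directly. Your flagged concern about the number of large pyramids per quantifier block is well placed; it is indeed $O(K)$, which is what keeps group~(i) at $O(K^3 u^3)$.
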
 

\begin{proof}
By construction of $G$ as defined in Sections~\ref{sec:modified} and~\ref{sec:full-construction}, we create $u$ variable gadgets, $u$ quantifier blocks, and $c$ clause gadgets. Each variable gadget contains $O(K^3)$ nodes since it contains two road graphs where each road graph with $K$ width contains $K^3$ nodes. Each quantifier block contains a variable gadget and the pyramids that connect to the variable gadget and $O(K)$ other nodes. The total size of the pyramids is at most $\sum_{i = 1}^{3Ku + 4K + 1} i^2 = O(K^3u^3)$. Therefore, the total size of all the quantifier blocks is $O(K^3 u^3)$ since the total size of the quantifier blocks without the connecting pyramids is $O(K^3 u)$ and the total size of all the pyramids is $O(K^3 u^3)$. 

The clauses each have size $O(K^2)$ since the clauses solely consist of a constant number of pyramids of height $O(K)$. Therefore, the total size of all the clauses is $O(K^2 c)$. 

Thus, $G$ has $O(K^3(u^3 + c))$ number of nodes in total. 
\end{proof}

\begin{theorem}[Restatement of Theorem~\ref{thm:approx-standard}]
The minimum number of pebbles needed in the standard pebble game on DAGs with maximum indegree $2$ is PSPACE-hard to approximate to additive $n^{1/3-\epsilon}$ for $\epsilon > 0$.
\end{theorem}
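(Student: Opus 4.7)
The plan is to combine the gap established by Lemmas \ref{lem:upper-bound-true} and \ref{lem:lower-bound-false} with the size bound from Lemma \ref{lem:total-nodes}, by choosing the road-graph width parameter $K$ to be a sufficiently large polynomial in $u$ and $c$. Concretely, given $\epsilon > 0$, I would set $K = (u+c)^{\alpha}$ for an integer $\alpha$ chosen large enough (specifically $\alpha \geq \lceil 1/(3\epsilon)\rceil$), and then instantiate the construction of \Cref{sec:full-construction} with this value of $K$. The entire construction is polynomial in $u$ and $c$, so this is a polynomial-time reduction from QBF to (the approximation version of) standard pebbling on DAGs of maximum indegree $2$.

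The two lemmas already give a clean pebble-count gap: if $B$ is satisfiable then $\p(\sg{2}) \leq 3Ku + 4K + 1$, while if $B$ is unsatisfiable then $\p(\sg{2}) \geq 3Ku + 5K$. Hence an additive-$(K-1)$ approximation to $\p(\sg{2})$ would distinguish the two cases and therefore decide QBF. It remains only to verify that, for the above choice of $K$, the gap $K-1$ exceeds $n^{1/3 - \epsilon}$, where $n$ is the total number of vertices in $\sg{2}$.

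By Lemma \ref{lem:total-nodes} we have $n = O(K^{3}(u^{3} + c))$, so
\begin{equation*}
n^{1/3 - \epsilon} = O\!\left( K^{1 - 3\epsilon}\,(u+c)^{(1 - 3\epsilon)\cdot \max(1,\,1/3)} \right) = O\!\left( (u+c)^{\alpha(1-3\epsilon) + (1-3\epsilon)} \right).
\end{equation*}
Since $K = (u+c)^{\alpha}$, the condition $K - 1 \geq n^{1/3-\epsilon}$ reduces to $\alpha \geq \alpha(1-3\epsilon) + (1 - 3\epsilon)$, i.e.\ $3\alpha\epsilon \geq 1 - 3\epsilon$, which holds for all sufficiently large $u + c$ by our choice of $\alpha$. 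Thus for any fixed $\epsilon > 0$ the gap $K - 1$ dominates $n^{1/3 - \epsilon}$ asymptotically, and approximating $\p(\sg{2})$ within an additive $n^{1/3 - \epsilon}$ is PSPACE-hard.

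The main obstacle is not the arithmetic of this final step but rather ensuring that all of the preceding structural lemmas (normality/regularity via Corollary \ref{cor:regular-normal}, the ``$3K$ pebbles stay on each quantifier block'' invariant of Lemma \ref{lem:quant}, and the clause-pebbling dichotomy of Lemma \ref{lem:clause}) genuinely force the $K-1$ gap regardless of how large $K$ is taken to be; these have already been proved in full generality in $K$, so slotting in $K = (u+c)^{\alpha}$ preserves the reduction. I would conclude by noting that a hypothetical additive $n^{1/3-\epsilon}$-approximation algorithm, fed our constructed $\sg{2}$, would answer a PSPACE-complete problem in polynomial time, establishing Theorem \ref{thm:approx-standard}.
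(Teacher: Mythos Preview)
Your proposal is correct and follows essentially the same route as the paper: invoke Lemmas~\ref{lem:upper-bound-true} and~\ref{lem:lower-bound-false} for the $K-1$ gap, invoke Lemma~\ref{lem:total-nodes} for the $O(K^3(u^3+c))$ size bound, and then choose $K$ to be a suitable polynomial in $u,c$ (you use $(u+c)^{\alpha}$ with $\alpha\ge\lceil 1/(3\epsilon)\rceil$, the paper uses $\max(u,c)^{a}$ with $a=1/(3\epsilon)-1$) so that $K-1\ge n^{1/3-\epsilon}$. The only cosmetic difference is that the paper explicitly handles the need for $K\ge 2$ on small instances by duplicating variables and clauses, whereas you appeal to ``sufficiently large $u+c$''; both are fine.
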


\begin{proof}
From Lemmas~\ref{lem:upper-bound-true} and~\ref{lem:lower-bound-false}, the cost of pebbling a graph constructed from a satisfiable $B$ is at most $3Ku + 4K+ 1$ whereas the cost of pebbling a graph constructed from an unsatisfiable $B$ is at least $3Ku + 5K$. 

As we can see, the aforementioned reduction is a gap-producing reduction with a gap of $K-1$ pebbles. Then, all that remains to be shown is that for any $\epsilon > 0$, it is the case that $K \geq (K^3(u^3 + c))^{(1/3-\epsilon)}$. (Note that for $\epsilon > 1/3$, setting $K$ to any positive integer achieves this bound.) Suppose we set $K = \max(u, c)^a$ where $a > 0$. We show that $K = (K^3 [\max(u, c)]^3)^{(1/3 - \epsilon)}$ for some valid setting of $a$ for every $0 < \epsilon \leq 1/3$. Solving for $a$ in terms of $\epsilon$ gives us $a = \frac{1}{3\epsilon} - 1 \geq 0$ when $\epsilon \leq 1/3$ and is finite when $\epsilon > 0$. 

For values of $a \geq 0$, we can duplicate the clauses and variables gadgets so that $u$ and $c$ are large enough such that $K = \max(u, c)^a \geq 2$. Let $d = \max(u, c)$. Then, we need $d$ to be large enough so that $d^a \geq 2$ (i.e. we want $d^a$ to be some integer). Then, we can set $d \geq 2^{1/a}$. Thus, we can duplicate the number of variables and clauses so that $d \geq 2^{\frac{3\epsilon}{1-3\epsilon}}$. (Note that for cases when $a$ is very small, e.g. $a = 0$ when $\epsilon = 1/3$, any constant $K$ would suffice.)

Therefore, for every $\epsilon > 0$, we can construct a graph with a specific $K$ calculated from $\epsilon$ such that it is PSPACE-hard to find an approximation within an additive $n^{1/3-\epsilon}$ where $n$ is the number of nodes in the graph.  
\end{proof}

\section{Hard to Pebble Graphs for Constant $k$ Pebbles}     
\label{sec:hard-to-pebble}
It is long known that the maximum number of moves necessary to pebble any graph with constant $k$ pebbles is $O(n^k)$. (Note that the maximum number of moves necessary to pebble any graph is either $O(n^{k-1})$ or $O(n^k)$ depending on whether or not sliding is allowed. Here, we allow sliding in all of our games. The bound of $O(n^{k-1})$ proven in~\cite{Nor15} is one for the case when sliding is not allowed.) The upper bound of $O(n^k)$ for any constant $k$ number of pebbles submits to a simple combinatorial proof adapted from \cite{Nor15} to account for sliding. However, to the best of the author's knowledge, examples of such families of graphs that require $O(n^k)$ moves to pebble using $k$ pebbles did not exist until very recently in an independent work~\cite{ADNV17}. In this section, we present an independent, simple to construct family of graphs
that require $\Theta(n^k)$ time for constant $k$ number of pebbles in both the standard and black-white pebble games. We further reduce
the indegree of nodes in this family of graphs to $2$ and show that our results
still hold. Furthermore, we show this family of graphs to exhibit a steep time-space trade-off (from exponential in $k$ to linear) even when $k$ is not constant. 
Such families of graphs could potentially have useful applications in cryptography in the domain of proofs of space and memory-hard functions~\cite{AlwenS15}.

We construct the following family of graphs, $\mathbb{H}$, below and show that for constant $k$ pebbles, the number of steps it takes to pebble the graph
$H_{n, k} \in \mathbb{H}$ with $k$ pebbles and $n$ nodes is $\Omega(n^k)$. We also show a family of graphs, $\shf{2}$ with indegree $2$ that exhibits the same asymptotic tradeoff. 

We construct the family of graphs $\mathbb{H}$ with arbitrary indegree in the following way. 

\begin{definition}\label{def:near-optimal-tradeoff}
Given a set of $n$ nodes and maximum number of pebbles $k$ where $k < \sqrt{n}$, we lexicographically order the nodes (from $1$ to $n$) and create the following set of edges between the nodes where directed edges are directed from $v_i$ to $v_j$ where $i < j$. Let $[n]$ be the ordered set $[1, \dots, n]$:
\begin{enumerate}
\item $v_i$ and $v_{i+1}$ for all $i \in [k-1, n]$ 
\item $v_i$ and $v_j$ for all $i \in [l-1]$ for all $2 \leq l \leq k$ and $j \in \left\{f(l) + 2r - 2\right\}$ for all $r \in [\frac{n-k}{2k}]$ where $f(l) = (k-1) + (l-1)(\frac{n-k}{k}) + 2$.
\item $v_i$ and $v_j$ for all $i = f(l) - 2$ and $j \in \left\{f(l) + 2r - 1\right\}$ for all $r \in [0, \frac{n-k}{2k} - 1]$ where $l \in [1, k-1]$.
\end{enumerate}

The target node (the only sink) is $v_{n}$. Note that the sources in our construction are $v_j$ for all $j \in [1, k-1]$.
\end{definition}

Below is an example graph (Fig.~\ref{fig:hard-to-pebble-example}) in our family when $k = 5$ and $n = 54$. 

\begin{figure}[h]
\centering
\def\svgwidth{\columnwidth}
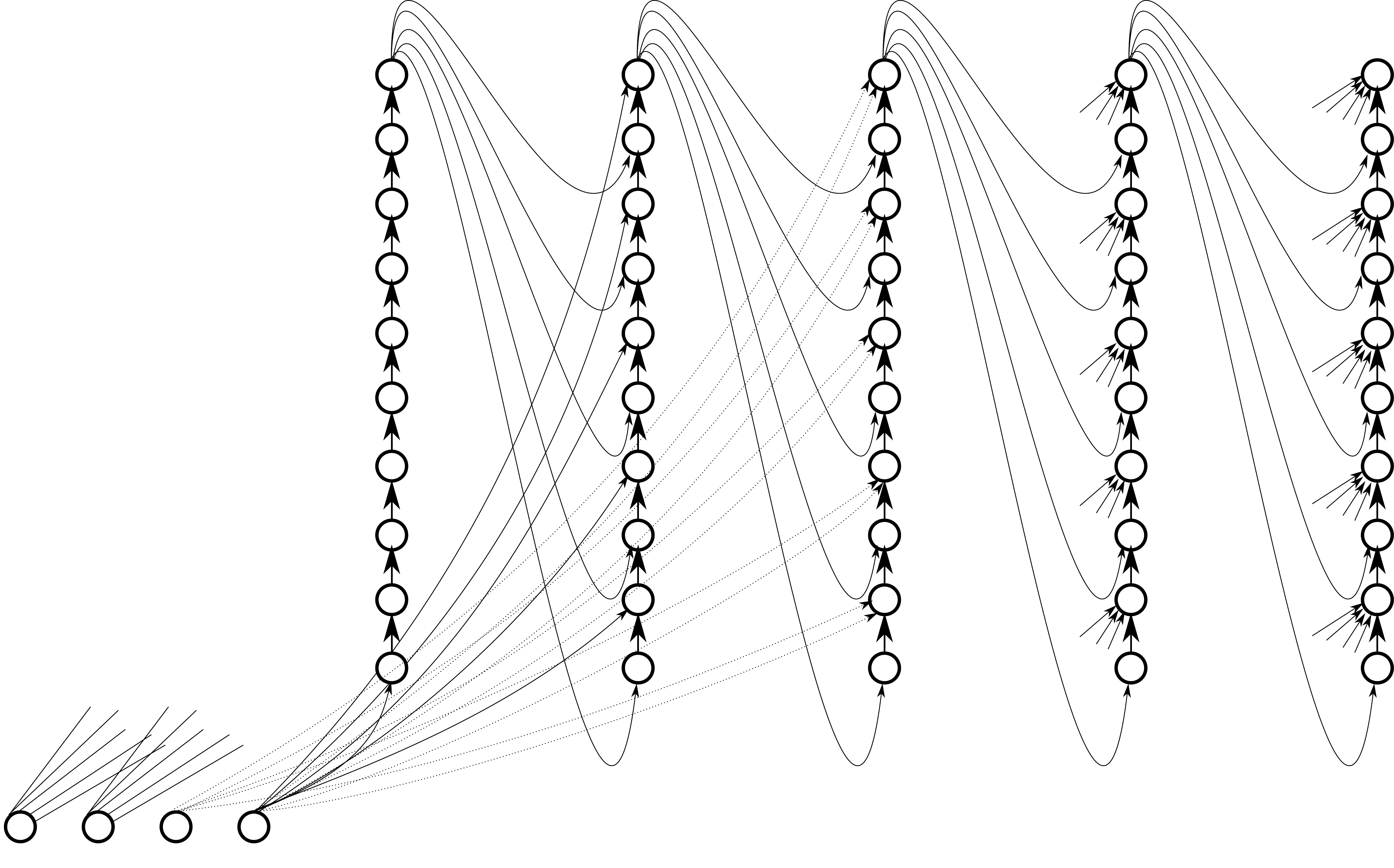
\caption{Example member of class of graphs where $k = 5$ and $n = 54$. One can make any graph of this class into an indegree-$2$ graph by replacing the input vertices by pyramids and performing the modifications described in Section~\ref{sec:hard-modifications}.}\label{fig:hard-to-pebble-example}
\end{figure}

We now prove the time bound for this family of graphs $\hf$ for all $k < \sqrt{n}$.

To prove the minimum number of pebbles necessary to pebble the graph, it is sufficient to study the number of blocked paths in any graph $G$~\cite{Nor15}. We define a \emph{blocked path} as in~\cite{Nor15}.

\begin{definition}[Blocked Paths \cite{Nor15}]\label{def:blocking}
A set of vertices, $U$, blocks a path, $P$, if $U \cap P \neq \emptyset$. $U$ blocks a set of paths $\mathbb{P}$ if $U$ blocks $P$ for all $P \in \mathbb{P}$.
\end{definition}

\begin{lemma}\label{lem:min-k-unbounded}
The minimum number of pebbles necessary to pebble $\sh{k} \in \hf$ is $k$.
\end{lemma}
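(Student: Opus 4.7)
The plan is to prove matching upper and lower bounds on $\p(\sh{k})$.

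\textbf{Lower bound.} The key observation is that the maximum indegree of any vertex in $\sh{k}$ is exactly $k$: a type~2 vertex $v_{f(l)+2r-2}$ has $l-1$ predecessors drawn from the sources $v_1,\dots,v_{l-1}$ together with its chain predecessor $v_{f(l)+2r-3}$, giving indegree $l$, which is maximized at $l=k$. Since the target $v_n$ lies in the $l=k$ portion of the chain (that is, $f(k)\le n$), any valid pebbling strategy must pebble some type~2 vertex of indegree $k$ en route to $v_n$. By rule~4 (or rule~5) of the standard pebble game, at the step immediately before this vertex is pebbled, all $k$ of its distinct predecessors must already carry pebbles simultaneously. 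Hence $\p(\sh{k})\ge k$.

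\textbf{Upper bound.} I will exhibit an explicit sequential strategy that uses exactly $k$ pebbles, so that $\p(\sh{k})\le k$. The strategy sweeps a single moving pebble forward along the chain while maintaining a small set of auxiliary pebbles equal to precisely those vertices needed as non-chain predecessors for the current block. Concretely, while traversing the chain segment associated with parameter $l$, the strategy holds pebbles on the $l-1$ sources $v_1,\dots,v_{l-1}$ (which are the type~2 predecessors feeding every indegree-$l$ vertex in block $l$) and, when $l\le k-1$, on the boundary vertex $v_{f(l)-2}$ (which is the type~3 predecessor feeding every indegree-$2$ vertex in block $l$). Together with one moving pebble on the chain, this amounts to $l+1$ pebbles during block $l\le k-1$ and exactly $k$ pebbles during the final block $l=k$, where no boundary is needed because type~3 edges exist only for $l\in[1,k-1]$. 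Every chain advance is then a legal slide of the moving pebble, since the auxiliary set already contains every other predecessor of the destination vertex.

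\textbf{The main subtlety} is the block-to-block transition, where the auxiliary set must be updated from $\{v_1,\dots,v_{l-2},v_{f(l-1)-2}\}$ to $\{v_1,\dots,v_{l-1},v_{f(l)-2}\}$ without exceeding $k$ pebbles at any instant. For a transition from block $l-1$ into block $l\le k-1$, the strategy first drives the moving pebble to rest on $v_{f(l)-2}$ so that this pebble can be repurposed as the new boundary, then discards the now-obsolete boundary pebble $v_{f(l-1)-2}$, then places the new source pebble $v_{l-1}$, and finally places a fresh moving pebble on $v_{f(l)-1}$ (whose sole predecessor $v_{f(l)-2}$ is already held); the instantaneous peak is $l+1\le k$. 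The transition into the final block $l=k$ is handled specially by \emph{sliding} the moving pebble from $v_{f(k)-2}$ to $v_{f(k)-1}$ rather than placing a new pebble, which is legal because block $k$ has no type~3 edges so $v_{f(k)-2}$ need not be preserved; this keeps the peak at exactly $k$. A short case check verifies every placement and slide against rules~4 and~5 and that the pebble count never exceeds $k$, which combined with the lower bound yields $\p(\sh{k})=k$.
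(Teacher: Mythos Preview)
Your proposal is correct and considerably more thorough than the paper's own argument. The paper's proof is a single sentence: it observes that the graph has (in)degree $k$ and concludes that at least $k$ pebbles are necessary. That is, the paper establishes only the lower bound $\p(\sh{k})\ge k$ and leaves the matching upper bound implicit (it is effectively supplied later by the existence of the $\Theta((\tfrac{n-k}{2k})^k)$-move strategy in Theorem~\ref{thm:nk-moves-bound}).

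Your lower-bound half is the same indegree observation as the paper's. Your upper-bound half is genuinely additional content: an explicit $k$-pebble strategy, with a careful treatment of the block transitions and the special slide at the entry to block $l=k$ to avoid a momentary $(k{+}1)$-pebble spike. This is a sound argument that the paper simply omits. The only thing to note is that for the purposes the lemma serves in the paper (feeding into the time lower bound), the one-line indegree argument already suffices; your extra work makes the lemma self-contained as an equality but is not strictly needed downstream.
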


\begin{proof}
The degree of the graph $\sh{k}$ is $k$, therefore, at least $k$ pebbles are necessary to pebble $\sh{k} \in \hf$.
\end{proof}

\begin{theorem}\label{thm:nk-moves-bound}
The number of moves necessary to pebble $\sh{k}$ is $\Theta((\frac{n-k}{2k})^k)$ for $k < \sqrt{n}$.
\end{theorem}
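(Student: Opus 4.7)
The plan is to establish matching upper and lower bounds of order $((n-k)/(2k))^k$. For the upper bound I would exhibit an explicit pebbling strategy exploiting the segmented structure of $\sh{k}$. Let $f(l) = (k-1) + (l-1)(n-k)/k + 2$ partition the path $v_{k}, v_{k+1}, \dots, v_n$ into $k$ segments of length roughly $(n-k)/k$, where the $l$-th segment contains $(n-k)/(2k)$ ``gate'' vertices of the form $v_{f(l)+2r-2}$, each having the sources $v_1, \dots, v_{l-1}$ among its predecessors. A recursive strategy pebbles the gates of segment $l$ one by one: to pebble the $r$-th gate, place pebbles on $v_1, \dots, v_{l-1}$ (free placements on sources), slide the ``frontier'' pebble along the path, and then recursively pebble the segment-$(l-1)$ preconditions required to cross to the next gate. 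If $T_l$ denotes the moves needed to process segment $l$, one obtains the recurrence $T_l = O\bigl((n-k)/(2k) \cdot T_{l-1}\bigr)$ with $T_1 = O(n/k)$, yielding $T_k = O(((n-k)/(2k))^k)$ overall.

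For the lower bound, I would argue by induction on the segment index $l$, using the blocking framework of Definition~\ref{def:blocking}. Call a time step a \emph{fresh gate placement of level $l$} if a pebble is placed on a gate $v_{f(l)+2r-2}$ not pebbled in the previous step. Observe that pebbling $v_n$ requires $\Omega((n-k)/(2k))$ fresh gate placements at each level $l \in [2,k]$, since the main path forces every gate to carry a pebble at some point. The inductive claim is that the number of moves between any two successive fresh gate placements at level $l$ is at least $\Omega(((n-k)/(2k))^{l-1})$. The key structural observation is that pebbling a level-$l$ gate simultaneously requires pebbles on $v_1,\dots,v_{l-1}$, on the path predecessor, and on the new gate, occupying $l+1$ of the $k$ pebbles; the remaining $k-l-1$ pebbles are insufficient to both preserve the progress of earlier segments and set up the next level-$l$ gate, so by a case analysis on which pebble survives and which is recomputed, some level-$(l-1)$ configuration must be fully re-established, which by the inductive hypothesis costs $\Omega(((n-k)/(2k))^{l-1})$ moves. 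Multiplying through the $k$ levels gives the desired $\Omega(((n-k)/(2k))^k)$ bound.

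The hard part will be the lower bound, and specifically the step that forbids amortization across gate placements. A naive adversary might try to keep long-lived pebbles on carefully chosen internal nodes so that multiple gates can be crossed without re-establishing earlier segments. The specific interleaving of predecessors in Definition~\ref{def:near-optimal-tradeoff} (edges from $v_i,\ i \in [l-1]$ to the even-indexed gates in item~2, and from $v_{f(l)-2}$ to the odd-indexed gates in item~3) is what prevents this, because any ``shortcut'' pebble would have to simultaneously block paths from sources to two inequivalent gate types. Formalizing this will require a careful potential-function argument, perhaps tracking for each level $l$ the minimum cardinality of a blocking set separating $\{v_1,\dots,v_{l-1}\}$ from the remaining ungated portion of the segment, and showing that any strategy using fewer than the claimed number of moves would force this potential to remain strictly below $k$, contradicting Lemma~\ref{lem:min-k-unbounded}.
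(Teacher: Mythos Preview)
Your approach is essentially the paper's: both prove the bound by induction on the segment level, establishing the recurrence $T(l)\approx\frac{n-k}{2k}\,T(l-1)$ via the observation that pebbling each gate at level $l$ momentarily ties up all $k$ pebbles and leaves an unblocked source--target path through the endpoint of level $l-1$, which must therefore be repebbled. The paper's argument is in fact terser than your proposal---it does not separate upper and lower bounds, and it treats the amortization issue you flag only informally, asserting the repebbling claim directly from the blocking-path observation rather than via a potential function---so your instinct that this step deserves care is sound, but the core argument the paper gives is the one you outline before your potential-function paragraph.
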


\begin{proof}
Let $\s = \{P_0, \dots, P_{a}\}$ where $a = \Theta((\frac{n-k}{2k})^k)$. Suppose that at time $t$, there exists at least $k$ paths from sources to targets which are blocked by $k$ pebbles placed on the graph. By Lemma~\ref{lem:min-k-unbounded} and by construction of $\sh{k}$, we know $t$ must exist at some point in the pebbling of $\sh{k}$. Let $v_{i}$ be a degree-$k$ node to be pebbled. Let $t$ be the time step immediately before $v_i$ is pebbled. Then, $|P_t| = k$ and there exists $k$ blocked paths from each $u \in \pred{v_i}$ that is a source and $v_k$ to the target $v_n$. Each of the paths is blocked by a pebble on $\pred{v_i}$ at time $t$. 

Let $t_{now}  = t + 1$ be the time when $v_{i}$ is pebbled. After $v_i$ has been pebbled, there are two different paths going from the sources through $v_{i+1}$. However, none of the pebbles placed at time $t' \in [t, t_{now}]$ blocks the path from $v_k$ to $v_{n}$ that does not include $v_{i}$. Therefore, a pebble must be placed on this path to block it, resulting in repebbling $v_{(k-1) + (k-1)(\frac{n-k}{k})}$.

By induction on the level number, $1 \leq l \leq k$, where $T(1) = \frac{n-k}{k}$ is the base case when node $v_{(k-1) + \frac{n-k}{k}}$ is pebbled. We see by the above argument that the resulting number of moves is $T(l) = (\frac{n-k'}{2k'}) T(l-1) + l(\frac{n-k'}{2k'})$ where $k' = k$ is not a variable in the recursion (i.e. not changing) and we compute $T(k)$.  Therefore, 

\begin{align*}
T(k) = (\frac{n-k}{2k})^k + \sum_{i=0}^{k-1}(k-i)(\frac{n-k}{2k})^{i+1} \geq 2(\frac{n-k}{2k})^k= \Theta((\frac{n-k}{2k})^k).
\end{align*}
\end{proof}

We obtain the following when $k$ is constant:

\begin{corollary}[$\sh{k}$: $\Omega(n^k)$ moves bound for constant $k$]\label{cor:constant-k}
When $k$ is constant, pebbling $\sh{k} \in \hf$ using $k$ pebbles takes $\Omega(n^k)$ time.
\end{corollary}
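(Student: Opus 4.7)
The plan is to derive the corollary as an immediate specialization of Theorem~\ref{thm:nk-moves-bound}, which already establishes that pebbling $\sh{k}$ requires $\Theta\bigl((\tfrac{n-k}{2k})^k\bigr)$ moves whenever $k < \sqrt{n}$. The only thing left to do is to observe what happens to this expression when $k$ is a fixed constant independent of $n$, which is a routine asymptotic manipulation.

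First, I would invoke Theorem~\ref{thm:nk-moves-bound} to get the lower bound $\Omega\bigl((\tfrac{n-k}{2k})^k\bigr)$ on the number of moves; note that for constant $k$ the hypothesis $k<\sqrt{n}$ is automatically satisfied for all sufficiently large $n$, so the theorem applies. Next, since $k$ is a constant, both $2k$ and $k$ in the exponent are constants, so $(2k)^k$ is a constant factor. For $n\geq 2k$ we have $\tfrac{n-k}{2k}\geq \tfrac{n}{4k}$, and so
\[
\left(\frac{n-k}{2k}\right)^{\!k} \;\geq\; \left(\frac{n}{4k}\right)^{\!k} \;=\; \frac{n^k}{(4k)^k} \;=\; \Omega(n^k),
\]
where the last equality absorbs the constant $(4k)^{-k}$ into the $\Omega$-notation. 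Combining this with the lower bound from Theorem~\ref{thm:nk-moves-bound} yields the $\Omega(n^k)$ time bound claimed in the corollary.

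There is essentially no obstacle here: the theorem does the real work of counting repebblings via the recursion $T(l) = \tfrac{n-k}{2k}\,T(l-1) + l\cdot\tfrac{n-k}{2k}$, and the corollary is just the clean statement of what that bound means when $k$ is treated as a constant rather than as a function of $n$. The only minor care needed is to make explicit that the hidden constant in $\Omega(n^k)$ depends on $k$, which is harmless for constant $k$.
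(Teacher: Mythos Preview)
Your proposal is correct and matches the paper's approach exactly: the paper presents this corollary without an explicit proof, simply writing ``We obtain the following when $k$ is constant'' immediately after Theorem~\ref{thm:nk-moves-bound}, so the intended argument is precisely the direct specialization you spell out. Your explicit asymptotic manipulation $(\tfrac{n-k}{2k})^k \geq (\tfrac{n}{4k})^k = \Omega(n^k)$ for constant $k$ is the routine step the paper leaves implicit.
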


This result of itself partially answers a longstanding open question posed in~\cite{Nor15} whether a family of graphs with constant degree can have a number of moves, $\Omega(n^k)$, that meets the upper bound for constant $k$ number of pebbles. Now, we completely resolve this open question by proving that the bound also holds for the black-white pebble game using our independent construction from~\cite{ADNV17}.

Before we prove the result, we quickly give the rules for the black-white pebble game as promised in Section~\ref{sec:intro}.

\textbf{Black-White Pebble Game Rules:}\\\\
\noindent\fbox{%
    \parbox{\textwidth}{%
    \textsc{Black-White Pebble Game}
    
    \textbf{Input:} Given a DAG, $G = (V, E)$. Let $\pred{v} = \{u \in V: (u, v) \in E \}$. Let $S \subseteq V$ be the set of sources of $G$ and $T \subseteq V$ be the set of targets of $G$. Let $\s = \{P_0, \dots, P_{\tau}\}$, where $P_i = (B_i, W_i)$ ($B_t$ is the set of nodes with black pebbles and $W_t$ is the set of nodes with white pebbles), be a valid pebbling strategy that obeys the following rules where $P_0 = (\emptyset, \emptyset)$ and $P_{\tau} = (T, \emptyset)$. Let $\p(G, \s) = \max_{i \in [\tau]}\{|P_i|\}$ where $|P_i| = |B_i| + |W_i|$.\\
    
    \textbf{Rules:}
	\begin{enumerate}
		\item At most one pebble can be placed or removed from a node at a time.
		\item A black pebble can be placed on any source, $s \in S$. A white pebble can always be removed from a source.
		\item A black pebble can be removed from any vertex. A white pebble can be removed from a non-source vertex, $v$, at time $i$ if and only if $\pred{v} \in P_{i-1}$. 
		\item A black pebble can be placed on a non-source vertex, $v$, at time $i$ if and only if $\pred{v} \in P_{i-1}$. A white pebble can be placed on an empty vertex at any time.
		\item A black pebble can be moved from vertex $v$ to vertex $w$ at time $i$ if and only if $(v, w) \in E$ and $\pred{w} \in P_{i - 1}$. A white pebble can be moved from vertex $w$ to vertex $v$ if and only if $(v, w) \in E$ and $\pred{w} \backslash v \in P_{i-1}$ and $v \not\in P_{i-1}$.
	\end{enumerate}
	
	\textbf{Goal:} Determine $\min_{\s}\{\p(\g, \s)\}$ using a valid strategy $\s$.
	}%
}%

\begin{theorem}[$\sh{k}$: $\Omega(n^k)$ moves bound black-white pebble game]\label{thm:no-moves-black-white}
The number of moves necessary to pebble $\sh{k}$ is $\Theta((\frac{n-k}{2k})^k)$ for $k < \sqrt{n}$ using the rules of the black-white pebble game.
\end{theorem}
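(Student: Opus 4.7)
The plan is to extend the blocked-path counting argument of \Cref{thm:nk-moves-bound} to the black-white setting by replacing ``the moment a vertex is pebbled'' with a \emph{critical moment} appropriate to the richer rule set. For any non-source vertex $v$ and any black-white strategy $\s$ that pebbles $\sh{k}$, the rules force some instant at which every element of $\pred{v}$ simultaneously carries a pebble (black or white): namely, the instant just before a black pebble is placed or slid onto $v$, or just before a white pebble sitting on $v$ is removed. Let $t(v)$ denote the first such instant. Since placing or removing a white pebble counts as a move, every distinct critical moment is witnessed by at least one distinct move of $\s$.

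First I would re-establish the $k$-pebble space lower bound in the black-white game. Every degree-$k$ vertex $v$ of $\sh{k}$ has $\lvert\pred{v}\rvert = k$, so any critical moment $t(v)$ pins all $k$ pebbles onto $\pred{v}$; the pebbling space is therefore still at least $k$, and in any strategy that uses exactly $k$ pebbles, no other vertex can carry a pebble at $t(v)$. Because a pebble of either colour blocks any path it lies on (\Cref{def:blocking}), it follows that at $t(v_i)$ for a degree-$k$ vertex $v_i$, the portion of the main chain from $v_k$ to $v_n$ that avoids $v_i$ is entirely pebble-free.

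Next I would transport the blocked-path recursion of \Cref{thm:nk-moves-bound} verbatim. After $t(v_i)$ the alternate $v_k$-to-$v_n$ path is unblocked, and before the critical moment of the next degree-$k$ vertex $v_{i'}$ further along the chain can occur, that alternate path must be re-blocked; an induction on the level $l \in [1,k]$ identical in structure to the standard-game proof shows that this re-blocking forces a fresh critical moment for some degree-$(l-1)$ vertex. Since distinct critical moments are witnessed by distinct moves, the same recurrence
\[
  T(l) \;=\; \bigl(\tfrac{n-k}{2k}\bigr)\,T(l-1) \;+\; l\bigl(\tfrac{n-k}{2k}\bigr)
\]
applies and yields $T(k) = \Omega\bigl((\tfrac{n-k}{2k})^k\bigr)$. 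The matching upper bound is immediate, since any standard pebbling strategy is also a valid black-white strategy and \Cref{thm:nk-moves-bound} already exhibits a standard strategy achieving $O\bigl((\tfrac{n-k}{2k})^k\bigr)$ moves.

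The main obstacle is ruling out the possibility that white pebbles---which can be placed speculatively without pebbling their predecessors---let a strategy skip a critical moment or amortise two of them together. I would handle this by exploiting that the $k$-pebble budget is already tight: a white pebble on any vertex consumes one of the $k$ slots, so it strictly reduces the pebbles available elsewhere, and in particular a degree-$k$ critical moment cannot coexist with any pebble activity outside $\pred{v}$. Consequently, critical moments for distinct degree-$k$ vertices must occur at strictly distinct time steps, the critical-moment count is a genuine lower bound on the number of moves, and the recursion above closes the argument.
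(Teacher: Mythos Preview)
Your proposal is correct and takes essentially the same approach as the paper: both arguments hinge on the observation that the black-white rules still force an instant at which every predecessor of a vertex is simultaneously pebbled (either to place a black pebble or to remove a white one), so the blocked-path recursion from \Cref{thm:nk-moves-bound} carries over unchanged and yields the same recurrence. Your ``critical moment'' abstraction is a cleaner packaging of this idea than the paper's more ad hoc case split on whether $v_n$ first receives a white pebble, but the substance is identical.
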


\begin{proof}
Let $\s = \{P_0, \dots, P_a\}$ be a black-white pebbling strategy where $a = \Theta((\frac{n-k}{2k})^k)$. At least $k$ pebbles must be used to pebble nodes with indegree $k$. We proved in Theorem~\ref{thm:nk-moves-bound} that strategies that use only black pebbles  must make $\Theta((\frac{n-k}{2k})^k)$ moves. Therefore, in order to use fewer than $\Theta((\frac{n-k}{2k})^k)$ moves, at least one white pebble must be used. 

Let $\tbw(n, k)$ be the minimum time of pebbling $\shn{k}{n}$ using $k$ black and white pebbles. $\tbw(\frac{n-k}{k} + 1, 1) \geq \frac{n-k}{k} + 1$. Because of the recursive structure of the graph family, we now show that 

\begin{align*}
\tbw(n, l) \geq (\frac{n-k}{2k})\tbw((l-1)(\frac{n-k}{k}) + l - 1, l-1) + l(\frac{n-k}{2k})
\end{align*}

even when using black and white pebbles. Solving for $T(k)$ gives the number of moves to pebble $H_{n, k}$ using the rules of the black-white pebble game. The proof of the theorem then follows directly from the base case and the proof of Theorem~\ref{thm:nk-moves-bound}. 

To show the above, we first consider the case when $v_n$ is pebbled with a white pebble. In this case, the predecessors of $v_n$ must be pebbled with either black or white pebbles. $v_n$ must be repebbled with a black pebble after the predecessors are pebbled, resulting in a strategy that does not use the minimum number of moves. Therefore, $v_n$ is never pebbled with a white pebble. Now consider $\pred{v_n}$. If $v_n$ has $k$ predecessors, suppose the non-source predecessor of $v_n$ is pebbled at $t_1$. Then, the earliest that $v_n$ can be pebbled is at time $t_1+ k - 1$ regardless whether or not black or white pebbles are used. Given that $k$ pebbles must be used to pebble $v_{n-i}$ where $i$ is even and $v_{n-i}$ has $k$ predecessors, $v_{k + (k-1)(\frac{n-k}{k})}$ must be pebbled $\frac{n-k}{2k}$ times. Thus, we have shown that regardless of whether black or white pebbles are used, we reduce to the above recursive relation.
\end{proof}

\subsection{Max Indegree-$2$ Hard to Pebble Graphs}\label{sec:hard-modifications}

If we modify the construction presented in Definition~\ref{def:near-optimal-tradeoff} such that every node of degree $d > 2$ is replaced with a pyramid of height $d$, then we obtain the results we would like for the standard pebble game taking $\Omega(n^k)$ moves for any $n$ and constant $k$. Rather than creating a unique pyramid for each node of degree $d > 2$, we create one height $k$ pyramid $\Pi_h$ and connect it to our construction, described below. From this construction, we obtain Theorem~\ref{thm:almost-maximum-time} as stated in the introduction.

\begin{definition}[Standard Pebbling Construction with Max Indegree-$2$]\label{def:stand-max-deg}

We create the max indegree-$2$ hard to pebble family of graphs using the standard pebble game as follows. Suppose we have a total of $n$ nodes. 

\begin{enumerate}
\item Create a height $k-1$ pyramid and label the roots of pyramids of heights in the range $i \in [1, k-1]$, $r_i$. 
\item Sort the remaining $n - \frac{(k-1)k}{2}$ vertices and create edges $(v_i, v_j)$ where $i < j$ in the sorted order.
\item Create edges $(r_i, v_j)$ for all $i \in [1, l-1]$ for all $1 \leq l \leq k$ and $j = f(l) + i + g + 1$ for all $g \in [\frac{n - \frac{(k-1)k}{2}}{kl}]$ and $f(l) = \frac{(k-1)(k)}{2} + (l-1)(\frac{n-\frac{(k-1)k}{2}}{k})$. 
\item Create edges $(v_i, v_j)$ for all $i = f(l) - 1$ and $j \in \left\{f(l) + gl\right\}$ for all $g \in [0, \frac{n - \frac{(k-1)k}{2}}{kl}]$ where $l \in [1, k-1]$.
\end{enumerate}

The target node is $v_n$. 
\end{definition}

\begin{theorem}\label{thm:b-almost-maximum-time}
There exists a family of graphs with $n$ vertices and maximum indegree $2$ such that $\Omega((\frac{n-k^2}{k^2})^k)$ moves are necessary to pebble any graph with $n$ vertices in the family using $k < \sqrt{\frac{n}{2}}$ pebbles in the standard pebble game.
\end{theorem}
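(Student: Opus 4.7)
The plan is to reduce the analysis of the indegree-2 construction in Definition~\ref{def:stand-max-deg} to that of Theorem~\ref{thm:nk-moves-bound} by combining two facts: a height-$l$ pyramid $\Pi_l$ requires $l$ pebbles on the graph to pebble its apex, and the normal-pebbling-strategy lemma (Lemma~\ref{lem:glt-pyramid}) lets us assume without loss of generality that each pyramid is pebbled ``in one shot.'' Replacing a former degree-$l$ vertex by the apex of a height-$l$ pyramid whose leaves are its original predecessors therefore preserves the structural property driving the lower bound: the pyramid apex still forces $l$ pebbles to occupy the graph simultaneously at the moment it is pebbled.

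First, I would argue that $\p(\sh{k}) \geq k$ for the new construction, which follows from the height-$k$ pyramid used to simulate the degree-$k$ apexes together with Lemma~\ref{lem:glt-pyramid}. Second, I would apply Lemma~\ref{lem:glt-pyramid} to reduce to normal pebbling strategies: since no target lies inside any pyramid and each pyramid base vertex can be arranged (with at most trivial intermediate vertices) to have a unique predecessor on the spine, any strategy can be converted into a normal one with no increase in pebble count and no decrease in move count for purposes of the lower bound. Under this reduction each pyramid behaves like a black-box high-indegree node, and I can replay the blocked-paths argument of Theorem~\ref{thm:nk-moves-bound} essentially verbatim: immediately before a degree-$l$ apex is pebbled, $l$ pebbles must block $l$ internally vertex-disjoint source-to-target paths, so clearing any one forces a re-pebbling of the previous apex along the spine.

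The recursion then closes as in Theorem~\ref{thm:nk-moves-bound}, but with the available ``spine length'' reduced by the pyramid overhead of $\Theta(k^2)$: writing $T(l)$ for the minimum number of moves needed to pebble the level-$l$ apex, the spacing in Definition~\ref{def:stand-max-deg} forces $T(l) \geq \frac{n-k^2}{k^2}\cdot T(l-1)$ up to lower-order terms, with base case $T(1) = \Theta\!\bigl(\frac{n-k^2}{k^2}\bigr)$. Unrolling yields $T(k) = \Omega\!\bigl((\frac{n-k^2}{k^2})^k\bigr)$, which is the claimed bound, valid in the regime $k < \sqrt{n/2}$ where the base of the exponent exceeds $1$.

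The main technical obstacle is the reduction to normal strategies: a priori, a clever pebbler might interleave partial pyramid pebblings with spine manipulations in order to ``save'' re-pebblings, making the naive reduction from the indegree-$k$ case unsound. This scenario is exactly what Lemma~\ref{lem:glt-pyramid} rules out, provided each pyramid base has at most one predecessor outside its own pyramid. Verifying this structural side condition for every pyramid used in Definition~\ref{def:stand-max-deg} (and inserting trivial degree-$1$ intermediate vertices where a base would otherwise have two outside predecessors) is the one piece of bookkeeping that requires care; once it is in place, the remainder of the argument is a direct translation of Theorem~\ref{thm:nk-moves-bound} with $n$ replaced by $n-\Theta(k^2)$, and Theorem~\ref{thm:almost-maximum-time} then follows in the regime of constant $k$ as the advertised corollary.
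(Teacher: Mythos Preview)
Your high-level approach---invoke normality (Lemma~\ref{lem:glt-pyramid}) so that pyramids behave as pebble-consuming black boxes, then replay the recursion of Theorem~\ref{thm:nk-moves-bound} with the spine shortened by a $\Theta(k^2)$ overhead---is exactly what the paper does. In that sense the proposal is correct and aligned with the paper.

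There is, however, a misreading of Definition~\ref{def:stand-max-deg} that you should fix before the argument goes through as written. You describe the transformation as ``replacing a former degree-$l$ vertex by the apex of a height-$l$ pyramid whose leaves are its original predecessors.'' That is not what the construction does. Definition~\ref{def:stand-max-deg} builds a \emph{single shared} height-$(k-1)$ pyramid, labels the apexes of its nested subpyramids $r_1,\dots,r_{k-1}$, and then replaces each former degree-$d$ spine vertex by a \emph{path} of spine vertices, the $i$-th of which receives an edge from $r_i$. The original source predecessors $v_1,\dots,v_{k-1}$ disappear entirely; the pyramid is detached from the spine except through the $r_i$ edges. This is precisely why the node overhead is only $\Theta(k^2)$ rather than $\Theta(nk)$, which your own accounting relies on.

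Consequently, the blocked-paths picture you sketch (``$l$ pebbles block $l$ internally vertex-disjoint source-to-target paths through the apex'') does not literally apply: there is no single indegree-$l$ apex on the spine any more. The paper's substitute is the observation that to traverse the length-$(d-1)$ path simulating a former degree-$d$ node, each $r_i$ must be pebbled in turn, and by normality pebbling $r_i$ either ties up $i$ pebbles off the spine or requires a pebble already parked on $r_i$; either way, by the time the last path vertex is pebbled, $d-1$ pebbles are committed to the pyramid and one to the spine, reproducing the ``all $k$ pebbles in use'' state that drives the recursion. Once you replace your apex-based argument with this path-based one, the recursion $T(l)\ge \frac{n-\Theta(k^2)}{kl}\,T(l-1)$ (note the denominator is $kl$, not $k^2$, because the simulating path at level $l$ has length $l$) unrolls to the stated $\Omega\bigl((\tfrac{n-k^2}{k^2})^k\bigr)$ after bounding $k!\le k^k$.
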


\begin{proof}
By Definition~\ref{def:stand-max-deg}, creating the pyramid of height $k-1$ requires $O(k^2)$ nodes. It suffices to only create one pyramid since a pyramid of any height less than $k-1$ is contained in a pyramid of height $k-1$. Furthermore, considering that the same $k$ nodes are used for all nodes in different columns in Theorems~\ref{thm:nk-moves-bound} and~\ref{thm:no-moves-black-white}, it does not matter that the different pyramids share nodes. For every node of degree $d \geq 3$ in the construction defined by Definition~\ref{def:near-optimal-tradeoff}, we replace the node by a path with nodes connected to pyramids of heights, $h \in [1, d-1]$. By normality, for any pyramid of height $h$, we must remove $h$ pebbles from the graph to pebble it in the standard pebble game. For paths that connect to pyramids of height $h \in [1, k-1]$, there are only two ways to pebble the pyramid of height $i$. Either a pebble remains on the apex of the pyramid of height $i$ or $i$ pebbles are removed from the graph to pebble the pyramid using a normal strategy. To pebble the path connected to pyramids of heights $h \in [1, k-1]$ requires a total of $k-1$ pebbles either remaining on the pyramids or removed from the graph to be used to pebble the apex of each of the pyramids. 

Suppose $T(l)$ is the time of pebbling the last node in the topological order of layer $l$ with base case $T(1) =  \frac{n - \frac{(k-1)k}{2}}{k}$. Then, we obtain the following recursive equation from our construction in Definition~\ref{def:stand-max-deg}:

\begin{align*}
T(l) = \frac{n - \frac{(k-1)k}{2}}{kl}T(l-1) + (l-1)\Theta(\frac{n - k^2}{kl}) = \Omega((\frac{n-k^2}{k^2})^k)
\end{align*}
\end{proof}

The proof of number of standard pebbling moves necessary in pebbling the family of graphs defined by Definition~\ref{def:stand-max-deg} is $\Omega(n^k)$ when $k$ is constant, proving part of Theorem~\ref{thm:almost-maximum-time}.

To prove the result for the black-white pebble game, we use a result from~\cite{L79} and~\cite{Nor15} that gives a precise space cost for a complete binary search tree. 

\begin{theorem}[Black-White Pyramid Pebble Price~\cite{L79, Nor15}]\label{thm:bw-pyramid-pebble-price}
For a complete binary tree $T_h$ of height $h \geq 1$ it holds that the black-white persistent pebbling cost is $\floor{\frac{h + 3}{2}}$. The \emph{persistent} pebbling cost is defined as the cost of pebbling the root of $T_h$ with a black pebble that remains on the root. 
\end{theorem}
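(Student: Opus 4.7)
The plan is to establish the claim via matching upper and lower bounds on the persistent black-white pebbling price of $T_h$, following the strategy introduced by Loui and refined in Nordström's survey. Let $P(h)$ denote the claimed quantity $\floor{(h+3)/2}$. I would first handle the two base cases $h=1$ and $h=2$ by exhibiting explicit strategies that use $2$ pebbles each (for $h=1$, place black on one leaf, then on the other, slide to the root, and remove the spare pebble). The general upper bound is then proven by induction, showing $P(h) \leq P(h-2) + 1$, which since $P$ jumps by $1$ every two levels matches the formula.

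For the inductive step of the upper bound, I would construct a recursive strategy that exploits the reversibility of white pebbles. Given the root $r$ with children $v_L, v_R$ and grandchildren $u_{LL}, u_{LR}, u_{RL}, u_{RR}$, the strategy places a single white pebble on a carefully chosen high-level vertex as a ``placeholder,'' invokes the inductive strategy on one height-$(h-2)$ grand-subtree using $P(h-2)$ pebbles, slides a black pebble up through the intermediate level while the white pebble continues to hold its place, and then discharges the white pebble by pebbling its predecessors --- which is itself a reduced instance of the problem and so can be done without exceeding the budget $P(h-2)+1$. The accounting shows that at the peak moment exactly $P(h-2)+1$ pebbles are on $T_h$, giving the claimed upper bound.

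For the lower bound, I would consider an arbitrary valid persistent BW pebbling $\s = [P_0, \dots, P_\tau]$ and examine the critical moment when black is first placed on $r$. At this moment both children must carry pebbles, and I would induct on $h$ to show that the subtree which is ``fully justified first'' must have been persistently pebbled via a sub-strategy using at least $P(h-1)$ pebbles, while the other subtree still carries at least one pebble that must be accounted for. Combined with an amortized argument bounding the number of pebbles needed to later discharge every white pebble on the tree (formalized via a potential function counting white-debt), this yields the matching bound $P(h) \geq P(h-2)+1$ and hence $P(h) \geq \floor{(h+3)/2}$ by induction.

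The main obstacle is the lower bound: because white pebbles are placed for free, the simple cut-based arguments that work in the black-only setting break down, and one must carefully track how the ``debt'' incurred by each white placement is eventually repaid through its predecessors. Loui's technique resolves this via an induction that chooses, at each level of the tree, which of the two subtrees bears the heavier pebble load at the critical moment, and then applies the inductive hypothesis separately while using the pebble on the opposite child to provide the extra $+1$ required. This bookkeeping is where the subtlety of the black-white game concentrates, and it is what makes the constant $+1$-per-two-levels improvement over the black-only price tight.
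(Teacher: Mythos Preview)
The paper does not prove this theorem at all: it is quoted verbatim from the literature (Loui~\cite{L79} and Nordstr\"om's survey~\cite{Nor15}) and used as a black box to derive Corollary~\ref{cor:white-pebble-root} and Theorem~\ref{thm:bw-almost-maximum-time}. There is therefore no ``paper's own proof'' to compare your proposal against.

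That said, your sketch is broadly in the spirit of the original arguments in~\cite{L79,Nor15}: an inductive upper bound that gains one pebble every two levels by using a white ``placeholder'' on one side while black-pebbling the other, and a lower bound that tracks the moment the root is first black-pebbled and charges the remaining white obligations. Two cautions. First, your base case for $h=2$ deserves more care: with the sliding rules as stated in this paper, getting a persistent black pebble on the root of a height-$2$ tree with only two pebbles is not obviously achievable, and indeed the paper's own Corollary~\ref{cor:white-pebble-root} proof implicitly relies on a $\lceil (h+3)/2\rceil$ lower bound from~\cite{L79} for even $h$, suggesting some subtlety in the parity. You should verify exactly which height convention and which value (floor versus ceiling) the source you are following actually establishes. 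Second, your lower-bound paragraph is presently a narrative rather than an argument: the ``potential function counting white-debt'' and the choice of which subtree is ``fully justified first'' are exactly the places where the black-white lower bound is delicate, and you have not said what the potential is or why the charges do not double-count. If you intend to supply a self-contained proof rather than cite~\cite{L79,Nor15}, those details must be made explicit.
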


We state as an immediate corollary of Theorem~\ref{thm:bw-pyramid-pebble-price}:

\begin{corollary}\label{cor:white-pebble-root}
For a complete binary tree $T_h$ of height $h \geq 1$ where $h \mod 2 = 1$, the cost of pebbling the root of $T_h$ in the first step using a white pebble is $\floor{\frac{h + 3}{2}}$. 
\end{corollary}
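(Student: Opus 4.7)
The plan is to derive the corollary in the manner the paper suggests, as an essentially immediate consequence of Theorem~\ref{thm:bw-pyramid-pebble-price}. The lower bound is automatic: any black-white persistent pebbling of $T_h$ requires at least $\floor{(h+3)/2}$ pebbles by Theorem~\ref{thm:bw-pyramid-pebble-price}, regardless of what the first move happens to be. What remains is the matching upper bound, namely to exhibit a strategy whose first move is ``place a white pebble on the root of $T_h$'' and whose peak pebble count is exactly $\floor{(h+3)/2}$.

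I would construct such a strategy inductively on the odd integer $h$. The base case $h=1$ is handled by the following explicit seven-move sequence, which uses peak space $2=\floor{(1+3)/2}$: place a white pebble on the root $R$; place a black pebble on one child $c_1$; slide the white pebble from $R$ to the sibling $c_2$ (legal since $c_1$ is pebbled and $c_2$ is empty); remove the now-white source $c_2$; place a black pebble on $c_2$; slide the black pebble on $c_1$ up to $R$; and finally remove the residual black pebble on $c_2$. For the inductive step at odd $h>1$, the corresponding strategy begins by placing a white pebble on the root, then invokes the optimal strategy of Theorem~\ref{thm:bw-pyramid-pebble-price} on the left height-$(h-1)$ subtree, which costs $\floor{((h-1)+3)/2}=(h+1)/2$ pebbles on top of the white apex and thus peaks at $1+(h+1)/2=(h+3)/2$. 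As soon as the left child of the root carries a black pebble, slide the white pebble from the root onto the right child, and then complete the pebbling by recursing on the right subtree along the same template.

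The main obstacle I expect is closing the recursion cleanly, since the right subtree has even height $h-1$ and so the inductive hypothesis does not apply to it verbatim. The fix is to perform one additional sibling slide, driving the white pebble one further level down from the right child onto the apex of a height-$(h-2)$ sub-subtree, using the black pebble just placed on the left child of the root in exactly the way the sibling slide is used in the base case. With this extra step in place, the inductive hypothesis applies directly at height $h-2$, and the arithmetic identity $\floor{(h+3)/2}=1+\floor{((h-1)+3)/2}$ confirms that the one extra pebble accorded to odd-height trees is precisely absorbed by the white pebble migrating down the tree, so the peak $\floor{(h+3)/2}$ is preserved throughout.
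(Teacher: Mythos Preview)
Your approach and the paper's diverge in which direction they treat as the substantive one, and your inductive step has a concrete gap.

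The paper argues only the \emph{lower} bound, by contradiction: if a white-first pebbling of $T_h$ used fewer than $\floor{(h+3)/2}$ pebbles, one could persistently pebble $T_{h'}$ with $h'=h+1$ by black-persistently pebbling one subtree and white-first pebbling the other, achieving cost $\floor{(h'+3)/2}$; since $h'$ is even this contradicts the sharper bound $\ceil{(h'+3)/2}=\floor{(h'+3)/2}+1$ from~\cite{L79}. The upper bound is not argued separately. Your proposal inverts this: you take the lower bound as immediate from Theorem~\ref{thm:bw-pyramid-pebble-price} and spend the effort on an explicit upper-bound strategy. Note, however, that Theorem~\ref{thm:bw-pyramid-pebble-price} concerns \emph{persistent} pebblings, whereas the paper's reading of ``pebbling the root with a white pebble first'' (and how the corollary is invoked in Theorem~\ref{thm:bw-almost-maximum-time}) is that one places a white pebble on the root and later legally discharges it --- not a persistent pebbling --- so under that reading your lower bound is not automatic and the paper's embedding argument is doing real work.

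Separately, your recursion does not close as written. After you hold a black pebble on the left child $L$ of the root and a white pebble on the right child $R'$, you propose to slide the white pebble from $R'$ one level down ``using the black pebble just placed on the left child of the root.'' But by rule~5 of the black-white game, sliding white from $R'$ down to a child $A$ of $R'$ requires $\pred{R'}\setminus\{A\}$ --- i.e.\ the \emph{other} child $B$ of $R'$ --- to already carry a pebble. The pebble on $L$ is irrelevant here, since $L$ is not a predecessor of $R'$. At that moment neither child of $R'$ is pebbled, so the slide is illegal; and persistently pebbling $B$ first costs $\floor{((h-2)+3)/2}=(h+1)/2$ on top of the two pebbles already sitting on $L$ and $R'$, giving peak $(h+5)/2>(h+3)/2$.
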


\begin{proof}
Suppose for the sake of contradiction that the cost of pebbling the root of $T_h$ in the first step using a white pebble is less than $\floor{\frac{h+3}{2}}$, then according to the algorithm presented in~\cite{L79}, the cost of persistently pebbling a binary tree of height $h'$ where $h' = h + 1$ is equal to $\floor{\frac{h' + 3}{2}}$ since the original strategy of pebbling one predecessor with a black pebble persistently and the other one with a white pebble results in the persistent cost to be $\floor{\frac{h'+3}{2}}$. However, this contradicts with the stated lower bound of $\ceil{\frac{h'+3}{2}} = \floor{\frac{h'+3}{2}} + 1$~\cite{L79} since $h' \mod 2 = 0$.  
\end{proof}

Using Theorem~\ref{thm:bw-pyramid-pebble-price} and Corollary~\ref{cor:white-pebble-root}, we can define a class of graphs very similar to the class of graphs defined by Definition~\ref{def:stand-max-deg}. 

\begin{definition}[Black-White Pebbling Construction with Max Indegree-$2$]\label{def:bw-max-time}
We create the max indegree-$2$ hard to pebble family of graphs using the black-white pebble game as follows. Suppose we have a total of $n$ nodes. 

\begin{enumerate}
\item Create a height $H = 2(k-1)-3 = 2k - 5$ complete binary tree and label the roots of trees of heights $2i - 3$ for $i \in [1, k-1]$, $r_i$. 
\item Sort the remaining $n - 2^{2k-5}$ vertices and create edges $(v_i, v_j)$ where $i < j$ in the sorted order.
\item Create edges $(r_i, v_j)$ for all $i \in [1, l-1]$ for all $1 \leq l \leq k$ and $j = f(l) + i + g + 1$ for all $g \in [\frac{n - 2^{2k-5}}{kl}]$ and $f(l) = 2^{2k-5} + (l-1)(\frac{n-2^{2k-5}}{k})$. 
\item Create edges $(v_i, v_j)$ for all $i = f(l) - 1$ and $j \in \left\{f(l) + gl\right\}$ for all $g \in [0, \frac{n - 2^{2k-5}}{kl}]$ where $l \in [1, k-1]$.
\end{enumerate}

The target node is $v_n$. 
\end{definition}

Now we can prove our main theorem for black-white pebbling of our modified graph class as defined in Definition~\ref{def:bw-max-time}.

\begin{theorem}\label{thm:bw-almost-maximum-time}
There exists a family of graphs with $n$ vertices and maximum indegree $2$ such that $\Omega((\frac{n-2^{(2k - 5)}}{k^2})^k)$ moves are necessary to pebble any graph with $n$ vertices in the family using $k = o(\log{n})$ pebbles in the black-white pebble game.
\end{theorem}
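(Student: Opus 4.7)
The plan is to combine the substructure-replacement trick of Theorem~\ref{thm:b-almost-maximum-time} with the black-white argument of Theorem~\ref{thm:no-moves-black-white}, using complete binary trees in place of pyramids. By Theorem~\ref{thm:bw-pyramid-pebble-price}, a complete binary tree of height $2i-3$ has persistent black-white pebbling cost $\floor{(2i-3+3)/2}=i$, and by Corollary~\ref{cor:white-pebble-root} (applicable since $2i-3$ is odd) pebbling its root with a white pebble also costs $i$. Thus the binary tree rooted at $r_i$ plays exactly the role that the height-$i$ pyramid played in the standard game: whichever color of pebble the player uses at $r_i$, $i$ pebbles must be budgeted to visit $r_i$.

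First, I would establish a non-shortcut lemma: for any black-white pebbling strategy $\s$ and any time $t$ at which $r_i$ carries a pebble that is later consumed (by a placement requiring $r_i$ as a predecessor, or by a white-removal move), $\s$ must spend $i$ pebbles on the binary subtree during some interval surrounding $t$. This is immediate from Theorem~\ref{thm:bw-pyramid-pebble-price} for black pebbles on $r_i$ and from Corollary~\ref{cor:white-pebble-root} for white pebbles. Consequently, between two successive pebbling events at $r_i$ the subtree must be revisited at full cost.

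Second, I would rerun the recursion of Theorem~\ref{thm:no-moves-black-white} on top of this invariant. Let $\tbw(n,l)$ denote the minimum black-white time to pebble $\shn{l}{n}$ of Definition~\ref{def:bw-max-time}. The target $v_n$ must ultimately be black-pebbled, which forces its $l$ predecessors (the path predecessor together with the roots $r_1,\dots,r_{l-1}$) to be simultaneously available. By the non-shortcut lemma, each degree-$l$ vertex along the $l$-th level forces a fresh $l$-pebble commitment, and there are $\Theta((n-2^{2k-5})/(kl))$ such vertices per level, yielding
\[
\tbw(n,l)\ge \tfrac{n-2^{2k-5}}{kl}\,\tbw\!\bigl((l-1)\tfrac{n-2^{2k-5}}{k}+l-1,\,l-1\bigr)+(l-1)\,\Theta\!\bigl(\tfrac{n-k^2}{kl}\bigr).
\]
Unrolling this recurrence with base case $\tbw(\Theta(n/k),1)=\Theta(n/k)$ gives the claimed $\Omega\bigl((\tfrac{n-2^{2k-5}}{k^2})^k\bigr)$ bound; the hypothesis $k=o(\log n)$ is used only to ensure $2^{2k-5}=o(n)$, so that $n-2^{2k-5}=\Theta(n)$ and the construction fits within $n$ vertices.

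The main obstacle will be the non-shortcut lemma: we must verify that white pebbles placed on internal nodes of the binary trees, or on intermediate path vertices $v_i$, cannot combine to reduce either the per-visit cost to $r_i$ below $i$ or the number of visits below the combinatorial count forced in Theorem~\ref{thm:no-moves-black-white}. The first half follows directly from Theorem~\ref{thm:bw-pyramid-pebble-price} and Corollary~\ref{cor:white-pebble-root}, which are space lower bounds robust to arbitrary placement of white pebbles inside the tree. The second half requires the analog of the blocked-path argument: every white pebble on a non-source path vertex must eventually be discharged by a placement satisfying the predecessor condition, and for counting purposes that discharge behaves like a black placement, so the same blocked-path count applies. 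Once these two observations are in place, the recurrence goes through essentially verbatim from the standard-pebble proof.
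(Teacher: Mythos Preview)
Your proposal is correct and follows essentially the same approach as the paper: replace pyramids by complete binary trees of odd height, invoke Theorem~\ref{thm:bw-pyramid-pebble-price} and Corollary~\ref{cor:white-pebble-root} to show that visiting $r_i$ costs $i$ pebbles regardless of pebble color, and then rerun the level-by-level recursion from Theorems~\ref{thm:no-moves-black-white} and~\ref{thm:b-almost-maximum-time} to obtain $\Omega\bigl((\tfrac{n-2^{2k-5}}{k^2})^k\bigr)$. Your write-up is in fact more careful than the paper's terse proof in isolating the two potential obstacles (white pebbles inside the trees, white pebbles on path vertices) and explaining why each is handled; the paper simply asserts the recursion. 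One cosmetic slip: the additive term in your displayed recurrence should read $\Theta\bigl(\tfrac{n-2^{2k-5}}{kl}\bigr)$ rather than $\Theta\bigl(\tfrac{n-k^2}{kl}\bigr)$.
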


\begin{proof}
We create one complete binary tree of height $2k - 5$. For every node of degree $d > 2$, we create a path where each node in the path is connected to roots of trees of heights $2i - 3$ for all $i \in [1, d-1]$. As in the proof for Theorem~\ref{thm:b-almost-maximum-time}, $d-1$ pebbles in total must either be on the roots of the trees or removed from the graph to pebble the roots of these trees regardless of whether black or white pebbles are used (by Theorem~\ref{thm:bw-pyramid-pebble-price} and Corollary~\ref{cor:white-pebble-root}).

This will ensure that all $d$ pebbles are used to pebble the binary search trees and the last node of the path. Therefore, we reach a similar recursive equation as in Theorem~\ref{thm:b-almost-maximum-time}, using $T(l)$ as the time cost of pebbling level $l$ with base case $T(1) = \frac{n-2^{2k-5}}{k}$:

\begin{align*}
T(l) = \frac{n - 2^{2k-5}}{kl}T(l-1) + (l-1)\Theta(\frac{n -2^{2k-5}}{kl}) = \Omega((\frac{n-2^{2k-5}}{k^2})^k)
\end{align*}
\end{proof}

The proof of number of black-white pebbling moves necessary in pebbling the family of graphs defined by Definition~\ref{def:bw-max-time} is $\Omega(n^k)$ when $k$ is constant, concluding the proof of Theorem~\ref{thm:almost-maximum-time}.

\section{Open Problems}
\label{sec:open-problems}
There are a number of open questions that naturally follow the content of this paper. 

The first obvious open question is whether the techniques introduced in this paper can be tweaked to allow for a PSPACE-hardness of approximation to an $n^{1-\epsilon}$ additive factor for any $\epsilon > 0$. We note that the trivial method of attempting to reduce the size of the subgraph gadgets used in the variables (i.e. use a different construction than the road graph such that less than $K^3$ nodes are used) is not sufficient since the number of nodes in the graph is still $\Theta(K^3(u^3 + c))$. This is not to say that such an approach is not possible; simply that more changes need to be made to all of the other gadgets. The next logical step is to determine whether $\p(\sg{2})$ can be approximated to a constant $2$ factor multiplicative approximation.

Another open question is whether the techniques introduced in this paper can be applied to show hardness of approximation results for other pebble games such as the black-white or reversible pebble games. The main open question in the topic of hardness of approximation of pebble games is whether these pebble games can be approximated to any multiplicative factors smaller than $n/\log{n}$ or whether the games are PSPACE-hard to approximate to any constant factor, perhaps even logarithmic factors. 

With regard to hard to pebble graphs, we wonder if our graph family could be improved to show $\Omega(n^k)$ for any $0< k \leq n/\log{n}$. This would be interesting because to the best of the authors' knowledge we do not yet know of any graph families that exhibit sharp (asymptotically tight) time-space trade-offs for this entire range of pebble number.

We also reiterate the persistent black-white pebbling cost of a pyramid (an open problem presented in~\cite{Nor15}) is an interesting open problem with respective to our results because it would broaden the range of allowed $k$ in Theorem~\ref{thm:bw-almost-maximum-time}. 

\clearpage
\bibliography{ref}



\end{document}